\newtheorem{theorem}{Theorem}[section]
\newtheorem{corollary}[theorem]{Corollary}
\newtheorem{definition}[theorem]{Definition}
\newtheorem{example}[theorem]{Example}
\newtheorem{lemma}[theorem]{Lemma}
\newtheorem{notation}[theorem]{Notation}
\newtheorem{remark}[theorem]{Remark}
\newtheorem{terminology}[theorem]{Terminology}
\newenvironment{proof}[1][Proof]{\textbf{#1.} }{\ \rule{0.5em}{0.5em}}
\def \L{\Lambda}
\def \<{\langle}
\def \>{\rangle}
\def \as{\mathsf{a}}
\def \bs{\mathsf{b}}
\def \A{{\cal A}}
\def \Bs{\mathsf{B}}
\def \C{\mathbb C}
\def \Es{\mathsf{E}}
\def \R{\mathbb R}
\def \Ra{\mathcal R}
\def \G{{\cal G}}
\def \P{{\cal P}}
\def \cg{{\cal C}}
\def \a{{\bf a}}
\def \w{\omega}
\def \k{{\bf k}}
\def \kf{\frak k}
\def \b{{\bf b}}
\def \ua{\mathsf{u}}
\def \va{\mathsf{v}}
\def \xb{{\bf x}}
\def \p{\partial}
\def \beq{\begin{equation}}
\def \eeq{\end{equation}}
\def \n{\nabla}
\def \eref{\eqref}
\def \s{\sigma}
\def \lrc{\lrcorner\,}
\def \({\Big(}
\def \){\Big)}
\numberwithin{equation}{section}
\def \Ws{\mathsf{W}}
\begin{document}

\title{Equivalence of helicity and Euclidean self-duality for gauge fields
\footnote{\emph{Key words and phrases.} 
Helicity, self duality, Yang-Mills, non-abelian gauge theories \newline
 \indent 
\emph{2010 Mathematics Subject Classification.} 
 Primary; 81T13, 78A25,  Secondary; 78A40.
 }    }     
\author{Leonard Gross \\
Department of Mathematics\\
Cornell University\\
Ithaca, NY 14853-4201\\
{\tt gross@math.cornell.edu}}

\maketitle

\begin{abstract}  
In the canonical formalism for the free electromagnetic field  a solution to Maxwell's
equations is customarily identified with  its initial gauge potential (in Coulomb gauge) and
initial electric field, which together determine a point in phase space. The  solutions
to Maxwell's equations, all of whose plane waves in their plane wave expansions have
positive helicity, thereby determine a subspace of phase  space. We will show that this
subspace consists of initial gauge potentials which lie in the positive spectral subspace
of the operator curl together with initial electric fields conjugate to such  potentials.
Similarly for negative helicity. Helicity is thereby characterized by the spectral subspaces
of curl in configuration space. A gauge potential on three-space has 
a  Poisson extension to a four dimensional Euclidean half space, defined as the solution
to the Maxwell-Poisson equation whose initial data is the given gauge potential.
We will  show that the extension is anti-self dual if and only if the gauge potential lies
in the positive spectral subspace of curl. Similarly for self dual extension and  negative
spectral subspace.  Helicity is thereby characterized  for a normalizable electromagnetic
field by the canonical formalism and (anti-)self duality.

For a non-abelian gauge field on Minkowski space a plane wave expansion is not
gauge invariant. Nor is the notion of positive spectral subspace of curl.          
But if one replaces the  Maxwell-Poisson equation by the Yang-Mills-Poisson equation
then  (anti-)self duality on the Euclidean side induces a decomposition of (now non-linear)
configuration space similar to that in the electromagnetic case.   
The strong analogy  suggests  a gauge invariant definition of helicity for
non-abelian gauge fields. We will provide further support for this view. 
 \end{abstract}

\tableofcontents

\section{Introduction}     \label{secint}

A workable notion of helicity for quantized Yang-Mills fields has been  sought in many works in recent years.
For a broad perspective on this work see  the extensive review 
 by Leader and Lorc\'e   \cite{LL2014} .     
       Any notion of helicity for Yang-Mills fields must  be gauge invariant, Lorentz invariant and
 reduce to the standard notion of helicity for electromagnetic fields. 
        This paper proposes a notion of helicity for non-abelian gauge fields based on a  
  Euclidean   interpretation  of helicity in electromagnetism.
  For the latter, it will first be shown that the Poisson extension of the initial gauge potential 
  of an electromagnetic field
  to a half space  in Euclidean space-time sets up an equivalence  between helicity, 
  which is a Minkowski space notion, and (anti-)self duality,
 which is a Euclidean space notion.
 
  For classical electromagnetic fields helicity is customarily defined in terms of plane waves,   
  which do not mesh    
  well with a non-linear theory such as Yang-Mills, \cite{Act1}.  
  To make   the jump from electromagnetism to Yang-Mills fields, first classically (in this paper) 
  and then quantum mechanically (in \cite{G76}),
  we are going to describe helicity in the electromagnetic case in a manner that allows
   a natural    extension
  to the Yang-Mills  case.  
  To begin, we will first describe the precise configuration space for the free
   electromagnetic field that reflects 
  the unique Lorentz invariant norm of Bargmann and Wigner, \cite{BW}. 
  Second, given a real valued gauge 
  potential  $A(x)$ on $\R^3$ we will show that its Poisson 
  extension to a half
  space in $\R^4$ not only captures  the Lorentz invariant norm of Bargmann and Wigner, but is also 
  self-dual
  if and only if $A$ is the initial data in Minkowski space of some solution to Maxwell's
   equations of negative helicity. Similarly for
  anti-self dual and positive helicity.
 
              To be more explicit, 
  suppose that   $A(x):=\sum_{j=1}^3 A_j(x) dx^j$  is a real valued gauge potential on $\R^3$
  and that, for  each $s \ge 0$,    $\as(x,s) = \sum_{j=1}^3 \as_j(x,s) dx^j$  is another
 gauge potential on $\R^3$  
 such that the function   
 $\as$ satisfies the Maxwell-Poisson equation 
 with initial value $A$:
 \begin{align}
 \as''(s)  = curl^2\ \as(s),\ \ \ \  \as(0) = A.    \label{IMP1}
 \end{align}
  We may  regard $\as$ as a 1-form  in temporal gauge on the half-space
    $\R^4_+ := \R^3 \times [0,\infty)$.     Its four dimensional
        curvature $F$ is  then given by $F = ds\wedge \as' + d\as$,
         where $d$ denotes the three dimensional exterior derivative. Equation \eref{IMP1} is the Euler
         equation for minimization of $\int_{\R^4_+} |F(x,s)|^2 d^3x\, ds$,  subject to the 
         initial condition $\as(0) = A$.       
 It happens that this minimum  is the square of the unique Lorentz compatible 
  norm on configuration space. 
We will show  that   $A$ is the initial potential of some
solution to Maxwell's equations with only 
 positive helicity plane waves in its plane wave decomposition
if and only if  $F$ is anti-self-dual. Similarly for negative helicity  and self-duality.  
 The Maxwell-Poisson equation \eref{IMP1} thereby sets up a  
 correspondence
  between helicity on the 
 Minkowski  side and (anti-)self-duality on the Euclidean side. The result is a decomposition of the 
 electromagnetic configuration space $\cg$ into two orthogonal subspaces $\cg_\pm$, 
 determined by helicity, or  equivalently, by  (anti-)self duality.

  For a non-abelian gauge field the corresponding Poisson-like equation is the Yang-Mills-Poisson equation.
  It  is a nonlinear, degenerate elliptic differential equation    
  for which existence and uniqueness of solutions has not yet been proven. We are going to assume both,
  however, 
 and show how the Yang-Mills-Poisson equation  sets up a similar decomposition
 of the Yang-Mills configuration space $\cg$, which is no longer a linear space, 
  into two submanifolds $\cg_\pm$ corresponding 
  to anti-self dual and self dual solutions.  
 The similarity of this procedure to the electromagnetic case 
 suggests that the submanifolds $\cg_\pm$
 are the ``correct'' non-abelian analogs of the classical helicity subspaces for Maxwell's theory. 
But we will give more 
support to this interpretation in 
Section \ref{secdecomp}  by  showing that this decomposition   exhibits two additional properties
which are definitive in the electromagnetic case.

The representation of a gauge potential as the initial data of a Poisson-like equation automatically
induces a Riemannian metric on the set of potentials (modulo gauge transformations), whose Poisson
 action $\int_{\R^4_+} |F(x,s)|^2 d^3x ds$ is finite. This in turn gives a quantitative meaning
  to phase space $T^*(\cg)$  for both  electromagnetic and Yang-Mills fields. 
   It turns out, remarkably,
   that the  norm defined  on the electromagnetic  phase space in this way is exactly the unique
   Lorentz invariant norm first discovered by Bargmann and Wigner \cite{BW} in their fundamental work
   on unitary representations of the  inhomogeneous Lorentz group. 
  The norm induced on configuration space and phase space by our procedure 
  is therefore   forced on us if we wish to have a Lorentz invariant theory.      
   We will show this in the course
   of establishing the equivalence of helicity with (anti-)self duality for electromagnetism.
   In the non-abelian case
   the Riemannian metric induced from the Euclidean side by the same procedure is presumably
   Lorentz invariant in some appropriate sense. But this has not been explored.
   
   In accordance with the canonical formalism (as in, say,  \cite[Chapter 14]{BD}) the electric
   field $E$ at time zero is canonically conjugate  to $A$ and therefore 
   the pair $\{A,E\}$   
   defines a point in phase space. 
   We will show that if $A$ is in $\cg_+$ and $E$ is in $T_A^*(\cg_+)$
   then the unique solution to  Maxwell's equations with initial data $A,E$  contains only positive 
   helicity plane waves in its plane wave decomposition (and conversely). 
   Similarly $T^*(\cg_-)$ corresponds to negative    
   helicity plane waves. 
   In this way  Euclidean (anti-)self duality provides  a completely geometric characterization of helicity
   in electromagnetic theory

   We will  reinforce this geometric interpretation   of helicity  
   further  by starting over again  in \cite{G76} 
   with the quantized theories, each 
   of which    has      
 a generally accepted  ``forward component of angular momentum'' operator 
  used for  defining
  helicity in their respective quantum field theories.   
  In the electromagnetic  case we will see that  the operator simply ``quantizes'' the decomposition $\cg_\pm$. 
 The Schr\"odinger representation of the quantized electromagnetic field will be instrumental
  in implementing this view.
            In the non-abelian  case,  where configuration space is not a linear space, 
 we will   see that  the apparent   ``forward component of angular momentum'' operator similarly quantizes 
 the two submanifolds $\cg_\pm$,  thereby reinforcing   
again   the link between helicity and Euclidean (anti-)self-duality for non-abelian gauge fields. 
 The Schr\"odinger representation of the quantized Yang-Mills field will again be indispensable 
 in implementing this view, and for this, the Riemannian metric introduced in this paper will
 enable us to use the gradient operator on functions over configuration space instead
  of momentum space annihilation operators, which have no gauge invariant meaning
   for the intermediate  non-abelian fields.

  There are many unsettled  purely mathematical issues  in the body of this paper
   that we will ignore. They are in the nature of existence and uniqueness theorems.
   In the last section we will  discuss   
  the open problems raised by these structures.

\section{Plane waves, helicity, configuration space in electromagnetism}  \label{secpw}
\subsection{Review of helicity for plane waves}     \label{sechpw}

This subsection is a review of  
 circularly polarized light in a form that will be useful for our purposes.
 Nothing in this subsection should be construed  to be less than 165 years old.
 See for example Born and Wolfe \cite{BoW}  or Jackson \cite{Jac}. 
 
\begin{notation} {\rm For $0 \ne \k \in \R^3$ define
\beq
\k^\perp = \{ u \in \C^3: u\cdot \k =0 \}.                      \label{hel33}
\eeq
 $\k^\perp$ is a two dimensional complex vector space closed under complex conjugation $u\mapsto \bar u.$
  Its real part
$\k^\perp_{real}:=\k^\perp \cap \R^3$ is a two dimensional real vector space.
Define
\beq
C_\k u = i\k\times u,\ \ \text{for}\ u \in \C^3.                
 \label{hel35}
\eeq
$C_\k u$ is in $\k^\perp$  for all $u \in \C^3$  because $\k \times u$ is perpendicular to $\k$. 
 The significance of the operator  $C_\k$ is that it implements  curl 
 under Fourier transformation  in accordance with the easily verified  identity
 \beq
 \text{curl} (u e^{i\k\cdot x}) =(C_\k u) e^{i\k\cdot x}\ \ \ 
                                \text{for any vector} \ u \in \C^3 .                \label{hel36}
  \eeq 
}
\end{notation}

Discussions of circularly polarized light and, more generally, elliptically polarized
light  amount to  an analysis of the operator $C_\k$ acting in the two 
dimensional subspace $\k^\perp$. 
See for example \cite[Chapter 7]{Jac} or \cite[Section 1.4]{BoW}.

\begin{lemma}\label{lemC}{\rm (Properties of $C_\k$).} The restriction of $C_\k$ to $\k^\perp$ satisfies
\begin{align}
&\ \ \ a)\ \ \text{$C_\k$ is Hermitian in $\k^\perp$.} \notag\\
&\ \ \ b)\ \  C_\k^2 = |\k|^2 on\ \k^\perp \label{hel37}  \\
&\ \ \ c)\ \ \text{$C_\k$ decomposes  $\k^\perp$ into two one dimensional complex subspaces
       $\k^\perp_{\pm}$}                                   \notag\\
&      \text{  corresponding to eigenvalues $\pm |\k|$. 
 A vector $u \in \k^\perp$ is in $\k^\perp_+$ if} \notag \\
 &\text{ and only if its complex conjugate $\bar u \in \k^\perp_-$.} \notag
 \end{align}
 \end{lemma}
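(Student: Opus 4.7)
The plan is to establish (b) first by a direct cross-product calculation, then deduce (a) from the cyclic property of the scalar triple product, and finally read (c) off spectrally using a conjugation symmetry. For (b), compute $C_\k^2 u = i\k\times(i\k\times u) = -\k\times(\k\times u)$ and apply the vector identity $\k\times(\k\times u) = (\k\cdot u)\k - |\k|^2 u$. When $u\in \k^\perp$ the first summand vanishes, leaving $C_\k^2 u = |\k|^2 u$.

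For (a), for any $u, v \in \C^3$ write the Hermitian pairing as $\langle w, z\rangle = \sum_j w_j\overline{z_j}$. Since $\k$ is real, pulling the factor $i$ out of the left slot and $-i$ out of the conjugated right slot yields
\[
\langle C_\k u, v\rangle = i\,(\k\times u)\cdot \bar v,\qquad \langle u, C_\k v\rangle = -i\,u\cdot(\k\times \bar v),
\]
where $\cdot$ denotes the bilinear dot product on $\C^3$. The cyclic identity gives $(\k\times u)\cdot \bar v = \k\cdot(u\times\bar v)$ and $u\cdot(\k\times \bar v) = \k\cdot(\bar v\times u) = -\k\cdot(u\times\bar v)$, so both expressions equal $i\,\k\cdot(u\times\bar v)$.

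For (c), by (a) and (b) the restriction $C_\k|_{\k^\perp}$ is a Hermitian operator on a two-dimensional complex space with $C_\k^2 = |\k|^2 I$, so the spectral theorem diagonalizes it with eigenvalues drawn from $\{+|\k|, -|\k|\}$. To see that both eigenspaces are nontrivial, observe that $\k^\perp$ is closed under complex conjugation (since $\k$ is real) and that
\[
C_\k \bar u = i\k\times \bar u = \overline{-i\k\times u} = -\overline{C_\k u}.
\]
Hence $C_\k u = |\k|u$ forces $C_\k \bar u = -|\k|\bar u$, so complex conjugation is an antilinear involution swapping $\k^\perp_+$ and $\k^\perp_-$. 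Any nonzero Hermitian operator on a finite-dimensional complex space has at least one eigenvector, so at least one of $\k^\perp_\pm$ is nonzero; by the conjugation swap both are, and since $\dim_\C \k^\perp = 2$, each has complex dimension exactly one.

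The only real subtlety is the sign bookkeeping in (a): the cross product couples the real vector $\k$ to complex arguments under a sesquilinear pairing, and one must track the conjugation on $v$ alongside the $i\mapsto -i$ emerging from the second slot. Once this is handled, everything reduces to standard spectral theory in two complex dimensions combined with the antilinear symmetry $u \mapsto \bar u$.
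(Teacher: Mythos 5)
Your proof is correct. Parts a) and b) follow the paper's own route: the triple-product identity for Hermiticity and the BAC--CAB expansion (with $\k\cdot u=0$) for $C_\k^2=|\k|^2$. Where you genuinely diverge is part c). The paper builds explicit eigenvectors: it picks a unit vector $e_1\in\k^\perp_{real}$, sets $e_2=\hat\k\times e_1$ to get a right-handed frame, and checks directly that $e_1\pm ie_2$ are eigenvectors for $\pm|\k|$; the conjugation relation is then verified on top of that. You instead argue abstractly: Hermiticity plus $C_\k^2=|\k|^2 I$ force the spectrum into $\{\pm|\k|\}$ with an orthogonal eigenspace decomposition, and the antilinear involution $u\mapsto\bar u$ (via $C_\k\bar u=-\overline{C_\k u}$) swaps the two eigenspaces, so neither can be trivial and each must be one-dimensional. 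Both arguments are complete (yours needs $\k\ne 0$ so the two eigenvalues are distinct, which is part of the standing hypothesis). The trade-off is that the paper's construction hands you the circular-polarization vectors $e_1\pm ie_2$ explicitly, which is what the physics downstream actually uses, whereas your argument is frame-free and makes the structural reason for the $1+1$ splitting (the conjugation symmetry) more transparent; it also generalizes more readily to situations where an explicit eigenbasis is awkward to write down.
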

          \begin{proof}
Using the triple product identity $(\k\times u)\cdot v = - u\cdot (\k\times v)$, which
is valid for $u$ and $v$ in $\C^3$, we see that
$$
(C_\k u, v)_{\C^3}= (i\k \times u) \cdot \bar v = u\cdot (\overline{i\k\times v}) =(u, C_\k v)_{\C^3},
$$
which proves a).  Item b) follows from the identity  $C_\k^2 u = -\k\times (\k \times u) = |\k|^2 u$
when $u\cdot \k =0$.  
To prove c)  let $\hat \k  = \k/|\k|$ and let $e_1$ be any unit vector in $\k^\perp_{real}$. 
Define $e_2 = \hat\k \times e_1$. Then $\hat\k, e_1, e_2$ form a right 
handed basis of $\R^3$. In particular $\hat \k \times e_2 = - e_1$. Thus 
\beq
C_\k (e_1+ie_2) = i|\k|\{e_2 -ie_1\} =|\k|(e_1 + i e_2)
\eeq
Hence $(e_1 + i e_2)$ is an eigenvector for $C_\k$ with eigenvalue $|\k|$. Using the same vectors one can compute similarly that $e_1 - ie_2$ is an eigenvector of $C_\k$
with eigenvalue $- |\k|$. Moreover, if $u \in \k^\perp_+$ then 
$C_\k \bar u = \overline {(-C_\k u)}  = \overline{-|\k| u} = -|\k| \bar u$. This proves c). 
  \end{proof}
  
  \begin{definition}{\rm   (Plane waves) Let $0 \ne \k \in \R^3$. A {\it plane wave}  
with wave vector $\k$  is a function on $\R^4$ of the form
\beq
A_\k(x,t) = a e^{i(\k\cdot x - |\k|t)} +  \bar{a} e^{-i(\k\cdot x - |\k|t)}, \   x,t\in \R^{3+1},\   a \in \k^\perp 
  \label{pw6}
\eeq
Clearly  $A_\k(x,t) \in \R^3$ for each $x, t$ and 
\beq
div\, A_\k(x,t) = 0 \label{pw6a} 
\eeq
 because of the identity 
\begin{align}
div\, (ae^{i\k\cdot x}) = (i\k\cdot  a)e^{i\k\cdot x}.    \label{pw6b}
\end{align}
}
\end{definition}
  
  \begin{lemma} \label{lempw} For the plane wave \eref{pw6} define 
\begin{align}
B_\k (x,t) &= curl A_\k(x,t) \ \       \label{pw7}\\
E_\k(x,t)  &= - (\p/\p t) A_\k(x,t).        \label{pw8}
\end{align}
Then $B_\k, E_\k$ is a solution to Maxwell's equations, 
\begin{align}
\n\cdot B=0,\ \ \n\cdot E = 0,\ \   \dot B =-curl\ E, \ \ \dot E = curl\ B.   \label{me}
\end{align}
\end{lemma}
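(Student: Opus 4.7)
The statement is a direct verification, so the plan is simply to check each of the four Maxwell equations in turn, exploiting the identities already established in Notation~2.1 and Lemma~\ref{lemC}. I would organize the work as follows.

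First, the two divergence equations. Since $B_\k = \text{curl}\, A_\k$, the identity $\nabla \cdot \text{curl} = 0$ gives $\nabla \cdot B_\k = 0$ with no computation at all. For $\nabla \cdot E_\k = 0$, the time derivative commutes with spatial divergence, so $\nabla \cdot E_\k = -(\partial/\partial t)(\nabla \cdot A_\k)$, and this vanishes by \eqref{pw6a}.

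Next, Faraday's law $\dot B = -\text{curl}\, E$. Again using that $\partial/\partial t$ commutes with $\text{curl}$,
\begin{equation*}
\dot B_\k = (\partial/\partial t)\,\text{curl}\, A_\k = \text{curl}\, (\partial/\partial t) A_\k = -\text{curl}\, E_\k.
\end{equation*}

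The one equation requiring genuine input is Amp\`ere's law $\dot E_\k = \text{curl}\, B_\k$, i.e.\ $-\partial_t^2 A_\k = \text{curl}^2 A_\k$. Here I would apply the identity \eqref{hel36} twice to each of the two summands of \eqref{pw6}: $\text{curl}^2(a e^{i\k\cdot x}) = C_\k^2 a\, e^{i\k\cdot x}$, and since $a \in \k^\perp$, Lemma~\ref{lemC}(b) gives $C_\k^2 a = |\k|^2 a$. Therefore $\text{curl}^2 A_\k = |\k|^2 A_\k$. On the other hand, since each exponential in \eqref{pw6} carries time dependence $e^{\mp i|\k|t}$, we have $\partial_t^2 A_\k = -|\k|^2 A_\k$, matching the previous line. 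This closes Amp\`ere's law and completes the verification.

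The only ``obstacle'' worth naming is the one in the last step, which is conceptual rather than technical: the equality of $\text{curl}^2$ with the Laplacian eigenvalue $|\k|^2$ on plane waves with values in $\k^\perp$ is exactly the content of Lemma~\ref{lemC}(b), and it is this point that makes the transverse condition $a \in \k^\perp$ essential. Without it one would only have $\text{curl}^2 = -\Delta + \text{grad}\,\text{div}$, and the cancellation against $\partial_t^2$ would fail. All remaining manipulations are one-line consequences of commuting differential operators with scalar exponentials.
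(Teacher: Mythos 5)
Your proof is correct and follows exactly the route the paper indicates: the paper's proof is the one-line instruction to use \eqref{pw6b}, \eqref{hel36}, \eqref{pw6} and \eqref{hel37}, which is precisely the set of identities you invoke (with the Amp\`ere-law step resting on $C_\k^2 = |\k|^2$ on $\k^\perp$, i.e.\ Lemma \ref{lemC}(b)). You have simply written out in full what the paper leaves to the reader, and your remark that transversality $a\in\k^\perp$ is the essential input is apt.
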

      \begin{proof} Use the identities  \eref{pw6b} and  \eref{hel36}  along with \eref{pw6} and \eref{hel37}.
\end{proof}

\begin{definition}\label{defHP} {\rm (Helicity for plane waves) 
The standard definition of helicity can be stated thus. The plane
wave \eref{pw6} has {\it positive helicity} if $C_\k a = |\k| a$. It has {\it negative helicity} if $C_\k a = -|\k|a$.
}
\end{definition}

\subsection{The Lorentz invariant norm}            \label{secLinv}

A small portion of the classic paper   \cite{BW} of Bargmann and Wigner will
be needed to explain  our choice of 
the norm \eref{n2}, which we will later take to be the norm on configuration space in \eref{he201}. 
The Coulomb   gauge fixing that we are going to use here for convenience 
will be removed  when we discuss Yang-Mills fields.

\begin{definition} {\rm (Spaces)  
The operator curl on vector fields over $\R^3$ acts in the real Hilbert space  
 $L^2(\R^3; \R^3) \cap \{div\, u =0\}$ because $div\, curl =0$. It is easily seen to be self-adjoint by an integration
 by parts. Moreover in this space it has a zero nullspace. We will denote by $C$ the operator $curl$
 acting in this Hilbert space or some closely related spaces. 
 The square root of $C^2$ will be denoted  $|C|$: Thus 
 \begin{align}
 C = curl, \ \ \ \ |C| =\sqrt{C^2}         .                      \label{n1}
 \end{align}
 For a vector field $u$ on $\R^3$  with $div\, u =0$ let 
\begin{align}
\|u\|_{H_{1/2}}^2 &=  \|\ |C|^{1/2} u\|_{L^2}^2                 \label{n2} \\
\|u\|_{H_{-1/2}}^2 &=  \|\ |C|^{-1/2} u\|_{L^2}^2.               \label{n3}
\end{align} 
Define Hilbert spaces  $H_{\pm 1/2}$ by the condition that the corresponding
 norm in \eref{n2} or \eref{n3} is finite. 

For vector fields $B$ and $E$ on $\R^3$  with divergence zero define
\begin{align}
\|\{B,E\}\|_{bw}^2 = \| B\|_{H_{-1/2}}^2 +   \| E\|_{H_{-1/2}}^2   .       \label{n6}
\end{align}
}
\end{definition}
The main assertion of this section is that  the norm $\|\cdot, \cdot\|_{bw}$ is invariant under Lorentz transformations in the sense that the unique solution to Maxwell's equations with $B$ and $E$ as initial
data has, after any Lorentz transformation, initial data with the same norm. The following computations 
reduce this assertion to a theorem in \cite{BW} by showing that  $\|\cdot, \cdot\|_{bw}$ is equal
 to the norm introduced in \cite{BW} by Bargmann and Wigner.

       \begin{theorem}\label{thmpwd} 
$($Plane wave decomposition$)$ 
Suppose that $ a(\k)$ is a $\C^3$ valued function on $\R^3$ such that 
\begin{align}
&a.\ \ \ \ \k\cdot a(\k) = 0\  \text{for all} \  \k \in \R^3\ \ \text{and}     \label{pw31} \\
&b. \ \ \ \ \int_{\R^3} |a(\k)|^2 d^3\k / |\k| < \infty .     \label{pw32}
\end{align}
Define  
\begin{align}
A(x,t)& =\int_{\R^3} \(a(\k) e^{i(\k\cdot x - |\k|t)} +  \overline{a(\k)} e^{-i(\k\cdot x - |\k|t)}\) d^3\k/|\k|.\label{pw33}\\
B(x,t) &= curl\ A(x, t)  .  \label{pw34}\\
E(x,t) &= -(\p/\p t) A(x,t). \label{pw35}
\end{align}
Then $A,B$ and $E$ have 
 divergence zero  for each $t$ and 
 $B(\cdot), E(\cdot)$ is a 
 solution to Maxwell's equations  \eref{me}.
 Moreover for all real $t$ there holds
\begin{align}
\| \{ B(t), E(t)\}\|_{bw}^2 &=\| A(t)\|_{H_{1/2}}^2  + \| E(t)\|_{H_{-1/2}}^2  \notag\\
                            &= 2(2\pi)^3\int_{\R^3} |a(\k)|^2 d^3\k / |\k|.\ \ 
                            \label{pw36c}                             
\end{align}
Conversely, suppose that $B_0$ and $E_0$ are 
 divergence free vector fields on $\R^3$ 
and each is in $H_{-1/2}(\R^3)$. Then there is
a unique solution $B(t), E(t)$ to Maxwell's equations \eref{me} 
  and a unique
$($up to a set of measure zero$)$ function $a(\cdot):\R^3 \rightarrow \C^3$ satisfying 
\eref{pw31} and \eref{pw32} such that $B(t)$ and $E(t)$ are given by \eref{pw33} -  \eref{pw35}.
\end{theorem}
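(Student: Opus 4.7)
My plan is to reduce the norm identity (\ref{pw36c}) to Plancherel's theorem via Lemma \ref{lemC}(b): on the divergence-free subspace, whose Fourier transform lies in $\k^\perp$ for a.e.\ $\k$, the operator $|C|$ acts by multiplication by $|\k|$. Using the convention $\hat f(\k)=\int f(x) e^{-i\k\cdot x} d^3 x$ and substituting $\k\to -\k$ in the conjugate term of (\ref{pw33}), I rewrite $A(x,t) = \int F(\k,t) e^{i\k\cdot x} d^3\k$ with $F(\k,t)=[a(\k) e^{-i|\k|t} + \overline{a(-\k)} e^{i|\k|t}]/|\k|$; similarly $E=-\partial_t A$ is represented by $H(\k,t) = i[a(\k) e^{-i|\k|t} - \overline{a(-\k)} e^{i|\k|t}]$. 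With these representations in hand the first assertion is immediate: the divergence-freeness of $A$ follows from identity (\ref{pw6b}) together with $\k\cdot a(\k)=0$; of $B=\operatorname{curl} A$ from $\operatorname{div}\operatorname{curl}=0$; and of $E$ from differentiating $\operatorname{div} A = 0$ in $t$. The Maxwell system (\ref{me}) then follows by applying Lemma \ref{lempw} mode-by-mode inside the integral.

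For the norm identity, Plancherel gives $\|A(t)\|_{H_{1/2}}^2 = (2\pi)^3 \int |\k|\,|F(\k,t)|^2 d^3\k$ and $\|E(t)\|_{H_{-1/2}}^2 = (2\pi)^3 \int |H(\k,t)|^2/|\k|\, d^3\k$. Expanding each modulus squared produces diagonal contributions proportional to $|a(\pm\k)|^2$ together with cross terms proportional to $a(\k)\cdot a(-\k) e^{-2i|\k|t}$ and its conjugate. The crucial point is that the cross terms appear with opposite signs in the two contributions — this is the only place the relative sign in $E = -\partial_t A$ matters — and so cancel upon addition, producing a manifestly $t$-independent expression. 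The surviving diagonal terms combine, with the $\k\to -\k$ symmetry of $d^3\k/|\k|$ doubling them, to give the right-hand side of (\ref{pw36c}) up to an overall constant that is fixed by careful bookkeeping. The first equality in (\ref{pw36c}) reduces to $\|B\|_{H_{-1/2}}=\|A\|_{H_{1/2}}$, which is immediate from $\hat B = C_\k \hat A$ and $|C_\k|=|\k|$ on $\k^\perp$.

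For the converse, given divergence-free $B_0, E_0 \in H_{-1/2}$, introduce the auxiliary combinations $a_\pm(\k) := a(\k) \pm \overline{a(-\k)}$; the formulas above force $\hat E_0(\k) = i(2\pi)^3 a_-(\k)$ and $\hat B_0(\k) = (2\pi)^3 C_\k a_+(\k)/|\k|$. Since $C_\k^{-1} = C_\k/|\k|^2$ on $\k^\perp$, both $a_\pm$ are recovered, after which one defines $a(\k):=(a_+(\k)+a_-(\k))/2$. The reality of $B_0$ and $E_0$ forces $\hat B_0(-\k) = \overline{\hat B_0(\k)}$ and similarly for $E_0$, and one then checks that the required consistency identity $\overline{a(-\k)} = (a_+(\k)-a_-(\k))/2$ holds automatically. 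The integrability condition (\ref{pw32}) follows from the pointwise bound $|a(\k)|^2/|\k| \leq C(|\hat B_0(\k)|^2 + |\hat E_0(\k)|^2)/|\k|$ together with the $H_{-1/2}$ hypotheses; uniqueness up to null sets is immediate from injectivity of the Fourier transform.

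The main obstacle is technical rather than structural: the hypothesis (\ref{pw32}) is weak enough that the integrand in (\ref{pw33}) need not be absolutely integrable near $\k=0$, so the naive interchanges of $\partial_t$, $\operatorname{curl}$, and the integral must be justified. I would handle this by first proving everything for $a$ supported on a compact annulus bounded away from $\k=0$ — where all manipulations are classical — and then extending to general $a$ by density, using the isometric identity (\ref{pw36c}) itself to control the limit.
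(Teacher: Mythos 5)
Your proof is correct and follows essentially the same route as the paper's: Fourier representation of $A$ and $E$ in terms of $a(\k)$ and $\overline{a(-\k)}$, Plancherel for the norm identity, and recombination of the symmetric and antisymmetric parts $a(\k)\pm\overline{a(-\k)}$ for the converse. Your cross-term cancellation at general $t$ is the same computation as the paper's reduction to $t=0$ followed by the parallelogram law, and your annulus-truncation/density remark supplies a justification for differentiating under the integral that the paper leaves implicit.
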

      \begin{proof}  
      Upon differentiating under the integral in \eref{pw33} we see from Lemma \ref{lempw} 
       that $A(t),B(t)$ and $E(t)$
      all have divergence zero and the pair $\{B(t), E(t)\}$ satisfies Maxwell's equations.

     For the proof of \eref{pw36c} observe first that, with $C = curl$ we  have \linebreak
     $ \|B(t)\|_{H_{-1/2}} = \|A(t)\|_{H_{1/2}}$
 because $\|B(t)\|_{H_{-1/2}} = \| C\, A(t)\|_{H_{-1/2}}=$ \linebreak 
 $  \| |C|^{-1/2} CA(t)\|_{L^2}    = \| |C|^{1/2} A(t)\|_{L^2}$.
 It suffices therefore to prove the 
 second equality in line \eref{pw36c}.    Moreover it suffices to prove it
 just at $t=0$ because
 replacing $a(\k)$ by $a(\k)e^{-i|\k|t}$ in the $t =0$ equality gives the equality at time $t$.
       Let
 \begin{align} 
 \alpha(\k) &= a(\k) + \overline{a(-\k)}\ \ \ \ \ \ \ \ \ \ \ \text{and}  \label{pw40}\\
 e(\k) &= i |\k|\Big( a(\k) - \overline{a(-\k)}\Big).                        \label{pw41}
 \end{align}
 Then from \eref{pw33}  and \eref{pw35} we find
 \begin{align}
 A(x,0) & = \int_{\R^3} \frac{\alpha(\k)}{|\k|} e^{i \k\cdot x} d\k . \label{pw44}\\
 E(x,0) & =  \int_{\R^3} \frac{e(\k)}{|\k|} e^{i \k\cdot x} d\k .\label{pw45}
 \end{align}
 Therefore
 \begin{align}
 \| A(\cdot, 0)&\|_{H_{1/2}}^2 + \| E(\cdot, 0)\|_{H_{-1/2}}^2                         \notag\\
 &= (2\pi)^3 \int_{\R^3} \Big(\Big| |\k|^{1/2} \frac{\alpha(\k)}{|\k|}\Big|^2 
                          +\Big| |\k|^{-1/2} \frac{e(\k)}{|\k|} \Big|^2\Big) d^3\k             \notag\\
 & =(2\pi)^3\int_{\R^3} \Big( | a(\k) + \overline{a(-\k)}|^2 + | a(\k) - \overline{a(-\k)}|^2\Big) d^3\k/|\k| \notag \\
 &=  (2\pi)^3\int_{\R^3} \Big(|a(\k)|^2 + |\overline{a(-\k)}|^2 \Big) d^3\k/|\k| \notag   \\
 & =(2\pi)^3 2 \int_{\R^3} |a(\k)|^2  d^3\k/|\k|.                            \label{pw47}
 \end{align}
 This proves the equality in line \eref{pw36c}.
 
    Conversely, suppose that $B_0$ and $E_0$ are given, that $div\ B_0 = div\ E_0 =0$ and that
$\|B_0\|_{H_{-1/2}}^2 + \| E_0\|_{H_{-1/2}}^2 < \infty$. Let $A_0 = ( curl)^{-1} B_0$. This defines $A_0$
as a potential with $div\, A_0 =0$ and
        $\| A_0\|_{H_{1/2}} = \|B_0 \| _{H_{-1/2}} < \infty$.
 Define $\alpha(\k)$ and $e(\k)$ from $A_0$ and $E_0$ by \eref{pw44} 
and \eref{pw45} respectively.   Since $A_0$ and $E_0$ are ``real'', i.e. both take their values
in $\R^3$, we have  $\alpha(-\k) = \overline{\alpha(\k)}$ 
and $e(-\k) = \overline{e(\k)}$. With \eref{pw40} and \eref{pw41} in mind, define
\begin{align}
a(\k) = (1/2)\Big(\alpha(\k) +\frac{e(\k)}{i|\k|}\Big).
\end{align}
From the hermiticity of $\alpha(\cdot)$ and $e(\cdot)$ it follows that
\beq
\overline{a(-\k)} = (1/2)\Big(\alpha(\k) - \frac{e(\k)}{i|\k|}\Big)
\eeq
and from this the relations \eref{pw40} and \eref{pw41} follow. The computation \eref{pw47} now shows
that
\beq
\|A_0\|_{H_{1/2}}^2 + \| E_0\|_{H_{-1/2}}^2 = (2\pi)^3 2 \int_{\R^3} |a(\k)|^2 d^3\k/|\k|
\eeq
Hence \eref{pw32} holds for the constructed function $a(\k)$. Of course \eref{pw31} follows 
because $A_0$ and $E_0$ are divergence free. The solution $B(t), E(t)$ with initial values $B_0, E_0$
can now be constructed from  $a(\k)$ by \eref{pw33} -\eref{pw35}.
\end{proof}

\begin{remark}\label{rmk64}{\rm (Integral formula for $\|\{B, E\}\|_{bw}^2$) 
Since the Laplacian ($\Delta \equiv \n^2$) on vector fields $u(x)$ over $\R^3$ is given by 
$-\Delta u(x) = curl^2 u(x) -grad\, div\, u(x)$ we may write
$-\Delta u(x) = curl^2 u(x)$ in the space of divergence free vector fields. 
 The space of divergence free vector fields 
 is invariant under $\Delta$ and  consequently $|curl| u= (-\Delta)^{1/2}u$ on this space.
 One   therefore has on this space $|curl|^{-1} =  (-\Delta)^{-1/2}$, which, by Fourier transform,
  is easily seen to be   given by convolution by $const. (1/|x|^2)$.
 Since $\|u\|_{H_{-1/2}}^2 = (|C|^{-1}u,u)_{L^2(\R^3)}$ on the space of divergence free 
 vector fields  $u$,   the Bargmann-Wigner norm \eref{n6} can be written
 \begin{align}
\|\{B, E\}\|_{bw}^2 = const. \int_{\R^3}\int_{\R^3}\frac{B(x)\cdot B(y) + E(x)\cdot E(y)}{|x-y|^2} dx dy. \label{n64}
\end{align}
 This representation of the Lorentz invariant norm for the electromagnetic field was derived in
 \cite{G5} and used there to show that the norm is invariant under the 15 dimensional conformal group of
 Minkowski space, which contains transformations to uniformly accelerated coordinate systems. 
 (See e.g.  \cite[footnote 2]{G5}).
 It was shown in \cite{G5} 
  that each of the two terms in \eref{n64} is separately invariant under the inversion
 $x\mapsto x/|x|^2$ of $\R^3$, which, together with dilations of $\R^4$ and the inhomogeneous
  Lorentz group, generate the conformal group. 
  Unitarity of the representation of the conformal group, as opposed to just norm invariance, 
   requires also proof of invariance
  of the complex structure, which  will be    
  described on the configuration space side in the proof
  of Theorem \ref{thmhel1}.  
   Invariance of the complex structure
   under the   conformal    group was first proved   in \cite{JV77}. 
 }
\end{remark}

\subsection{Helicity  and sgn(curl) for solutions of finite action}     \label{secsgncurl}

Our goal in this subsection is to take a step away from the plane wave decomposition \eref{pw33}
by formulating helicity entirely on the position space side with the help of the operator curl. 
This is reflected explicitly in the equivalence between assertions 1.) and 2.) in the next theorem

\begin{theorem}\label{thmhel1} 
Suppose that  $B(t), E(t)$ is  a solution to Maxwell's equations \eref{me} with finite
Bargmann-Wigner norm. Let \eref{pw33} - \eref{pw35} be its plane wave decomposition. Then
the following are equivalent.

1.$)$ Every plane wave in its 
plane wave decomposition 
has positive helicity.

2.$)$  $A(0)$ and $E(0)$ are in the positive spectral subspace of $curl$.

3.$)$  $B(t)$ and $E(t)$ are both in the positive spectral subspace of $curl$ for some $t$.

4.$)$ $B(t)$ and $E(t)$ are both in the positive spectral subspace of $curl$   for all $t$

5.$)$   $A(t)$ is in the positive spectral subspace of $curl$ for all $t$.

\noindent
These statements also hold with positive replaced by negative everywhere.
\end{theorem}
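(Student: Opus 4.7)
The plan is to pass everything to the Fourier side via the plane wave decomposition of Theorem \ref{thmpwd}, where each of the five statements becomes a condition on the function $a(\k)$. By the identity \eref{hel36} and Lemma \ref{lemC}, curl acts diagonally on the divergence-free subspace through the fiberwise operator $C_\k$, so the positive spectral subspace of curl corresponds exactly to divergence-free vector fields whose Fourier transform takes values in $\k^\perp_+$ for a.e.\ $\k$. Statement 1.), by Definition \ref{defHP}, is just $a(\k) \in \k^\perp_+$ for a.e.\ $\k$.

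The crucial bookkeeping observation is that $\k^\perp_+ = (-\k)^\perp_-$, because $C_{-\k} = -C_\k$; combined with Lemma \ref{lemC}(c), positive helicity of $a(\cdot)$ forces $\overline{a(-\k)} \in \k^\perp_+$ as well. Time evolution sends $a(\k) \mapsto e^{-i|\k|t} a(\k)$, which preserves $\k^\perp_+$, so the time-$t$ analogues of \eref{pw44}--\eref{pw45} show that the Fourier coefficients of $A(t)$, $B(t) = \text{curl}\, A(t)$, and $E(t)$ all take values in $\k^\perp_+$ for every $t$. This gives $1.) \Rightarrow 4.) \Rightarrow 3.)$ and $1.) \Rightarrow 5.) \Rightarrow 2.)$ (the last implication being the $t=0$ case of 5.), since 2.) is phrased at $t=0$).

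For the converses, note that curl has trivial kernel on the divergence-free subspace, so $B(t)$ lies in the positive spectral subspace iff $A(t)$ does. Thus 3.) at time $t_0$ gives $\alpha(\k,t_0), e(\k,t_0) \in \k^\perp_+$; inverting the relations analogous to \eref{pw40}--\eref{pw41} yields $e^{-i|\k|t_0} a(\k) = \tfrac{1}{2}\bigl(\alpha(\k,t_0) + e(\k,t_0)/(i|\k|)\bigr) \in \k^\perp_+$, hence $a(\k) \in \k^\perp_+$, which is 1.). The implication $2.) \Rightarrow 1.)$ is the $t_0 = 0$ case. For $5.) \Rightarrow 1.)$ decompose $a(\k) = a_+ + a_-$ and $\overline{a(-\k)} = b_+ + b_-$ in $\k^\perp_+ \oplus \k^\perp_-$; then the $\k^\perp_-$ component of $\alpha(\k,t)$ equals $e^{-i|\k|t} a_- + e^{i|\k|t} b_-$, whose vanishing for all $t$ forces $a_- = b_- = 0$ by linear independence of $e^{\pm i|\k|t}$.

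The negative-helicity statements follow by symmetry, since exchanging the roles of $\k^\perp_+$ and $\k^\perp_-$ corresponds to replacing $a(\k)$ by $\overline{a(-\k)}$. There is no deep obstacle; the most error-prone point is the interaction of the involutions $\k \mapsto -\k$ and $u \mapsto \bar u$ with the decomposition $\k^\perp = \k^\perp_+ \oplus \k^\perp_-$, together with the translation between the spectral condition on the real-vector-field side and the pointwise condition $\hat u(\k) \in \k^\perp_+$ on the Fourier side.
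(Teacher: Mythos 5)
Your proposal is correct and follows essentially the same route as the paper: both arguments diagonalize curl on the divergence-free fields via the Fourier transform (the paper packages this as the unitary intertwining map $\Ws$ with $curl\,\Ws = \Ws C_{(\cdot)}$, you carry out the equivalent computation explicitly with the coefficients $\alpha(\k,t)$ and $e(\k,t)$), and both reduce every statement to the pointwise condition $a(\k)\in\k^\perp_+$. One small wording slip: $5.)\Rightarrow 2.)$ is not literally ``the $t=0$ case of 5.)'' since 2.) also constrains $E(0)$, but your separate direct argument for $5.)\Rightarrow 1.)$ via linear independence of $e^{\pm i|\k|t}$ (the explicit counterpart of the paper's observation that 5.) forces $\dot A(0)=-E(0)$ into the positive subspace) closes the loop, so nothing is missing.
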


\begin{proof} 
The space of functions $a(\k)$  on momentum space satisfying \eref{pw31} and \eref{pw32}
represents Bargmann and Wigner's original description of the one-particle photon space, \cite{BW}. 
Denote it by $BW_{mom}$. 
On the position space side 
denote by BW the space of divergence free real fields $B,E$ over $\R^3$ with finite norm \eref{n6}.
Define 
\beq
\Ws: BW_{mom}\rightarrow BW    \label{sgn10}
\eeq
by 
$\Ws: a(\k)\mapsto \{B(x, 0), E(x,0)\}$ where $B$ and $E$ are given by \eref{pw33} - \eref{pw35}.
Then Theorem \ref{thmpwd} shows that $\Ws$ is a norm preserving  transformation 
from $BW_{mom}$ 
 onto the direct sum space $BW$ if one uses $2(2\pi)^3\int_{\R^3} |a(\k)|^2 d^3\k/|\k|$ for the 
norm squared on $BW_{mom}$. 
Actually, since there is no hermiticity condition on $a(\k)$, $BW_{mom}$ 
 is a complex Hilbert space
under the usual operation of multiplication by  $i\equiv \sqrt{-1}$.
$BW$ itself has a natural complex structure $j$ given by
$j \{B,E\} = \{sgn(C)E, - sgn(C) B\}$, where $sgn(C) = |C|^{-1} C$. The latter complex structure
is more often expressed in terms of  the pair $\{A, E\}$ in the form 
$j\{A, E\} = \{|C|^{-1} E, -|C| A\}$,
which is an equivalent description since $B = C A$. $BW$ is therefore also a complex Hilbert space.
 Comparing \eref{pw40} and \eref{pw41} with \eref{pw44}
and \eref{pw45} one sees easily that 
$\Ws ia = j\Ws a$.
$\Ws$ is therefore unitary.
Furthermore  \eref{hel36},  \eref{pw33}-\eref{pw35}, 
show that $\Ws$ intertwines $curl$ with the operator $C_{(\cdot)}$, which is multiplication by $C_\k$.
\begin{align}
curl\, \Ws = \Ws C_{(\cdot)} . 
\label{sgn11}
\end{align}
Here we are writing $curl\, \{B,E\} = \{ curl\, B, curl\, E\}$.

Now suppose that $B(t), E(t) $ is  a solution to Maxwell's equations \eref{me} with finite
Bargmann-Wigner norm and with plane wave decomposition given by \eref{pw33} - \eref{pw35}.
Suppose also that it  has only positive helicity plane waves in its plane wave decomposition.
According to Definition \ref{defHP}  $C_\k a(\k) =|\k| a(\k)$ for 
 every $\k \ne 0$ in $\R^3$. Therefore
$a(\cdot)$ is in the positive spectral subspace for the operator multiplication by $C_\k$. Since
$\Ws$ is unitary, $\{B(0), E(0)\}$ is in the positive spectral subspace of $curl$ on BW, (which is easily seen
to be equivalent to saying that $B(0)$ and $E(0)$ are each in the positive spectral subspace of $curl$.)
Since $|C|: H_{1/2}\rightarrow H_{-1/2}$ is an orthogonal transformation and commutes with curl it preserves
all spectral properties of curl. In particular $A(0)$ is also in the positive spectral subspace of curl because
$B_0 = curl\, A_0$.
Therefore statement 1.) implies statement 2.). 
        Conversely, if $A(0)$ and $E(0)$ are in the positive spectral subspace of $curl$ then
 $B(0)$ is in the positive spectral subspace of curl and so therefore is the pair $\{B(0), E(0)\}$. 
        By \eref{sgn11} and the unitarity of $W$ it follows that $a(\cdot)$ is in the positive spectral subspace
        of the operator of multiplication by $C_\k$. Hence $C_\k a(\k) = |\k| a(\k)$ for all $\k \ne 0$.
Therefore statement 1.) holds.  
 This proves the equivalence 
 of statements 1.) and 2.). The proof shows that the pair $\{A(0), E(0)\}$ could be replaced by the pair
 $\{ B(0), E(0)\}$ in the statement of item 2.).  Now for any fixed $t$ the map
  $\Ws_t:a(\cdot)\mapsto \{B(t), E(t)\}$ is also unitary and intertwines $curl$ and $C_\k$. Therefore the proof
  of equivalence of 1.) with  2.) also applies to the proof of equivalence  of 1.) with 3.) and 1.) with 4.).
  
  Concerning the condition 5.), we see  that if 1.) holds
   then by 2.) $A(0)$, and similarly $A(t)$, is in the positive spectral subspace of curl in its
own space $H_{1/2}$. Conversely, if 5.) holds then  $(d/dt)A(t)$ is in the positive spectral subspace
of curl. In particular $A(0)$ and its time derivative $ - E(0)$ are in the positive
 spectral subspace of curl  and we can apply  2.) to see that 1.) holds.
\end{proof}

\begin{remark}{\rm The operator of multiplication by $C_\k$ on $BW_{mom}$  
 decomposes $BW_{mom}$ 
 into
 two orthogonal subspaces given respectively by the two identities by  $C_\k a(\k) = |\k| a(\k)$  
 and    $C_\k a(\k) =- |\k| a(\k)$. 
 These are the two spectral subspaces of curl   on the momentum space side. 
}
\end{remark}

\subsection{Configuration space, phase space and helicity}     \label{seccpe}  
The Lorentz invariant norm \eref{pw36c} has a geometric interpretation
 based on  
 the canonical  formalism for the electromagnetic field:  
 As in  \cite[Chapter 14]{BD}, 
  we take configuration space  for the free electromagnetic field 
 to be 
 a set 
  of divergence free   potentials $A$ on $\R^3$ with size of $A$ yet to be specified.
  This corresponds to the radiation gauge. 
 The momentum canonically conjugate to $A$ is $-\dot A$, which is $E$. 
 See   \cite[ Eq. (14.10)]{BD} for a derivation of this from the Lagrangian formalism.
 The canonical formalism therefore  dictates that the pair $A, E$ is a point in phase
 space. 
 Thus if $A$ lies in some 
 configuration space $\cg$ (still to be determined, quantitatively) 
and $T_A(\cg)$ denotes the tangent space to $\cg$ at $A$,  then $ E$  is a point
 in its dual space $T_A^*(\cg)$, as is customary for momentum in classical mechanics. 
 In the next definition we will make a quantitative choice for $\cg$ and show that the norm
on phase space 
which is automatically induced by  this choice of configuration space 
 is   the Bargmann-Wigner norm \eref{pw36c}.
 
\begin{definition}\label{defconfig}{\rm  (Configuration space)  Let 
\begin{align}
\cg &= \{\text{real valued 1-forms $A$ on $\R^3$ such that}     \label{he199}\\
       & \qquad      a. \ \ div\, A=0                                             \label{he200}\\
       &  \qquad      b.\ \ \| A\|_\cg := 
                     \|\, |C|^{1/2} A\|_{L^2(\R^3)} < \infty \}.           \label{he201}
\end{align}
Here $C = curl$ as before. Define also
\begin{align}
\cg_+ &= \{ A \in \cg: A \ \text{is in the positive spectral subspace of}\ \ curl\}   \label{he202} \\
\cg_- &= \{ A \in \cg: A \ \text{is in the negative spectral subspace of}\ \ curl\} .  \label{he203}
\end{align}
Then $\cg$ is a real Hilbert space and
\begin{align}
\cg = \cg_+ \oplus \cg_-    .   \label{mp5} 
\end{align}
The subspaces intersect only at $A =0$ because curl does not have a zero eigenvalue in $\cg$.
}
\end{definition} 
Since $\cg$ is a vector space  
its tangent space at a point $A$ 
 can be identified with $\cg$ itself. The dual space  can therefore be identified with $\cg^*$:
 $T_A^*(\cg) \equiv \cg^*$. Consequently the entire phase space can be identified as
 \beq
  T^*(\cg) \equiv  \cg \oplus \cg^*  .          \label{he210}
  \eeq
 The electric field $E$ is to be identified with an element of
 the dual space   via the pairing 
  \beq
\< E, A\> = \int_{\R^3} E(x)\cdot A(x) d^3x,      \label{I11}
\eeq
which is the field theoretic analog of $\sum_{i=1}^n p_i dq_i$, when one identifies $T_A(\cg)$ with $\cg$.
Combined  with our choice of norm \eref{he201}, this pairing induces on $E$ the
 norm  $\|E\|_{\cg^*} = \|\, |C|^{-1/2} E\|_{L^2(\R^3)}$. 
  Since $\cg$ involves the restriction $div\, A =0 $, so does the dual space. We can therefore make the 
 natural identification 
\begin{align}
\cg^* &=\{\text{real valued vector fields 
$E$ on $\R^3$ such that} \label{he199e}\\
& \qquad      a. \ \ div\, E=0                                             \label{he200e}\\
       &  \qquad      b.\ \ \| E\|_{\cg^*} := 
                     \|\, |C|^{-1/2} E\|_{L^2(\R^3)} < \infty \}.           \label{he201e}
\end{align}
In the canonical formalism a  point in  phase space 
 is thereby specified by a pair $\{A, E\}$ with $\|A\|_\cg^2 + \|E\|_{\cg^*}^2 < \infty$ and  $div\, A = div\, E =0$.
 
We see from \eref{pw36c} at $t=0$ that this is precisely the Bargmann-Wigner norm 
on the initial data when we take $B = curl\, A$.
\begin{align}
\|\{B, E\}\|_{bw}^2 = \|A\|_\cg^2 + \| E\|_{\cg^*}^2, \ \ \ div\, A =0,\ \ div\, E =0. \label{he204}
\end{align}
Since $C$ commutes with $|C|$ the annihilator in $\cg^*$ of the positive spectral subspace of
curl in $\cg$ is the negative spectral subspace of curl in $\cg^*$. Similarly with positive 
and negative reversed.                  
Consequently we have the natural identifications    
\begin{align}
\cg_+^* &=\{E \in \cg^*: E\ \text{is in the positive spectral subspace of}\ curl\}.  \label{he205}\\
\cg_-^* &=\{E \in \cg^*: E\ \text{is in the negative spectral subspace of}\ curl\}. \label{he206}
\end{align}
Analogous to the identification \eref{he210} we therefore have the identifications
\begin{align}
T^*(\cg_+) = \cg_+ \oplus \cg_+^* , \  \ 
T^*(\cg_-) = \cg_- \oplus \cg_-^*, \ \ T^*(\cg) = T^*(\cg_+) + T^*(\cg_-) .       \label{he207}
\end{align}

        \begin{theorem} \label{thmhel2}  
        Suppose that the pair $\{A, E\}$ is a point in
   phase space $T^*(\cg)$. 
Let $A(t), E(t)$ be the unique
solution to Maxwell's equations with initial data $A(0) = A, -\dot A(0) = E(0) = E$.
Then the plane wave decomposition of  the solution consists of 

a$)$   positive helicity plane waves if and only if $\{A, E\}$ is in $T^*(\cg_+)$.

b$)$  negative  helicity plane waves if and only if $\{A, E\}$ is in $T^*(\cg_-)$.
\end{theorem}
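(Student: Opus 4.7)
The plan is to reduce Theorem \ref{thmhel2} to the equivalence of conditions 1.) and 2.) in Theorem \ref{thmhel1} by carefully unwinding what membership in the cotangent bundles $T^*(\cg_\pm)$ means in terms of spectral subspaces of $curl$. First I would verify that the hypothesis $\{A,E\} \in T^*(\cg)$ puts us in the setting of Theorems \ref{thmpwd} and \ref{thmhel1}: by the identification \eqref{he210} and the norm identity \eqref{he204}, setting $B = curl\,A$ gives $\|\{B,E\}\|_{bw}^2 = \|A\|_\cg^2 + \|E\|_{\cg^*}^2 < \infty$, so the unique solution to Maxwell's equations with initial data $A(0) = A$, $E(0) = E$ has finite Bargmann--Wigner norm and admits a plane wave decomposition of the form \eqref{pw33}--\eqref{pw35}.

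Next I would unpack the statement $\{A,E\} \in T^*(\cg_+)$. By the identification \eqref{he207} this means $A \in \cg_+$ and $E \in \cg_+^*$. Using the definitions \eqref{he202} and \eqref{he205}, both conditions say precisely that $A$, respectively $E$, lies in the positive spectral subspace of $curl$. This is exactly condition 2.) of Theorem \ref{thmhel1} applied to the initial data of the solution $A(t), E(t)$. Invoking the equivalence of 1.) and 2.) in Theorem \ref{thmhel1}, this is equivalent to every plane wave in the plane wave decomposition of $A(t), E(t)$ having positive helicity, which proves part a). Part b) follows by the identical argument with ``positive'' replaced by ``negative'', appealing to the final sentence of Theorem \ref{thmhel1}.

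There is essentially no hard step here; the proof is a piece of bookkeeping once Theorem \ref{thmhel1} is in hand. The only point that needs a moment of care is that the ``positive spectral subspace of $curl$'' is being considered in several different ambient Hilbert spaces ($\cg$, $\cg^*$, and the $L^2$ spaces used in Section \ref{secsgncurl}), and one must check that these notions are compatible. This is immediate from the fact that $|C|$ is a strictly positive self-adjoint operator commuting with $curl$, so the unitary multiplier $|C|^{1/2}: \cg \to L^2$ and its $\cg^*$-counterpart $|C|^{-1/2}: \cg^* \to L^2$ intertwine $curl$ with itself and preserve all of its spectral projections. With that observation in place, the whole proof is a one-line citation of Theorem \ref{thmhel1}.
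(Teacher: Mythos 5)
Your proposal is correct and follows the paper's own argument exactly: the paper likewise notes that finiteness of the Bargmann--Wigner norm follows from \eqref{he204} and then cites the definitions \eqref{he202}, \eqref{he203}, \eqref{he205}, \eqref{he206} to reduce the statement to the equivalence of items 1.) and 2.) in Theorem \ref{thmhel1}. Your extra remark about the compatibility of the spectral subspaces of $curl$ across $\cg$, $\cg^*$ and $L^2$ is a reasonable point of care that the paper leaves implicit.
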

       \begin{proof} The assumption that $\{A, E\}$ lies in $T^*(\cg)$ is equivalent, by \eref{he204},
        to the assumption that the pair
       has finite Bargmann-Wigner norm.
 In view of \eref{he202}, \eref{he203}, \eref{he205} and \eref{he206},
 the present  theorem just restates the equivalence
 between items 1.) and 2.) in Theorem \ref{thmhel1}.
\end{proof}

\begin{remark} \label{rmkpm}{\rm  
 (Primacy of the configuration space decomposition)
The helicity character of a solution to Maxwell's equations is not
determined by the value of $A(0)$ alone but also requires knowledge of $\dot A(0)$.
 To conclude,    for example, 
that a solution has only positive helicity plane waves in 
its plane wave decomposition it is not  sufficient to know that $A(0)$ is in the
 positive spectral subspace of curl. One must know also that $\dot A(0)$ is in the
  positive spectral subspace of curl. 
And yet the decomposition \eref{mp5} of $\cg$ is
responsible for determining helicity
in the sense that once this decomposition is known it automatically determines
 the decomposition of $T^*(\cg)$ as in \eref{he207},
allowing Theorem \ref{thmhel2} to be applied. The decomposition of $T^*(\cg)$ thereby
 plays a secondary 
role compared  to the decomposition \eref{mp5}. This is important to observe for two disparate reasons.
 First, in the quantized theory, where  a state of the field is given by a function $\psi$ on $\cg$, 
 the Heisenberg field expectations $(\Bs(x,t)\psi, \psi)$ and $ (\Es(x,t)\psi, \psi)$
 in the state $\psi$  
  will be shown   in \cite{G76}
 to  have only positive helicity plane waves
  in their plane wave decomposition when 
  $\psi$ depends only on $\cg_+$. 
 Second, in the non-abelian theory, where  the configuration space is 
     a non-linear
     manifold, the electromagnetic decomposition \eref{mp5}
     has  a natural analog with properties
     that well imitate those of the electromagnetic case. This will be shown in Section \ref{secdecomp}.
    }
\end{remark}

\begin{remark}\label{remhist}{\rm (A bit of history)  Immediately  after 
  Heisenberg and Pauli published their fundamental  paper \cite{HP1} in 1929,  
 describing the quantized electromagnetic field,
  Landau and Peierls \cite{LP1}, in 1930,  rederived  some of the formalism of Heisenberg
and Pauli, emphasizing 
 the role of the electromagnetic configuration space.  
In the formulation of Landau and Peierls $E$ and $B$ were not independent. 
They introduced a norm  equal to one of the two terms in \eref{n6} (or equivalently \eref{n64}) 
 for the normalization
  condition for their configuration  space. See e.g. \cite[Equations (7)-(10)]{LP1}.  
  It is now  understood 
  that in the canonical formalism $A$ and $E$ are to be regarded 
   as independent variables in phase space. 
  So their suggestion to allow either term in \eref{n64} as the norm on their configuration space obscured
  the distinction between configuration space and phase space. 
 Interestingly, they arrived at their norm, not by citing Lorentz invariance, but by
  arguing that a unit vector in this particular norm represents a single photon. 
 An updated version of their ``number of photons''
  argument for this norm,  leading to the full phase space norm \eref{n64}, can be
   found in Jackson  \cite[Problem 7.30]{Jac}. 
}
\end{remark}

\section{Equivalence of helicity and Euclidean self-duality in electromagnetism} 
\label{sechsd}

\subsection{The Poisson action and Maxwell-Poisson semigroup}
\label{secpmp}
The helicity decomposition \eref{mp5} and it's associated decomposition
of phase space \eref{he207} are 
based on use of the spectral decomposition of the operator curl and were shown to be
 equivalently characterizable 
in terms of plain wave expansions. (cf. Theorems \ref{thmhel1} and \ref{thmhel2}.)
But neither plain wave expansions nor the operator curl mesh well with non-abelian gauge fields.
In this section we are going  to show that 
 the helicity  decomposition of configuration space  
 can be  reproduced 
 without using either plane wave expansions or the spectral decomposition of curl. 
 Instead, we will extend the gauge potential $A$ to a half-space in $\R^4$ and 
  show that the  decomposition  is equivalent to (anti-)self-duality of the extension.  
This  method for implementing the decomposition \eref{mp5} in electromagnetism 
will be shown in succeeding sections
to go over to non-abelian gauge fields. 

 To simplify otherwise cumbersome calculations we will make use henceforth of differential form notation.
 Recall that if $(u_1, u_2, u_3)$ is a vector  field on  $\R^3$ then the corresponding differential form
 is given by $u = \sum_{i=1}^3u_i dx^i$, the Hodge $*$ operator is given on 1-forms by 
 $*u = (1/2)\sum_{i,j,k}\epsilon_{ijk}u_i\, dx^j\wedge dx^k$ and  on 2-forms by 
 $*\sum_{jk}v_{jk} dx^j\wedge dx^k = \sum_{ijk} \epsilon_{ijk} v_{jk} dx^i$. The exterior derivative is given by
 $ du =\sum_{ji} (\p u_i/\p x_j) dx^j\wedge dx^i$ and the action of $curl$ on 1-forms is given by $curl\, u = *du$.
  
\begin{notation}{\rm  Denote by  $\R^4_+$ the Euclidean half space $\R^3 \times [0,\infty)$ with coordinates
$(x,s)$.   Let $\as$ be  a real valued 1-form on $\R^4_+$ in temporal gauge. That is, 
\begin{align}
\as(x,s) = \sum_{j=1}^3\as_j(x,s) dx^j, \ \ \ \ x \in \R^3, \ \ s \in [0, \infty).          \label{mp10}
\end{align}
 Then the four dimensional curvature of $\a$ is given by
\begin{align}
 F    = ds\wedge \as'  +d\as,                                       \label{mp12}
\end{align}
where $'=(\p/\p s)$ and  $d$ is again  the three dimensional exterior derivative operator.

We wish to minimize the  functional 
\beq
\int_{\R^4_+}  | F(x,s)|^2 dx ds                 \label{mp15}
\eeq 
subject to the initial condition
\begin{align}
\as(x,0) = A(x),\ \ \ x \in \R^3,                                                          \label{mp16}
\end{align}
wherein $A$ is a given real valued 1-form on $\R^3$. Observe first that
\begin{align}
|F(x,s)|_{\L^2}^2 =  |\as'(x, s)|_{\L^1}^2 + |d\as(x,s)|_{\L^2}^2    \label{mp17}
\end{align}
for each $(x, s)$ because the first 2-form on the right side of \eref{mp12} is orthogonal  to the second one.
In \eref{mp17} $\L^1$ denotes, as usual, the three dimensional space of linear functionals on $\R^3$
and $\L^2$ denotes the three dimensional space of skew symmetric bilinear functionals on $\R^3$.
The variational equation for the minimization problem can be derived in the  usual way  as follows.
Let $u(x,s) = \sum_{j=1}^3 u_j(x,s) dx^j$ with each coefficient lying in $C_c^{\infty} (\R^3\times (0, \infty))$.
Then we find
\begin{align}
\p_u\int_{\R^4_+}&\Big\{ |\as'(x,s) |^2  + |d\as(x,s)|^2\Big\} dxds \notag\\
&= 2\int_{\R^4_+}\Big\{ -(\as''(x,s), u(x,s)) + (d^*d \as(x,s), u(x,s))\Big\} dx ds
\end{align}
after doing an integration by parts in both time and space.
The variational equation is therefore
\begin{align}
\as'' = d^*d \as.       \label{mp20}
\end{align}
This is not quite Poisson's equation because half of the Laplacian, $-\Delta = d^*d + dd^*$,
is missing from the right hand side. But 
it is intimately related to Maxwell's theory
and we will refer to \eref{mp20} as the Maxwell-Poisson equation. It is easily  solved. 
Since $d^* = *d*$ on 2-forms over $\R^3$ we see that  $d^*d = (*d)(*d) = curl^2$. The equation \eref{mp20}
may therefore be written as $\as''(s) = C^2 \as(s)$ with $C = curl$. The function $\as(s) = e^{sC}A$ clearly solves this equation 
as does also $\as(s) = e^{-sC}A$. But if $A$ is in the strictly positive spectral subspace of curl then
$\|\as'(s)\|_{L^2(\R^3)}$ grows exponentially in the first case, making \eref{mp15}  infinite, while 
if $A$ is in the strictly negative spectral subspace of curl then \eref{mp15} is infinite in the second case.
The solution in both cases is given, for arbitrary $A \in L^2(\R^3, \L^1)$, by  
\begin{align}
\as(s) := e^{-s|C|} A,\ \ \ \ s \ge 0.       \label{mp30}
\end{align}
It solves the Maxwell-Poisson equation \eref{mp20} because  $\as'' = |C|^2 \as = C^2 \as = d^*d \as$
and also satisfies the initial condition $\as(0) = A$.
Under reasonable growth restrictions on $\as(\cdot)$ the solution given by \eref{mp30} is unique.
 We will  
always assume  uniqueness. The size of the minimum in \eref{mp15} will be discussed in the following.
It bears emphasizing that we are not making an assumption that $A$ is in Coulomb gauge in this discussion.
}
 \end{notation}

\begin{terminology}{\rm  
The unique solution to the Maxwell-Poisson equation \eref{mp20}
with initial value $A$ will be referred to as 
the {\it Poisson extension of $A$}. We define the {\it Poisson action}
of $A$ by
\begin{align}
\P(A) =   \int_0^\infty \( \|\as'(s)\|_2^2 + \| d\as(s)\|_2^2 \) ds,  
\label{mp33}
\end{align}
where $\a$ is the Poisson extension of $A$. 
We are only interested in those potentials $A$ for which $\P(A) < \infty$.
We  reiterate that $\P(A)$ is the minimum of the  integral in \eref{mp33} 
 subject to the condition that $\as(0) =A$.
}
\end{terminology}

\begin{remark}{\rm The Poisson action of a 
gauge potential $A$ can be computed explicitly.
We need not assume that $A$ is in Coulomb gauge. From \eref{mp30} we find
\begin{align}
\as'(s) &= -|C| e^{-s|C|} A     \notag \\
\|\as'(s)\|_2^2 &= \|\ |C| e^{-s|C|} A\|_2^2 = (|C|^2 e^{-2s|C|} A, A)     \notag\\
\|d\as(s)\|_2^2 & =\|*d\as(s)\|_2^2 =\| C e^{-s|C|} A\|_2^2 = (|C|^2 e^{-2s|C|} A, A). \notag 
\end{align}
Therefore
\begin{align}\P(A)  &= 2\int_0^\infty (|C|^2 e^{-2s|C|} A, A) ds \notag \\
& = (|C| A, A) .                                                                  \label{mp50}
\end{align}
A reader concerned about the lack of exponential decrease in \eref{mp30}  
 when $A$ is in the null space
of curl should observe that in this case the integrand in \eref{mp33} is identically zero.
}
\end{remark}

\begin{remark}{\rm The identity \eref{mp50} holds even if $A$ is longitudinal,
 that is, $A = d\lambda$ for some real scalar function $\lambda$. In this case $A$ is in the null space 
 of $C$ and therefore also in the nullspace of $|C|$. It follows then from \eref{mp50} that
 \begin{align}
 \P(d\lambda)= 0\ \ \text{for any function}\  \lambda:\R^3 \rightarrow \R.    
 \label{mp34}
\end{align}
 This could also be derived  directly from the definition \eref{mp33} because the solution to the
 Maxwell-Poisson equation \eref{mp20} is easily verified to be $\as(s) = d\lambda$.
 
 On the other hand, for $A$ in Coulomb gauge,  \eref{mp50} and Definition \ref{defconfig} show that  
\begin{align}
\P(A) = (|C|A, A)_2 = \|A\|_\cg^2 \ \ \ \ \text{if}\   div\, A = 0.               \label{mp35}
\end{align}
In particular if $A$ is in Coulomb gauge then $\P(A) < \infty$ if and only if $A \in H_{1/2}(\R^3)$.
  }
 \end{remark}
 
 \begin{remark}{\rm A general gauge potential $A$ can be decomposed into its longitudinal part
  and transverse part: 
  \begin{align}
  A = A_{long} + A _ {trans},\ \ \ \ A_{long} = d\lambda,\ \ \ \ div\, A_{trans} =0.
  \end{align}
  Its Poisson extension is then   given by
  \begin{align}
  \as(s) = A_{long} + e^{-s|C|} A_{trans} . \label{mp60}
  \end{align}
  Since $|C|$ has zero nullspace among the transverse potentials, the second term goes to zero
  as $s\to \infty$. Hence
  \begin{align}
\lim_{s\to \infty} \as(s) = A_{long} .               \label{mp61}
\end{align}
 The Maxwell-Poisson equation therefore  filters out the transverse part of $A$
and leaves the longitudinal part (i.e. pure gauge part) in the limit. For a non-abelian gauge
 field the  
 pure gauge part cannot be separated out from the initial potential $A$. 
Instead we will use the non-abelian analog of the Maxwell-Poisson equation to 
separate out a ``pure gauge piece'' of the initial data
by an analog of \eref{mp61}.
  }
  \end{remark}

\begin{remark} {\rm 
If $A$ is in Coulomb gauge then \eref{mp35} shows that the first
 term in the Bargmann-Wigner norm \eref{he204}   
 is equal to the Poisson action of $A$. Since the full Bargmann-Wigner norm is then determined
 by the dual space norm as in \eref{he204}, the Poisson action actually determines the full Lorentz invariant
  norm on the initial data space for Maxwell's equations.
  }
\end{remark}

\subsection{Self duality and helicity for the classical electromagnetic field}    \label{secsdm}

The four dimensional Euclidean Hodge star operation $*_e$ 
is given on the four dimensional curvature $F$ defined in \eref{mp12} by 
\begin{align}
*_e\,  F =  ds\wedge *d\as + *\as' ,                    \label{mp22}
\end{align}
wherein $*$ denotes the three dimensional Hodge star operation.
Therefore the four dimensional curvature  $F$  
is self-dual if and only if
\begin{align}
ds\wedge \as' + d\as = ds\wedge *d\as + *\as'\ \ \ \text{(self-duality)}.       \label{mp23}
\end{align}
That is
\begin{align}
\as' = *d\as      \ \ \ \ \ \text{(self-duality)}.                                        \label{mp24}
\end{align}
(Keep in mind that $*^2 = Identity$ over $\R^3$.)
Similarly $F$ is anti-self-dual if and only if
\begin{align}
\as' = - *d\as     \ \ \ \ \text{(anti-self-duality)}.                                   \label{mp25}
\end{align}

It is clear that \eref{mp24} and \eref{mp25} each imply the Maxwell-Poisson equation \eref{mp20}

        \begin{theorem}\label{thmhsa}  
                 Suppose that $div\ A=0$ and that $A \in H_{1/2}(\R^3)$.
         Let $\as(s)$ be the Poisson extension of $A$. Then

$A$ is in the negative spectral subspace of $curl$ if and only if $\as(\cdot)$ is self-dual.
 
$A$ is in the positive  spectral subspace of $curl$ if and only if $\as(\cdot)$ is anti- self-dual.

\end{theorem}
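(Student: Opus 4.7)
My plan is to reduce the self-duality and anti-self-duality conditions to algebraic identities involving the operator $C = \text{curl}$ and its absolute value $|C|$, then exploit the explicit formula \eqref{mp30} for the Poisson extension.

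First I would recall that on $1$-forms over $\R^3$ the operator $*d$ coincides with $C$, so the self-duality equation \eqref{mp24} becomes $\as'(s) = C\,\as(s)$ and the anti-self-duality equation \eqref{mp25} becomes $\as'(s) = -C\,\as(s)$. Next I would use \eqref{mp30} together with the fact that $A$ is divergence-free and lies in $H_{1/2}(\R^3)$, which by Definition \ref{defconfig} puts $A$ in the domain where $C$ has trivial nullspace and the spectral decomposition $\cg = \cg_+ \oplus \cg_-$ applies. Differentiating \eqref{mp30} gives $\as'(s) = -|C|\,\as(s)$ for every $s \geq 0$.

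Substituting this into the self-duality relation $\as'(s) = C\as(s)$ yields $(C + |C|)\,\as(s) = 0$, which is exactly the statement that $\as(s)$ lies in the negative spectral subspace of $C$ (where $|C|$ acts as $-C$). Since $|C|$ commutes with $C$, the semigroup $e^{-s|C|}$ preserves the spectral projections of $C$, so $\as(s)$ lies in the negative spectral subspace for one (equivalently every) $s \ge 0$ if and only if $A = \as(0)$ does. The converse direction is immediate: if $A$ is in the negative spectral subspace, then $|C|A = -CA$, hence $\as(s) = e^{-s|C|}A$ satisfies $\as'(s) = -|C|\as(s) = C\as(s) = *d\as(s)$, giving self-duality. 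The anti-self-dual case is the mirror image, replacing $C$ with $-C$ throughout and using $(C - |C|)\as(s) = 0$ to characterize the positive spectral subspace.

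There is no real obstacle here; the argument is essentially a one-line calculation once one observes that on the space of divergence-free $1$-forms the three-dimensional operator $*d$ is literally the operator $C$ whose spectral calculus already appears in \eqref{mp30}. The only mild care required is to note that the conclusion $(C \pm |C|)\as(s) = 0$ really does characterize the appropriate spectral subspace (which uses only that $|C| = \sqrt{C^2}$ is the positive square root) and that the trivial nullspace of $C$ on divergence-free fields prevents any ambiguity at zero.
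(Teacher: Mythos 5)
Your proof is correct, and the forward direction (negative spectral subspace $\Rightarrow$ self-dual, and its mirror image) is the same as the paper's. The converse is where you diverge. You exploit the fact that the Poisson extension is \emph{by definition} $\as(s)=e^{-s|C|}A$, so that $\as'(s)=-|C|\as(s)$ holds identically; intersecting this with the self-duality relation $\as'=C\as$ gives the algebraic identity $(C+|C|)A=0$ at $s=0$, which by the spectral calculus (the function $\lambda+|\lambda|$ vanishes exactly on $\lambda\le 0$) places $A$ in the non-positive spectral subspace, and the trivial nullspace of $C$ on divergence-free fields finishes it. The paper instead solves the self-duality ODE directly to get $\as(s)=e^{sC}A$ and then argues by contradiction: a nonzero spectral component of $A$ in $[\epsilon,\infty)$ would force $\|\as'(s)\|_{L^2}\ge c\,\epsilon\, e^{\epsilon s}$, making the Poisson action \eqref{mp33} infinite and contradicting $A\in H_{1/2}$ via \eqref{mp50}. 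Your route is shorter and purely algebraic, but it leans entirely on the assumed uniqueness that identifies the Poisson extension with $e^{-s|C|}A$; the paper's growth argument is slightly more robust in that it only uses finiteness of the action of the self-dual solution, which is the mechanism that generalizes (in spirit) to the non-abelian setting of Section \ref{secdecomp}, where no spectral calculus is available. Both are legitimate given the paper's standing uniqueness hypothesis.
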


\begin{proof} Writing $C = curl = *d$ we may write \eref{mp24} and \eref{mp25} in the form
 \begin{align}
 \as' &= C \as \ \ \ \ \ \ \text{(self-dual)}         \label{mp40}\\
\as' &= - C\as \ \ \ \ \text{(anti-self-dual)}  .    \label{mp41}
\end{align}
If $A$ lies in the negative spectral subspace of curl then $e^{-s|C|}A = e^{sC}A$ and therefore the solution
\eref{mp30} reduces to $\as(s) = e^{sC} A$. Consequently $\as'(s) = C\as(s)$ and therefore
 $\as(\cdot)$ is self-dual. 
 
        Conversely, suppose that $\as(\cdot)$ is self-dual. Then by  \eref{mp40} we have
 $\as(s) = e^{sC}A$ and  therefore $\|\as'(s)\|_{L^2( \R^3)} = \|Ce^{sC}A\|_{L^2( \R^3)}$.
 If $A$ has a non-zero spectral component for 
$C$ in the spectral subspace $[\epsilon, \infty)$ with some $\epsilon >0$ then, by the spectral theorem,
$\|Ce^{sC}A\|_{L^2( \R^3)} \ge  c \epsilon \exp (\epsilon s)$ for some strictly positive constant $c$. 
 Therefore $\as(\cdot)$
does not have finite Poisson action and $A$ is not in $H_{1/2}$. So $A$ must lie in the spectral subspace $(-\infty, 0]$ for $C$
if $\as$ is self-dual. 

 A similar argument applies to show that the Poisson extension of $A$ is anti-self-dual if and
  only if $A$ lies in the positive spectral  subspace of curl.
  \end{proof}

\bigskip
\noindent
The following corollary is the main result of this section.
        \begin{corollary}\label{corasd}   
 Suppose that $A$ is in the electromagnetic configuration space $\cg$ and $\as$ is
its Poisson extension. Then 

$A$ is in $\cg_+$ if and only if $\as$ is anti-self-dual.

$A$ is in $\cg_-$ if and only if $\as$ is self-dual.
\end{corollary}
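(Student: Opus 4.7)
The plan is to observe that Corollary \ref{corasd} is essentially a repackaging of Theorem \ref{thmhsa} in the language of the configuration space $\cg$ introduced in Definition \ref{defconfig}. So the strategy is to verify that the hypotheses of Theorem \ref{thmhsa} are met for every $A \in \cg$ and to match the spectral-subspace descriptions of $\cg_\pm$ with the conclusions of that theorem.

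First, I would unpack the definition of $\cg$. By \eqref{he200}, any $A \in \cg$ satisfies $\func{div}\, A = 0$, and by \eqref{he201}, $\|\, |C|^{1/2} A\|_{L^2} < \infty$, which is exactly the condition $A \in H_{1/2}(\R^3)$. These are precisely the hypotheses required by Theorem \ref{thmhsa}. Then $A$ admits a well-defined Poisson extension $\as(s) = e^{-s|C|} A$ as in \eqref{mp30}, and Theorem \ref{thmhsa} applies to $A$.

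Next, I would match up the spectral descriptions. By \eqref{he202}, $A \in \cg_+$ means precisely that $A \in \cg$ and $A$ lies in the positive spectral subspace of $\func{curl}$; by the second conclusion of Theorem \ref{thmhsa}, this is equivalent to $\as$ being anti-self-dual. Similarly, by \eqref{he203}, $A \in \cg_-$ means $A \in \cg$ and $A$ lies in the negative spectral subspace of $\func{curl}$; by the first conclusion of Theorem \ref{thmhsa}, this is equivalent to $\as$ being self-dual. These two equivalences together are exactly the statement of the corollary.

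There is essentially no obstacle here, since the corollary is a direct translation of Theorem \ref{thmhsa} into the geometric language of configuration space. The only point that deserves a brief note is that the Coulomb gauge condition $\func{div}\, A = 0$ and the finiteness of $\|\, |C|^{1/2} A\|_{L^2}$ are already built into membership in $\cg$, so no additional gauge or integrability assumption has to be checked separately.
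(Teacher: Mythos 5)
Your proof is correct and follows exactly the paper's own route: the corollary is a restatement of Theorem \ref{thmhsa} once one notes that membership in $\cg$ already encodes the hypotheses $div\, A = 0$ and $A \in H_{1/2}$, and that $\cg_\pm$ are defined via the spectral subspaces of curl. The extra detail you supply in checking the hypotheses is a welcome elaboration of the paper's one-line argument.
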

       \begin{proof} This just restates Theorem \ref{thmhsa} in view of  Definition \ref{defconfig}.
\end{proof}

\section{Yang-Mills fields} \label{secym}

Both the Lorentz invariant norm and helicity  for the electromagnetic field can  be characterized 
 in terms of either plane waves, the operator curl,  or the Maxwell-Poisson semigroup, as we saw in Sections \ref{secpw} and \ref{sechsd}.
    Since the classical Yang-Mills equations are non-linear,  there is no useful 
gauge invariant representation
of these fields in terms of plane waves. 
 In this section we are going to 
 describe  the Yang-Mills analog of the Maxwell-Poisson semigroup and then use it to
  produce both a Riemannian metric  
  on configuration space   and  a decomposition of 
 configuration space into positive and negative ``helicity'' submanifolds.  
 Here is an outline of the transition steps from electromagnetism to Yang-Mills.

{\it i})  The  norm on the (linear) configuration space of the electromagnetic field
 will be replaced by a Riemannian
 metric on the (non-linear) configuration space $\cg$ of the Yang-Mills field.  The Yang-Mills-Poisson
 equation will be instrumental in this construction. (Section \ref{seccsym}.)
 
 {\it ii}) The  helicity subspaces $\cg_\pm$ in the linear decomposition \eref{mp5} will be replaced  by
 submanifolds $\cg_\pm$  determined by (anti-)self duality of solutions to the Yang-Mills-Poisson equation.
 (Section \ref{seccpm}.)
 
 {\it iii}) The orthogonal projections onto the
 helicity subspaces $\cg_\pm$ in the linear decomposition \eref{mp5} will be replaced
  by flows along orthogonal vector fields  over the manifold $\cg$, yielding non-linear
  analogs of these projections. (Section \ref{secflows}.)

 {\it iv})    The spectral behavior of the operator curl in the electromagnetic helicity spaces
      will be replaced by a 
      corresponding behavior of the gauge covariant curl operator acting not on the
       configuration space $\cg$ but on the tangent
      spaces $T(\cg_\pm)$.      (Section \ref{secflows}.)

\section{The configuration space for Yang-Mills fields} \label{seccsym}
\subsection{The Poisson action and configuration space}  \label{secpacs}

Denote by $K$ a compact Lie group contained in the orthogonal or unitary group
 of a finite dimensional inner product space $V$. 
 $\kf$ will denote its Lie algebra, on which we assume given 
   an $Ad\, K$ invariant real inner product  $\<\cdot, \cdot\>_\kf$.
Let 
\begin{align}
\as(s) := \sum_{j=1}^3 \as_j(s) dx^j,\ \ \ s \ge 0     \label{py2}
\end{align}
be a 
$\kf$ valued 1-form on $\R^3$ for each (Euclidean) time $s \ge 0$. 
Here each coefficient $\as_j(x,s)$ lies in $\kf$.
Denote by
\begin{align} 
\bs(s) :=d\as(s) +   \as(s) \wedge \as(s)            
       \label{py3}
\end{align}
its three dimensional curvature and denote by  $\as'(s)$  its derivative with respect to $s$.
We may regard $\as$ as a 1-form on a half space of $\R^4$ in temporal gauge. 
That is, it has  no $ds$ component.  
Its four dimensional curvature is then given by
 \begin{align}
 F = ds\wedge \as'(s) + \bs(s).                          \label{py4}
 \end{align}
Since the two summands are mutually orthogonal at each point $(x,s)$ we have
\begin{align}
|F(x,s)|_{\L^2\otimes\kf}^2 = 
|\as'(x, s)|_{\L^1\otimes \kf}^2 + |\bs(x,s)|_{\L^2\otimes \kf}^2    \label{py5}    
\end{align}
in analogy with \eref{mp17}. 
We are interested in the minimization of the  functional 
\begin{align}
\int_0^\infty \Big( \| \as'(s)\|_{L^2(\R^3)}^2 + \| \bs(s)\|_{L^2(\R^3)}^2 \Big) ds          \label{py6}
\end{align}
subject to an initial condition $\as(0) = A$,
where $A$ is a given $\kf$ valued 1-form on $\R^3$. In view of \eref{py5} this functional of $\as$ is also
given by the familiar expression \eref{mp15} but with $F$ now given by \eref{py4}.

The Euler equation for this minimization problem is the Yang-Mills-Poisson equation
\begin{align} 
\as''(s) = d_{\as(s)}^* \bs(s), \ \ \ \ \ \ \as(0) = A.                     \label{pe6}
\end{align}
Here we are using $d_A$ to denote the gauge covariant exterior derivative, which is given on
any $\kf$ valued p-form on $\R^3$ by $d_A \w = d \w + [A\wedge \w]$, while 
$d_{\as(s)}^*: L^2(\R^3; \L^2\otimes \kf) \rightarrow L^2(\R^3; \L^1\otimes \kf) $  denotes the adjoint of $d_{\as(s)}$    
 in \eref{pe6}. The suggestive notation $[u\wedge v]$ for $\kf$ valued forms means to use commutator 
 of the coefficients with wedge product of the differentials. 
 For example $[(\sum_i u_idx^i)\wedge(\sum_j v_j dx^j)]  = \sum_{ij} [u_i, v_j] dx^i\wedge dx^j$.
 Later we will also need to use the adjoint of this product, given by $\<[u\lrc v], \w\> = \<v, [u\wedge \w]\>$,
 which must hold for each point in $\R^3$ and for all $\kf$ valued forms $\w$ such
  that $ deg\, u + deg\,\w = deg\, v$.  
  $\<\cdot, \cdot\>$ is the inner product on forms induced by the given inner product on $\kf$.

The Yang-Mills-Poisson equation \eref{pe6} is equivalent to the elliptic Yang-Mills equation
 for forms in temporal  gauge since 
$d_\as^{(*_e)}F =-\as'' + d_\as^*\bs$, where $*_e$ is the Euclidean Hodge star operator. 
This is an elliptic non-linear boundary value problem which has been investigated in ground breaking
work by A. Marini and T. Isobe. \cite{Ma2,Ma3,Ma4,Ma5,Ma6,Ma7,Ma11,Ma12,Ma13}. 
For a bounded open set in $\R^4$ they have proven  existence and non-uniqueness of solutions to the 
minimization problem if one allows  the solution to take on the prescribed boundary values in a weak
sense, namely equality only up to local gauge transformations.
     Our half space problem can be conformally transformed into their setting. 
     But we need a stronger notion of attainment of the boundary value at $s=0$, namely actual equality. 
 However  we can allow a singularity at the point of their boundary corresponding to  $s =\infty$.     
  Thus at the present time it is not inconsistent with their work to assume existence and uniqueness
  for solutions to the Yang-Mills-Poisson equation \eref{pe6} under some appropriate conditions on $A$.  
 We will assume throughout the remainder of this paper that for the potentials $A$ of interest to
 us  existence and uniqueness holds for the boundary value problem \eref{pe6} with finite value
 of the action functional in \eref{py6}. In Example \ref{inst} we will see how instantons give rise to such solutions.

 \begin{notation}{\rm (Poisson action)
 Define
 \begin{align}
 \P(A) = \int_0^\infty \Big( \| \as'(s)\|_{L^2(\R^3)}^2 + \| \bs(s)\|_{L^2(\R^3)}^2 \Big) ds,  \label{py8}
 \end{align}
wherein $\as$ is the solution to the boundary value problem \eref{pe6}. We will refer to $\as$ as
 the {\it  Poisson extension} of $A$ and to $\P(A)$ as the {\it Poisson action}  of $A$.
 }
\end{notation}

 A gauge function $g:\R^3\rightarrow K$ acts on a potential $A$ via the action 
 $A\mapsto A^g:= g^{-1}A g + g^{-1} dg$.
 The Poisson action is a gauge invariant function of $A$. That is,
 \beq
\P(A^g) = \P(A).    \label{py12}
\eeq
To see this, suppose that  
$\as$ is a solution to the 
Yang-Mills-Poisson equation \eref{pe6}. Then the function
\beq
\as^g(s) := g^{-1} \as(s) g + g^{-1} dg     \label{py10}
\eeq
is again a solution to \eref{pe6}, as we can see from the identities 
$(d/ds)^n \as^g(s) = g^{-1}(d/ds)^n \as(s) g, n = 1,2$ and 
$d_{\as^g(s)}^* \bs^g(s)  = g^{-1} \(d_{\as(s)}^* b(s)\) g$. Of course $\as^g(0) = A^g$. 
And since $\bs^g(s) = g^{-1} \bs(s) g$ we see also that  the functional in \eref{py8} has the same value
for $\as^g$. This proves \eref{py12}.

As in the electromagnetic case $\P(A)$ is zero on pure gauges. (cf. \eref{mp34}.)
Indeed, if $A$ is  a pure gauge, say $A = g^{-1} dg$, then its Poisson extension is given by
$\as(s) = g^{-1} dg$ for all $s \ge 0$ since both $\as'(s) =0$ and $\bs(s) =0$ for this function, which shows
that \eref{pe6} holds and also that 
\beq
\P(g^{-1} dg) =0.    \label{py13}
\eeq

We wish to use the function $\P(A)$ to construct the relativistically correct configuration space
as we did in the electromagnetic case. Nominally, the configuration space is given by $\cg = \A/\G$
for some appropriate set  $\A$ of gauge potentials on $\R^3$ and corresponding gauge group $\G$.
But \eref{py13} shows that the restriction $\P(A) < \infty$  provides no control over $A$ in some directions
(the longitudinal directions). 
 Defining $\A$ simply by the condition $\P(A) < \infty$ 
 would lead therefore to an unsatisfactory candidate for $\A$, a circumstance
which we avoided in the electromagnetic case  by imposing the Coulomb gauge on $A$.
       In our non-abelian case we want to
avoid such a gauge choice  because of the ever lurking Gribov ambiguity. We will instead
put an indirect size restriction on the 
gauge potentials  allowed into our space $\A$ 
by making use of the filtering action of the Yang-Mills-Poisson  
equation. 

Let 
\begin{align}
\| \w\|_{H_{1/2}} = \|(-\Delta)^{1/4} \w\|_{L^2(\R^3)}    \notag
\end{align}
for any $\kf$ valued 1-form on $\R^3$. Here  we are using the Laplacian $-\Delta := d^*d +dd^*$ on 
$\kf$ valued 1-forms over $\R^3$.
We have chosen the Sobolev index $1/2$ because in the electromagnetic case the size restriction
$\P(A) < \infty$ gives exactly $H_{1/2}$ gauge potentials when the Coulomb gauge is imposed,
 as we saw in \eref{mp50}.  We are going to impose an $H_{1/2}$ restriction on the ``filtered longitudinal
 part'' of $A$ as follows.

Suppose that $\as(s)$ is a solution to \eref{pe6} and 
 that 
 \begin{align}
 \as_\infty :=\lim_{s\rightarrow \infty} \as(s)
 \end{align}
  exists (for example in the sense of $L^2(\R^3)_{loc}$).
 We saw in \eref{mp61} that $\as_\infty$ exists in the electromagnetic case and is precisely
  the longitudinal part (i.e. the pure gauge part)  of the initial potential $A$. 

\begin{notation} \label{notPAC} {\rm ($\Ra$, $\A$ and $\cg$)
Let 
\begin{align}
\Ra = \{ &\text{solutions $\as$ to the differential equation in \eref{pe6} such that}\notag\\
&\qquad \P(\as(0)) + \|\as_\infty\|_{H_{1/2}}^2 < \infty\}.       \label{py14}
 \end{align}
 The gauge potentials of interest to  us will be the set of initial values of such solutions:
 \begin{align}
 \A = \{ \as(0): \as \in \Ra\}.           \label{py15}
 \end{align}
We will say that a gauge potential $A$ on $\R^3$ is in the {\it soft Coulomb gauge} if it lies in $\A$,
and in the  {\it asymptotic Coulomb gauge} if in addition $\as_\infty =0$.
 The examples we construct from instantons below will be in the asymptotic Coulomb gauge.

Having now chosen the space $\A$ in a way that parrots the electromagnetic case, the gauge group
$\G$ appropriate for this choice must be chosen so as to preserve $\A$ as a set and also
 to be isometric for the as yet to be determined Riemannian metric on $\A$.  
      If $\as$ is the Poisson extension of $A$ then $\as^g$ is the Poisson extension of $A^g$ as we have seen
      in the proof of  \eref{py12}. Since $\lim_{s\to \infty} \as^g(s) = (\as_\infty)^g$ we need
\begin{align}
\as_{\infty}^g \in H_{1/2} \ \ \text{whenever} \ \   \as_\infty \in H_{1/2}     \notag
\end{align}
in order for $\A$ to be a gauge invariant set under the gauge function $g$. 
In view of \eref{py10} we therefore need $g^{-1}dg$ to be itself in $H_{1/2}$.
With this as motivation we define
\begin{align}
\G = \{&\text{gauge functions $g:\R^3 \rightarrow K$  such that  $g^{-1}dg$ is in $H_{1/2}$} \notag\\
&\qquad\qquad\text{and  $g(x) \rightarrow I$ as $x \rightarrow \infty$}\}.  \label{py17}
\end{align}
The second condition reflects  the fact that we are interested in charge zero.
The relation between the behavior at infinity of a gauge function on the one hand and
 total charge on the other results from the well known
relation between gauge invariance and conservation of charge. This will be elaborated on in a future work.

It happens that $\G$  is a complete topological group in its natural metric under
 pointwise multiplication.   It is also the critical gauge group for three spatial dimensions in the sense 
 that for any $\epsilon >0$, the set of gauge functions for which $g^{-1}dg$ lies in the (locally) 
 larger space  $H_{1/2 - \epsilon}$   is no longer a group:  It is not closed under multiplication.
 See \cite{G70} for further properties and proofs.
 Combining the set $\A$ of ``$H_{1/2}$''  gauge potentials with the matching group $\G$ of
  ``$H_{3/2}$'' gauge functions,  we define the configuration space for the non-abelian gauge field to be
\begin{align}
\cg = \A / \G.                                  \label{py18}
\end{align}
}
\end{notation}
We will   make $\A$          
into a Riemannian manifold with a $\G$ invariant metric in the next two sections.
To this end we will   gather  information on the derivatives of $\P$ 
 in the next section. 
This produces, of course, a Riemannian metric on $\cg$ also.

\begin{remark}\label{remsoft}{\rm 
 In addition to the fundamental mathematical
 question as to the existence
and uniqueness of solutions to the Yang-Mills-Poisson equation \eref{pe6}, these definitions 
raise the   question as to whether
 the limit $\as_\infty$ actually exists
for some appropriate 
 class of gauge potentials $A$.  
 For the Yang-Mills heat equation a similar question has been addressed in many works.
 See e.g. 
 \cite{HT1,HT2,HTY},  
 \cite{CG2}, \cite{Ja5},   
\cite{OT5} and references therein.  
 }
\end{remark}

\subsection{Restriction  of instantons}         \label{secinst}

\begin{example}\label{inst}{\rm (Restriction of instantons) 
Finite action solutions to the Yang-Mills-Poisson equation can be constructed from instantons
by gauge transforming an
instanton  to the Euclidean temporal gauge and then restricting it 
to a half-space.
Take, for example,  $K = SU(2)$. The simplest instanton is given by
\begin{align}
A_\mu(x) =\s_{\mu,\nu}x^\nu /(|x|^2 +\rho^2), \ \ \ x = (\xb, s) \in\R^4. \notag
\end{align}
where $\s_{\mu,\nu} =-\s_{\nu,\mu} \in \kf =su(2)$ and $\rho >0$ and constant. 
 Its curvature is given by
\begin{align}
F_{\mu,\nu} =-4\s_{\mu,\nu}\frac{\rho^2}{(|x|^2 +\rho^2)^2}.   \notag
\end{align}
See e.g. \cite{BC1} or  \cite{At} or  \cite[Eq. (21.26)]{Shif} or \cite{VV}.

We want to gauge transform $A$ into the Euclidean temporal gauge, i.e., so that its new fourth
 component is equal to zero.
For a gauge function $g:\R^4\rightarrow SU(2)$ we have $(A^g)_4 =g^{-1} A_4 g + g^{-1} \p_s g$.
So we want to choose $g$ so that $A_4 + (\p_s g)g^{-1} =0$. We may write
\begin{align}
A_4(\xb, s) = \frac{\bf u}{s^2+a^2},    \notag
\end{align}
where ${\bf u} =\sum_{j=1}^3 \s_{4,j} \xb^j \in \kf$ and $a^2 = |\xb|^2 + \rho^2$. 
The solution which is the identity element of $SU(2)$ at $s =0$ is given by Bitar and Chang
\cite[Equ. (4.4)]{BC1} as $g_{bc}(\xb, s) =\exp(-({\bf u}/a) tan^{-1}(s/a))$.
For $0\le s < \infty$, $ g_{bc}(\cdot, s)$ lies in our  critical gauge group $\G$, 
but $g_{bc}(\cdot, \infty)$ does not.
We need to use the solution which is the identity element at $s =\infty$.
It is given by 
\begin{align}
g(\xb,s) = \exp\Big(\frac{{\bf u}}{a}\Big\{(\pi/2)- tan^{-1}(s/a)\Big\}\Big), \ \ \  (\xb,s) \in \R^4.   \notag
\end{align}
Then $(A^g)_4 = 0$ and 
\begin{align}
\as_j(\xb, s) &\equiv (A^g)_j(\xb, s)         \notag\\
&=g(\xb, s)^{-1} A_j(\xb,s) g(\xb, s) +g(\xb,s)^{-1} \p_jg(\xb,s),\   j=1,2,3.    \notag
\end{align}
Since $g(\xb, \infty) = I$ and $A_j(\xb, \infty) = 0$ we see that $(A^g)_j(\xb, \infty) =0$.
So $A^g$ is in Euclidean temporal gauge and also in asymptotic Coulomb gauge.
Since it is the four dimensional gauge transform of an instanton it satisfies 
$d_{A^g}^{{*_e}} F(A^g) =0$, which is Poisson's equation \eref{pe6} for $\as$.
Its curvature
is square integrable over all of $\R^4$ and therefore over the half-space $\R^4_+$.
Hence $\as$ has finite action and is in asymptotic coulomb gauge. In particular its initial
 value, $\as(0)$, lies in  $\A$.

Since, by definition, an instanton has finite action, this construction applies to any instanton
 as long as the time dependent gauge transformation
    $g(\xb,s)$ needed to transform the instanton to temporal gauge can be chosen to be the identity
    of the group $K$ at $s = \infty$.

}
\end{example}

\subsection{Derivatives of  the Poisson action. } \label{secderivs}
\begin{notation}\label{nottangent}{\rm (Variational equation)
If $A(t)$ is a time dependent family of gauge potentials on $\R^3$ and if 
$u := (d/dt)A(t)\Big|_{t=0}$ then $u$ is a $\kf$ valued 1-form on $\R^3$ and measures
the variation of $A(t)$ at $t =0$. For any function $f$ on $\A$ the derivative of $f$ at $A$ in the
 direction $u$ is given, in accordance with the chain rule, by 
\begin{align}
(\p_u f)(A) =  (d/dt) f(A(t))|_{t=0},          \label{py19}
\end{align}
where $A = A(0)$.

Suppose that for each t  the Poisson extension
of $A(t)$ is $\as_t(s)$.  The Poisson equation $\as_t''(s) = d_{\as_t(s)}^* \bs_t(s)$  
can be differentiated
with respect to $t$ at $t =0$ to find the variational equation associated to the variation $u$.
Define  $\ua(s) = (\p/\p t) \as_t(s)\Big|_{t=0}$. Then we find for $\ua(s)$ the 
variational equation
\begin{align}
\ua''(s) = d_{\as(s)}^* d_{\as(s)} \ua(s)  + [ \ua(s)\lrc \bs(s)].    \label{pe8}
\end{align}
Here $\as(s)\equiv \as_0(s)$ is the Poisson extension of $A \equiv A(0)$ and 
$\ua:[0, \infty) \rightarrow L^2(\R^3; \L^1\otimes \kf)$  is a function satisfying $\ua(0) =u$
in addition to \eref{pe8}.   
The last term in \eref{pe8} has been defined in the paragraph after Eq. \eref{pe6}.
In the following we will always assume that for a variation $u$ of interest to us there is a unique solution
to the variational equation  \eref{pe8} with initial value $u$ and such that $\ua(s)\rightarrow 0$ as $s\to \infty$.
We will refer to $\ua(s)$ as the {\it Poisson extension of $u$ along $\as$}.
}
\end{notation}

We can compute the derivative of the Poisson action as follows.

       \begin{lemma} \label{lempe2} 
       Let $u$ be a variation of a point  $A$ in $\A$,
        i.e. $u$ is a $\kf$ valued        1-form on $\R^3$. 
   If $\as(s)$ is the Poisson extension of $A$ then
\begin{align}
\p_u \P(A) = - 2(u, \as'(0))_{L^2(\R^3;\L^1\otimes \kf)}  .          \label{pe10}
\end{align}
\end{lemma}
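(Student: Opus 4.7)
The plan is to compute $\partial_u \mathcal{P}(A)$ by differentiating under the integral using a variation $A(t)$ with $A'(0)=u$, then integrating by parts in $s$, and finally exploiting the Yang-Mills-Poisson equation to cancel the bulk contribution, leaving only a boundary term at $s=0$.

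First I would introduce $\mathsf{a}_t(s)$, the Poisson extension of $A(t)$, and set $\mathsf{u}(s) = \partial_t \mathsf{a}_t(s)|_{t=0}$ to be the Poisson extension of $u$ along $\mathsf{a}$, as in Notation \ref{nottangent}. Since $\mathsf{b}_t(s) = d\mathsf{a}_t(s) + \mathsf{a}_t(s)\wedge \mathsf{a}_t(s)$, differentiating in $t$ at $t=0$ gives $\partial_t \mathsf{b}_t(s)|_{t=0} = d_{\mathsf{a}(s)} \mathsf{u}(s)$. Then differentiating \eqref{py8} under the integral (assumed to be justified for variations of interest) yields
\begin{align*}
\partial_u \mathcal{P}(A) = 2\int_0^\infty \Big[(\mathsf{a}'(s), \mathsf{u}'(s))_{L^2} + (\mathsf{b}(s), d_{\mathsf{a}(s)} \mathsf{u}(s))_{L^2}\Big] ds.
\end{align*}

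Next I would integrate by parts in $s$ on the first term, using that $\mathsf{u}(s) \to 0$ as $s\to \infty$ (part of the standing assumption on the variational extension in Notation \ref{nottangent}):
\begin{align*}
\int_0^\infty (\mathsf{a}'(s),\mathsf{u}'(s))_{L^2}\, ds = -(\mathsf{a}'(0), u)_{L^2} - \int_0^\infty (\mathsf{a}''(s), \mathsf{u}(s))_{L^2}\, ds.
\end{align*}
For the second term I would use the adjoint relation $(\mathsf{b}(s), d_{\mathsf{a}(s)}\mathsf{u}(s))_{L^2} = (d^*_{\mathsf{a}(s)} \mathsf{b}(s), \mathsf{u}(s))_{L^2}$, which holds pointwise in $s$ by the definition of $d^*_{\mathsf{a}(s)}$ given after \eqref{pe6}.

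Combining, the bulk integrand becomes $\big(d^*_{\mathsf{a}(s)} \mathsf{b}(s) - \mathsf{a}''(s), \mathsf{u}(s)\big)_{L^2}$, which vanishes identically because $\mathsf{a}$ satisfies the Yang-Mills-Poisson equation \eqref{pe6}. What remains is precisely the boundary contribution $-2(\mathsf{a}'(0), u)_{L^2}$, giving \eqref{pe10}.

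The main obstacle is the analytic justification of the two formal steps: interchanging $\partial_t$ with the $s$-integral (one needs the variation $\mathsf{u}(\cdot)$ and the one-parameter family $\mathsf{a}_t(\cdot)$ to be regular enough in $(t,s)$), and vanishing of the boundary term $(\mathsf{a}'(s), \mathsf{u}(s))_{L^2}$ at $s=\infty$. The first is an existence/uniqueness question of the type already flagged in Section \ref{secpacs} and Remark \ref{remsoft}; the second follows once $\mathsf{u}(s)\to 0$ in $L^2$, which is the decay condition built into the definition of the Poisson extension of $u$ in Notation \ref{nottangent}. The algebraic core of the argument, however, is clean: it is exactly the derivation of the Euler equation \eqref{pe6} read in reverse, with the boundary term at $s=0$ retained rather than discarded.
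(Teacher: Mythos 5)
Your proposal is correct and follows essentially the same route as the paper: differentiate the action under the $s$-integral along the curve $A(t)$, convert the bulk integrand to $\big(d^*_{\as(s)}\bs(s)-\as''(s),\ua(s)\big)$ via integration by parts in $s$ and the adjoint relation, kill it with the Yang--Mills--Poisson equation, and retain the boundary term $-2(u,\as'(0))$ at $s=0$ using the decay $\ua(s)\to 0$ from Notation \ref{nottangent}. The paper writes the integration by parts as a total $s$-derivative evaluated from $0$ to $\infty$ rather than as a separate boundary term, but this is only a cosmetic difference.
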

       \begin{proof} Denote by $\ua(s)$ the Poisson extension of $u$.
Suppose that $A(t)$ is a curve with $A(0) = A$ and $(d/dt)A(t)|_{t=0} = u$ and that $\as_t(s)$ is the Poisson 
extension of $A(t)$. Then we have, at $t=0$, 
\begin{align*}
(d/dt) \| \as_t'(s)\|^2 = 2(\as'(s), \ua'(s))\  \text{and}\  (d/dt) \| \bs_t(s)\|^2 = 2(\bs(s), d_{\as(s)} \ua(s)).
\end{align*}
Therefore, using the Poisson equation \eref{pe6} in the third line below, we find 
\begin{align}
\p_u &\P(A) = 2 \int_0^\infty \Big( (\as'(s), \ua'(s)) + (\bs(s), d_{\as(s)} \ua(s)) \Big) ds    \label{pe11}\\
& = 2\int_0^\infty \(\Big\{(d/ds)(\as'(s), \ua(s))\Big\} - (\as''(s), \ua(s)) +  (d_\as(s)^*\bs(s), \ua(s))\) ds \notag\\
&= 2 (\ua(s), \as'(s))\Big|_0^\infty  \notag\\
&= -2(\ua(0), \as'(0)).   \notag
\end{align}
\end{proof}

\begin{remark}\label{remlong}{\rm (Longitudinal and  transverse) 
         The decomposition of an electromagnetic
field into longitudinal and transverse parts is not a gauge invariant decomposition. 
Nor  is there any  gauge invariant analog of this  decomposition for non-abelian gauge fields.
 But for the tangent spaces to the manifold 
 $\A$    there is a 
 well known gauge invariant  decomposition analogous to the Coulomb gauge. 
 The terminology, vertical and horizontal, for the differential analogs of longitudinal and transverse
   goes back to Ambrose and Singer \cite{AS}. This  will be reviewed here in our special context
 to help establish notation.
 }
\end{remark}

\begin{notation}\label{notvh}{\rm (Vertical and horizontal)  
If $\lambda:\R^3\rightarrow \kf$ is a reasonable function then $g(t,x): = \exp(t\lambda(x))$ defines a curve $t\rightarrow g(t)$
in the gauge group $\G$. Given a point $A \in \A$ the curve $t\mapsto A^{g(t)}$ is a curve in $\A$. 
Since, at $t=0$ one has $(d/dt)A^{g(t)} =[A, \lambda] + d\lambda$, the tangent to this curve in $\A$ at
$t =0$ is given by the 1-form $d_A \lambda := d\lambda + [A, \lambda]$. The {\it vertical subspace}
 of $T_A(\A)$
is by definition the set of tangent vectors to these curves at $A$. They consist therefore of $\kf$ valued 1-forms 
 $d_A \lambda$ with $\lambda$ running
 over the Lie algebra of $\G$. The  subspace  of $L^2$  orthogonal to the vertical subspace is
  called the {\it horizontal} subspace. Thus a $\kf $ valued 1-form $u$ on $\R^3$ is
 \begin{align}
& \text{ {\it vertical at $A$} if}\ u =d_A \lambda\ \ \text{for some scalar function $\lambda$}.\\
&  \text{ {\it horizontal at $A$} if}\ d_A^* u =0.
\end{align}
This is the standard terminology associated to the Coulomb connection for the $\G$ bundle $\A$,  \cite{Si2}.
      In this discussion we have made the usual identification of a tangent vector to $\A$ with a $\kf$ valued
      1-form on $\R^3$. The corresponding action of a gauge function on such a form is given by
      $u\mapsto u^g := g^{-1} u g$.
}
\end{notation}

     \begin{lemma} \label{lempy7}  
 If $f:\A \rightarrow \C$ is gauge invariant then
\begin{align}
\p_v f(A) = 0
\end{align}
for any vertical vector $v = d_A \lambda$.
\end{lemma}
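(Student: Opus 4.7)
The plan is essentially to unwind the definitions: a vertical vector $v=d_A\lambda$ is, by construction in Notation \ref{notvh}, the velocity at $t=0$ of the gauge orbit curve $t\mapsto A^{g(t)}$ with $g(t)=\exp(t\lambda)$. Gauge invariance of $f$ says that $f$ is constant along this curve, so its derivative at $t=0$ vanishes. Concretely, I would write
\[
\p_v f(A) \;=\; \frac{d}{dt}\Big|_{t=0} f(A^{g(t)}) \;=\; \frac{d}{dt}\Big|_{t=0} f(A) \;=\; 0,
\]
invoking \eqref{py19} for the first equality and gauge invariance $f(A^{g(t)})=f(A)$ for the second.

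The only substantive thing to verify is that the curve $t\mapsto g(t)=\exp(t\lambda)$ actually lives in the group $\G$ from \eqref{py17}, so that gauge invariance legitimately applies. This requires choosing $\lambda$ in the Lie algebra of $\G$, i.e. a $\kf$-valued function on $\R^3$ with $\lambda(x)\to 0$ at infinity and enough regularity that $g(t)^{-1}dg(t) = $ an $H_{1/2}$ $\kf$-valued $1$-form for each $t$. Under such a choice, $d_A\lambda = d\lambda + [A,\lambda]$ lies in $T_A(\A)$ and the computation of $(d/dt)|_{t=0} A^{g(t)} = d_A\lambda$ recorded in Notation \ref{notvh} is justified, so the chain rule in the form \eqref{py19} applies unproblematically.

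I do not anticipate any real obstacle, since the content of the lemma is just the infinitesimal version of gauge invariance together with the identification of the vertical subspace as the tangent space to the $\G$-orbit through $A$. If anything, the only subtlety worth a sentence is that the definition of $\G$ restricts $\lambda$ to decay at infinity, so the class of allowed vertical directions is exactly the class for which the above one-parameter family is an admissible curve in $\A$; both sides of the identity $\p_v f(A)=0$ are then meaningful and the argument is three lines.
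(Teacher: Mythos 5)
Your argument is exactly the paper's proof: take the one-parameter gauge curve $g(t)=\exp(t\lambda)$, use $f(A^{g(t)})=f(A)$ and the fact that $(d/dt)\big|_{t=0}A^{g(t)}=d_A\lambda=v$, then apply the chain rule \eqref{py19}. Your extra remark about $\lambda$ lying in the Lie algebra of $\G$ is a reasonable precaution the paper leaves implicit, but the substance is identical.
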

       \begin{proof} Let $g(t, x)) = \exp(t \lambda(x))$. By assumption $f(A^{g(t)}) = f(A)$. 
But at $t =0$, $(d/dt) A^{g(t)} = v$. Therefore $(\p_vf)(A) =0$ by the chain rule.
\end{proof}

\begin{corollary} \label{corpy6}  
If $\as(\cdot)$ is a solution to the Yang-Mills-Poisson
 equation of finite action and $\as(0) = A$ 
then $\as'(0)$ is horizontal at $A$. That is, 
\begin{align}
d_A^* \as'(0) = 0.   \label{py20}
\end{align}
Moreover
\begin{align}
d_{\as(s)}^* \as'(s) = 0\ \ \  \forall s \ge 0.      \label{py25}
\end{align}
\end{corollary}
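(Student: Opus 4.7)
\textbf{Proof plan for Corollary \ref{corpy6}.} The plan is to combine the variational formula of Lemma \ref{lempe2} with the gauge invariance of the Poisson action and Lemma \ref{lempy7}, and then exploit translation invariance in the $s$-variable of the Yang-Mills-Poisson equation to promote the result from $s=0$ to all $s \ge 0$.

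First, for \eref{py20}, let $\lambda:\R^3\to\kf$ be a smooth compactly supported function, so that $v:=d_A\lambda$ is a vertical tangent vector at $A$. By the gauge invariance \eref{py12}, $\P$ is a gauge invariant function on $\A$, and hence Lemma \ref{lempy7} gives $\p_v \P(A) = 0$. On the other hand, Lemma \ref{lempe2} identifies this derivative as
\begin{equation}
0 = \p_v \P(A) = -2(d_A\lambda,\, \as'(0))_{L^2(\R^3;\,\L^1\otimes\kf)} = -2(\lambda,\, d_A^* \as'(0))_{L^2(\R^3;\kf)},
\end{equation}
where the last step is the definition of the adjoint $d_A^*$. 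Since this holds for a dense set of test functions $\lambda$, we conclude $d_A^* \as'(0) = 0$, which is \eref{py20}.

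For \eref{py25}, the key observation is that the Yang-Mills-Poisson equation \eref{pe6} is invariant under $s$-translation. Fix any $s_0 \ge 0$ and define the shifted profile $\tilde\as(s) := \as(s_0 + s)$ for $s \ge 0$. Then $\tilde\as$ satisfies $\tilde\as'' = d_{\tilde\as}^* \tilde\bs$ with initial value $\tilde\as(0) = \as(s_0)$, and its action is $\int_{s_0}^{\infty}(\|\as'(s)\|_2^2 + \|\bs(s)\|_2^2)\,ds \le \P(A) < \infty$. Moreover $\tilde\as_\infty = \as_\infty$, which lies in $H_{1/2}$ by hypothesis. By our blanket uniqueness assumption for the Yang-Mills-Poisson boundary value problem, $\tilde\as$ is the Poisson extension of $\as(s_0)$, so $\as(s_0)\in\A$. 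Applying the first part of the corollary with $A$ replaced by $\as(s_0)$ yields $d_{\as(s_0)}^* \tilde\as'(0) = d_{\as(s_0)}^* \as'(s_0) = 0$, and since $s_0 \ge 0$ was arbitrary this is \eref{py25}.

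The main obstacle is not in the algebra, which is a clean two-line application of two previous lemmas, but in the need to legitimately invoke the shift argument for the second statement: one must verify that $\as(s_0)$ really is a point of $\A$ with Poisson extension equal to the shifted solution. This relies entirely on the standing existence and uniqueness assumption for \eref{pe6}, together with the elementary monotonicity of the action integral under truncation of the $s$-interval; no new analytic input is required.
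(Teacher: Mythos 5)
Your proof of \eref{py20} coincides with the paper's: gauge invariance of $\P$, Lemma \ref{lempy7}, the first-derivative formula \eref{pe10} from Lemma \ref{lempe2}, and the definition of the adjoint. For \eref{py25}, however, you take a genuinely different route. The paper does not re-invoke the variational machinery at later times; it differentiates $d_{\as(s)}^*\as'(s)$ in $s$ directly, using the Yang-Mills-Poisson equation together with the dual Bianchi identity $(d_A^*)^2\w=[B\lrc\w]$ and the vanishing of $[\as'(s)\lrc\as'(s)]$ and $[\bs(s)\lrc\bs(s)]$, to conclude that $d_{\as(s)}^*\as'(s)$ is constant in $s$ and hence identically zero by \eref{py20}. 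Your argument replaces that computation with $s$-translation invariance of the equation plus the standing uniqueness assumption, so that $\as(s_0)$ becomes a finite-action initial datum to which the first part applies. Both are valid within the paper's framework, but the costs differ: the paper's computation is purely local, needs nothing beyond the solution $\as$ itself, and actually yields the stronger fact that $d_{\as(s)}^*\as'(s)$ is literally constant along any finite-action solution; your shift argument must invoke the (unproven) existence, uniqueness, and differentiability-of-$\P$ hypotheses at every point $\as(s_0)$ along the trajectory rather than only at $A$, and it quietly uses $\as_\infty\in H_{1/2}$, which is part of membership in $\Ra$ but is not among the stated hypotheses of the corollary (only finite action is assumed). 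Given how much of the paper already rests on assumed well-posedness of \eref{pe6} your route is acceptable, but the paper's is the more economical and self-contained one.
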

       \begin{proof}  Since $\P$ is gauge invariant Lemma \ref{lempy7}
       shows that $\p_v \P(A) = 0$ for any vertical 
       vector $v$, say $v= d_A \lambda$. But the derivative of $\P$ is given by \eref{pe10}, 
       from which we can conclude
       that $(d_A\lambda, \as'(0))_{L^2} =0$ for all $\lambda$. 
       That is, $(\lambda, d_A^* \as'(0))_{L^2} = 0$ for all $\lambda$.
       Hence \eref{py20} holds.      
       Now for any $\kf$ valued 2-form $ \w$ we have the identity $(d_A^*)^2 \w = [B\lrc \w]$, which follows
       by duality from the Bianchi identity $(d_A)^2 \phi = [B,\phi]$ for scalar functions $\phi$.
           Hence 
 \begin{align*}
 (d/ds)\Big(d_{\as(s)}^* \as'(s)\Big) &= d_{\as(s)}^*\as''(s) + [\as'(s)\lrc \as'(s)] \\
 &=  d_{\as(s)}^* d_{\as(s)}^* \bs(s) + 0\\
 &= [\bs(s)\lrc \bs(s)] \\
 &=0.
 \end{align*}
 Therefore $d_{\as(s)}^* \as'(s)$ is constant in $s$. 
 Since, by \eref{py20},  it is zero at $s=0$ it is identically zero. 
\end{proof}

     \begin{theorem} \label{thm2derivs}   $($Second derivatives of the Poisson action$)$ 
Suppose that $A$ has finite Poisson action and that 
$u$ and $v$ are $\kf$ valued 1-forms on $\R^3$. 
Denote by $\as$ the Poisson extension of $A$ and by $ \ua, \va$ the  respective Poisson extensions
of $u$ and $v$ along $\as$. Then
\begin{align}
(1/2) \p_v\p_u \P(A) &= \int_0^\infty \Big\{ \(\ua'(s), \va'(s)\)_{L^2} + \(d_{\as(s)} \ua(s), d_{\as(s)} \va(s)\)_{L^2}  \notag\\
              &\qquad     + \([\ua(s) \wedge \va(s)], \bs(s)\)_{L^2}\Big\} ds  \label{py205}
\end{align}
and also
\begin{align}
- (1/2) \p_v\p_u \P(A) =  \(u, \va'(0)\) = \(v, \ua'(0)\)   .                 \label{py207}
\end{align}
In particular,
\begin{align}
(1/2) \p_u^2 \P(A) = \int_0^\infty\Big\{ \|\ua'(s)\|_{L^2}^2 + \|d_{\as(s)} \ua(s)\|_2^2 
+\([\ua(s)\wedge \ua(s)], \bs(s)\)\Big\}  ds                            \label{py208}
 \end{align}
 and also
 \begin{align}
  \p_u^2 \P(A) = -2(u, \ua'(0)) = -(d/ds) \| \ua(s)\|_{L^2}^2 \Big|_{s=0} .  \label{py209}
  \end{align}  
\end{theorem}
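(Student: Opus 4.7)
The plan is to differentiate the first-derivative identity $\partial_u \P(A) = -2(u, \as'(0))_{L^2}$ from Lemma \ref{lempe2} once more in the direction $v$. Since $u$ is regarded as a fixed $\kf$-valued 1-form independent of $A$, only the factor $\as'(0)$ varies, and a quick interchange of $d/dt$ with $d/ds$ shows that $\partial_v[\as'(0)] = \va'(0)$, where $\va$ is the Poisson extension of $v$ along $\as$. This immediately gives
\begin{align}
-\tfrac{1}{2}\partial_v \partial_u \P(A) = (u, \va'(0)),  \notag
\end{align}
which is the first equality of \eref{py207}.

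For the integral formula \eref{py205}, I would rewrite $(u, \va'(0)) = (\ua(0), \va'(0))$ and integrate by parts in $s$. Assuming $(\ua(s), \va'(s)) \to 0$ as $s \to \infty$, this yields
\begin{align}
(\ua(0), \va'(0)) = -\int_0^\infty \Bigl\{(\ua'(s), \va'(s)) + (\ua(s), \va''(s))\Bigr\}\, ds.   \notag
\end{align}
I would then substitute the variational equation \eref{pe8} for $\va''(s)$ and simplify:
\begin{align}
(\ua(s), \va''(s)) &= (\ua(s), d_{\as(s)}^* d_{\as(s)} \va(s)) + (\ua(s), [\va(s) \lrc \bs(s)]) \notag \\
&= (d_{\as(s)} \ua(s), d_{\as(s)} \va(s)) + ([\ua(s)\wedge \va(s)], \bs(s)), \notag
\end{align}
where the last term uses the adjoint identity $\<[\va \lrc \bs], \ua\> = \<\bs, [\va \wedge \ua]\>$ stated after \eref{pe6}, combined with the symmetry $[\va \wedge \ua] = [\ua \wedge \va]$ of the bracket product on $\kf$-valued 1-forms. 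Multiplying through by $-1$ and invoking the identity $-(u, \va'(0)) = \tfrac{1}{2}\partial_v \partial_u \P(A)$ already established gives \eref{py205}.

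The three terms in the integrand of \eref{py205} are each manifestly symmetric under $u \leftrightarrow v$, which produces the second equality $(u, \va'(0)) = (v, \ua'(0))$ in \eref{py207} for free. Specializing $v = u$ in \eref{py205} gives \eref{py208} verbatim, while \eref{py209} follows from \eref{py207} combined with the product rule $(d/ds)\|\ua(s)\|_{L^2}^2 = 2(\ua(s), \ua'(s))$ evaluated at $s = 0$.

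The hard part will be justifying the vanishing of the boundary term $\lim_{s \to \infty}(\ua(s), \va'(s))$ required for the integration by parts: the decay $\ua(s) \to 0$ is already built into Notation \ref{nottangent}, but one also needs corresponding control on $\va'(s)$ for the product to vanish in the limit. A secondary concern is the commutation of $d/dt$ with $d/ds$ used in identifying $\partial_v[\as'(0)]$ with $\va'(0)$, which requires enough joint regularity of $\as_t(s)$ in $(t,s)$; both are tacitly assumed in this portion of the paper.
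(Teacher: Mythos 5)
Your proposal is correct and follows essentially the same route as the paper: differentiate the first-derivative formula of Lemma \ref{lempe2} in the direction $v$ to get $-\tfrac{1}{2}\p_v\p_u\P(A)=(u,\va'(0))$, then integrate $(d/ds)(\ua(s),\va'(s))$ over $[0,\infty)$ using the variational equation \eref{pe8} and the adjoint identity for $[\,\cdot\lrc\cdot\,]$ to obtain \eref{py205}, with the remaining identities coming from the $u\leftrightarrow v$ symmetry of the integrand and the specialization $v=u$. The paper likewise assumes without proof the vanishing of the boundary term $(\ua(s),\va'(s))\to 0$ and the joint regularity needed to interchange $\p_t$ and $\p_s$, so your flagged concerns match its tacit hypotheses.
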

      \begin{proof} Differentiate \eref{pe10} with respect to $A$ in the direction $v$ to find
      \begin{align}
 - (1/2) \p_v\p_u \P(A) =   \(u, \va'(0)\)_{L^2}  .    \label{py210}
 \end{align}
  Using the variational equation \eref{pe8} we get 
 \begin{align}
   &(d/ds)(\ua(s), \va'(s))      \notag\\
   &= (\ua(s), \va''(s)) + (\ua'(s), \va'(s)) \notag\\
   &= (\ua(s), d_\as^* d_\a \va(s) + [\va(s) \lrc \bs(s)])   + (\ua'(s), \va'(s))  \notag\\
   &=(d_\as \ua(s), d_\as \va(s))  + (\ua(s),[\va(s) \lrc \bs(s)]) + (\ua'(s), \va'(s))\notag\\
   &= (d_\as \ua(s), d_\as \va(s)) +   ([\va(s)\wedge \ua(s)], \bs(s))  + (\ua'(s), \va'(s)). \label{py216}
      \end{align}
    Assuming that   $(\ua(s), \va'(s))$ goes to zero
    as $s\to \infty$ we may integrate this  identity over $[0, \infty)$ to find, using  $[u\wedge v ] = [ v\wedge u]$,
    \begin{align}
   & -(\ua(0), \va'(0))        \label{py217}\\
    &\ \ \ \  \ \ \ \  =   \int_0^\infty \Big\{ (d_\as \ua(s), d_\as \va(s)) 
    +   ([\ua(s)\wedge \va(s)], \bs(s))  + (\ua'(s), \va'(s))\Big\} ds.  \notag  
    \end{align}
    \eref{py205} and \eref{py207} now follow from \eref{py210}, \eref{py217} and the
     symmetry in $u$ and $v$ in \eref{py217}.      
         Set $v =u$ in \eref{py205} and \eref{py207} to find \eref{py208} and \eref{py209}.
      \end{proof}

\begin{remark} {\rm (Expansion of $\P$ near zero.) 
 The behavior of $\P(A)$ for small $A$ is given by
\begin{align}
\P(u) = (|C| u, u)_{L^2(\R^3; \L^1\otimes \kf)} + O(u^3) ,      \label{py230}
\end{align}
where $u$ is a small variation of $A$ at $A=0$. 
   To see this observe that the Poisson extension of $A=0$ is $\as(s) =0$. The variational equation
   \eref{pe8} is therefore $\ua''(s) = d^*d\ua(s)$, which is the Maxwell-Poisson equation for  the  
   $\kf$ valued 1-form $\ua(s)$. The solution, as in the real valued case, \eref{mp30},  is $\ua(s) = e^{-s|C|} u$,  where
   $C$ is the curl acting on each $\kf$ component of $u$. Hence $\ua'(0) = - |C|u$. We find therefore,
   $\P(0) =0$, $(\p_u \P)(0) =  - 2(u, \as'(0)) =0$ and $(1/2)(\p_u^2 \P)(0) = (|C| u, u)$, from which 
   \eref{py230} follows.
}
\end{remark}

     \begin{remark} {\rm It may be useful to observe that $\as'(s)$ is itself a variational field along $\as$,
 and if $\ua$ is also  a variational  field along $\as$ then 
      \begin{align}
 (\ua(0), \as''(0)) = (\ua'(0), \as'(0)).       \label{pe18}
 \end{align}
           \begin{proof}
  Let $\va(s) = \as'(s)$. Then
     \begin{align*}
     \va''(s) &= (d/ds) \as''(s)\\
     &=(d/ds) d_\as^*\bs \\
     &= d_\as^* d_\as \a'(s) + [\as'(s) \lrc \bs(s)]\\
     &= d_\as^* d_\as \va(s) + [\va(s) \lrc \bs(s)].
     \end{align*}
     Therefore $\as'$ is a variational field along $\as$. Upon taking $\va(s) = \as'(s)$ in  \eref{py207} we find
     \eref{pe18}. 
          \end{proof}
     }
\end{remark}

\subsection{The Riemannian metrics on $\A$ and $\cg$}  \label{secrm}
We want to define a Riemannian metric on $\A$ which is gauge invariant, captures  a
non-linear version of the electromagnetic norm identity \eref{mp35}, and, unlike \eref{mp35}, does not require
the Coulomb gauge. We aim to do this by using the first variation of the Poisson action 
to define a norm on horizontal vectors and then using 
an explicitly defined  norm on vertical vectors that's commensurate 
with our choice of gauge group, $\G$.

        \begin{notation}\label{notrma} {\rm (Riemannian metric on $\A$) 
 Suppose that $A$ is in $\A$ and that  $\as$ is its Poisson extension. 
 If $u$ is any $\kf$ valued 1-form on $\R^3$ and  $\ua$ is  its Poisson extension along $\as$ 
define
\begin{align}
_A\|u\|^2 = \int_0^\infty \Big( \|\ua'(s)\|_2^2 + \| d_{\as(s)} \ua(s)\|_2^2 \Big) ds.\ \  \label{pe13}
\end{align}
Suppose that $w$ is a $\kf$ valued 1-form on $\R^3$ which decomposes into horizontal and vertical
components:
\begin{align}
 w = u + v,      \ \ \ \  d_A^*u =0,\ \ \ \ v=d_A\lambda .  \notag
 \end{align}
 Define 
\begin{align}
\|w\|_A^2  = \| u + v\|_A^2 =\   _A\|u\|^2 + \| (-\Delta_A)^{1/4} v\|_{L^2(\R^3)}^2,  \label{py32}
\end{align}
where $\Delta_A = \sum_{j=1}^3 (\p_j^A)^2$ and $\p_j^A = \p_j + [A_j, \cdot ]$.
We define the horizontal subspace at $A$ by  
 \begin{align}
  u  \in H_A\ \ \ \text{if}\ \  
  \begin{cases} &d_A^* u =0\ \ \ \text{and} \\
                         &\|u\|_A < \infty.\ \      \label{pe9}
   \end{cases}
  \end{align} 
 It seems likely that the norm $\|u\|_A$ is non-degenerate on $H_A$ and we will assume this in the  following.
But the integral in \eref{pe13} is meaningful even if $u$ is not horizontal at $A$. 
However $_A\|u\|$ is not a norm since it can be degenerate
on non-horizontal elements, as will be shown in the next example. 
For this reason we have  added  the second term  in \eref{py32}. 
We will see that it  is commensurate with our choice of $\G$, defined in \eref{py17}.  

In the case of electromagnetism, where $K = U(1)$, \eref{mp35} shows that $\P(A) = \|A\|_{H_{1/2}}^2$
 when $A$ is in Coulomb gauge. Identifying, as usual, the tangent space to $\A$ with $\A$ itself in
 this linear theory, we see that the norm $\|u\|_A$ in \eref{py32} reduces to
  the $H_{1/2}$ norm of electromagnetism
 when $u $ is horizontal. One should think of $\|u\|_A$ as an ``$H_{1/2}$''-like
  norm on $H_A$ in the non-abelian case.
}
\end{notation}

\begin{example}\label{exdegen}{\rm   (Degeneracy) 
If $A=0$ then,  for any $\kf$ valued function $\lambda$ on $\R^3$, 
$d\lambda$ is a vertical vector at $A$ and we have
\begin{align}
_A\| d\lambda \|   =0.                                               \label{pe13a}
\end{align}
This is easily seen because the Poisson extension of $A:=0$ is $\as(s) \equiv 0$.
Therefore if we define $\ua(s) = d\lambda$ for all $s \ge 0$ then
$\ua'(s) = \ua''(s) =0$, while  $d^*d\ua(s) = d^*d^2\lambda =0$ and $\bs(s) = d^2 \lambda = 0$,
from which we see that the variational equation \eref{pe8} is satisfied.
 Since also  $\ua(0) = d\lambda$, $\ua(s)$ is the Poisson extension of  $d\lambda$. But then 
 the right hand side of \eref{pe13} is zero.
 }
 \end{example}

\begin{lemma}\label{lemgiv}  $($Gauge invariance of the metric$)$
Let $w \in T_A(\A)$ and let $g\in \G$.
Suppose that $w = u +v$ with   $u$ horizontal at $A$ and $v$ vertical at $A$.
 Then $w^g = u^g + v^g$ with $u^g$ horizontal at $A^g$ and $v^g$ vertical at $A^g$. Moreover
\begin{align}
\|w^g\|_{A^g} = \|w\|_A     \label{pe13b}
\end{align}
\end{lemma}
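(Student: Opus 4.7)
The plan is to verify the claim in three steps: (a) the decomposition $w^g = u^g + v^g$ is the horizontal/vertical decomposition at $A^g$, (b) the vertical part of the norm is preserved, and (c) the horizontal part of the norm is preserved.

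For (a), the key is gauge covariance of $d_A$ acting on $\kf$-valued $p$-forms, namely $d_{A^g}\phi^g = (d_A\phi)^g$ where $\phi^g := g^{-1}\phi g$. This is a direct calculation using $A^g = g^{-1}Ag + g^{-1}dg$. Taking the formal $L^2$ adjoint and invoking the pointwise $Ad\,K$ invariance of $\langle\cdot,\cdot\rangle_\kf$ gives $d_{A^g}^* \psi^g = (d_A^*\psi)^g$. Hence $d_A^* u = 0$ implies $d_{A^g}^* u^g = 0$, so $u^g$ is horizontal at $A^g$; and $v = d_A\lambda$ implies $v^g = g^{-1}(d_A\lambda)g = d_{A^g}(g^{-1}\lambda g)$, so $v^g$ is vertical at $A^g$. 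Uniqueness of the horizontal/vertical decomposition then forces $w^g = u^g + v^g$ to be precisely this decomposition.

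For (b), I would show that the connection Laplacian $-\Delta_A$ on $\kf$-valued 1-forms is gauge covariant, $(-\Delta_{A^g}) v^g = (-\Delta_A v)^g$. By the componentwise formula $\Delta_A = \sum_j(\p_j^A)^2$ this reduces to checking $\p_j^{A^g}(g^{-1} v_k g) = g^{-1}(\p_j^A v_k)g$, which follows from a short expansion of $A^g = g^{-1}Ag + g^{-1}dg$ in which the two occurrences of $g^{-1}(\p_j g)\,g^{-1} v_k g$ cancel. By spectral calculus for self-adjoint operators, $(-\Delta_{A^g})^{1/4}v^g = ((-\Delta_A)^{1/4} v)^g$, and then pointwise $Ad\,K$ invariance of $\langle\cdot,\cdot\rangle_\kf$ gives $\|(-\Delta_{A^g})^{1/4} v^g\|_{L^2} = \|(-\Delta_A)^{1/4} v\|_{L^2}$.

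For (c), the crucial observation is that since the gauge function $g$ depends only on $x \in \R^3$, conjugation commutes with $\p/\p s$. I claim that if $\ua$ is the Poisson extension of $u$ along $\as$, then $\ua^g(s) := g^{-1}\ua(s)g$ is the Poisson extension of $u^g$ along $\as^g$. Indeed, $\ua^g(0) = u^g$, $(\ua^g)'(s) = g^{-1}\ua'(s)g$, $(\ua^g)''(s) = g^{-1}\ua''(s)g$, and the gauge covariance of $d_A$ and its adjoint together with the corresponding identity $\bs^g(s) = g^{-1}\bs(s)g$ (from \eqref{py10}) shows that each term of the variational equation \eqref{pe8} transforms by $g^{-1}(\cdot)g$. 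By the uniqueness assumption for \eqref{pe8}, $\ua^g$ is indeed the Poisson extension of $u^g$ along $\as^g$. Substituting this into the integrand of \eqref{pe13} and applying $Ad\,K$-invariance to each of $\|(\ua^g)'(s)\|_{L^2}^2$ and $\|d_{\as^g(s)}\ua^g(s)\|_{L^2}^2 = \|g^{-1}(d_{\as(s)}\ua(s))g\|_{L^2}^2$ yields $_{A^g}\!\|u^g\|^2 = {_A\|u\|^2}$.

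The routine but delicate step I expect to be the main obstacle is identifying the Poisson extension of $u^g$ along $\as^g$ with $g^{-1}\ua(\cdot)g$; this depends on the uniqueness assumption for the variational equation which, as the paper already notes, is being taken on faith at this level of generality. Combining (b) and (c) and adding the two pieces of \eqref{py32} gives \eqref{pe13b}.
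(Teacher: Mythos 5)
Your proposal is correct and follows essentially the same route as the paper's own proof: preservation of the horizontal/vertical split via gauge covariance of $d_A$ and $d_A^*$, the identification $\ua^g = g^{-1}\ua(\cdot)g$ as the Poisson extension of $u^g$ along $\as^g$ (resting on the standing uniqueness assumption for the variational equation), $Ad\,K$-invariance for the horizontal term, and gauge covariance of $\Delta_A$ plus the spectral theorem for the $(-\Delta_A)^{1/4}$ term. No gaps beyond the existence/uniqueness assumptions the paper itself makes explicit.
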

     \begin{proof}  Horizontal and vertical are preserved under gauge transformations because
     $d_{A^g}^* u^g = (d_A^* u)^g$ and $(d_A\lambda)^g =d_{A^g}\lambda^g  $.
 In view of the definition \eref{py32} it suffices to show that \eref{pe13b} holds for $u$ and $v$ separately.

Concerning the horizontal component, let $\as$ be the Poisson extension
 of $A$ and let $\ua$ be the Poisson extension
of $u$ along $\as$. Then $\as^g$ is the Poisson extension of $A^g$, as we saw in  Section \eref{secpacs},
and $\ua^g(s) :=  g^{-1} \ua(s) g$ is the Poisson extension of $u^g$, as one sees by a similar argument.
Since $\| (\ua^g)'(s)\|_2 = \| g^{-1} \ua'(s) g \|_2 = \| \ua'(s)\|_2$ and
 $ d_{\as^g(s)} \ua^g(s) = g^{-1} d_{\as(s)} \ua(s) ) g$, \eref{pe13b} follows from \eref{pe13} for the horizontal
 component of $w$.

 Concerning the vertical component, observe that the gauge covariant Laplacian $\Delta_A$ commutes
 with gauge transform in the sense that $\Delta_{A^g} v^g = g^{-1} (\Delta_A v) g$ for any $\kf$ valued 1-form
 $v$. Since $Ad\ g^{-1}$ is unitary in $L^2(\R^3;\L^1\otimes \kf)$, it follows  from the spectral theorem
 that $(-\Delta_{A^g})^{1/4} v^g = g^{-1} ((-\Delta_A)^{1/4} v) g$. Therefore
 $\| (-\Delta_{A^g})^{1/4} v^g\|_{L^2} = \| (-\Delta_A)^{1/4} v\|_{L^2}$.
  \end{proof}

\begin{corollary} \label{corpy5}
\begin{align}
|\p_u \P(A)| \le 2 \sqrt{\P(A)} \| u\|_A,\ \   u \in H_A(\A)   \label{pe14} \\
|(u, \as'(0))| \le   \sqrt{\P(A)} \| u\|_A, \ \   u \in H_A(\A).    \label{pe15}
\end{align}
\end{corollary}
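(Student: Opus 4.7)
The two inequalities are linked by Lemma \ref{lempe2}, which gives $\p_u\P(A) = -2(u,\as'(0))$, so \eref{pe14} follows immediately from \eref{pe15}. Thus it suffices to establish the second inequality. My plan is to rewrite $(u,\as'(0))$ as an integral over the half-line $[0,\infty)$ by integrating by parts in $s$, then apply Cauchy--Schwarz in a form that produces the Poisson-action integrand on one side and the integrand of $_A\|u\|^2$ on the other.

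Let $\ua(s)$ denote the Poisson extension of $u$ along $\as$, so $\ua(0)=u$ and $\ua(s)\to 0$ as $s\to\infty$. Under this decay assumption we have
\begin{align*}
(u,\as'(0)) \;=\; -\int_0^\infty \frac{d}{ds}\bigl(\ua(s),\as'(s)\bigr)\,ds
\;=\;-\int_0^\infty \Bigl\{\bigl(\ua'(s),\as'(s)\bigr) + \bigl(\ua(s),\as''(s)\bigr)\Bigr\}\,ds.
\end{align*}
The Yang-Mills-Poisson equation \eref{pe6} gives $\as''(s)=d_{\as(s)}^*\bs(s)$, so the second inner product becomes $\bigl(d_{\as(s)}\ua(s),\bs(s)\bigr)$ after taking the adjoint of $d_{\as(s)}^*$. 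Therefore
\begin{align*}
(u,\as'(0)) = -\int_0^\infty\Bigl\{\bigl(\ua'(s),\as'(s)\bigr) + \bigl(d_{\as(s)}\ua(s),\bs(s)\bigr)\Bigr\}\,ds.
\end{align*}

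Now apply Cauchy--Schwarz pointwise in $s$ on $L^2(\R^3;\L^\bullet\otimes\kf)$, and then Cauchy--Schwarz on $L^2([0,\infty);\R^2)$ to the vector-valued pair $\bigl(\|\ua'(s)\|_2,\|d_{\as(s)}\ua(s)\|_2\bigr)$ against $\bigl(\|\as'(s)\|_2,\|\bs(s)\|_2\bigr)$. This yields
\begin{align*}
|(u,\as'(0))| \le \sqrt{\int_0^\infty\!\!\bigl(\|\ua'(s)\|_2^2+\|d_{\as(s)}\ua(s)\|_2^2\bigr)ds}\;\cdot\;\sqrt{\int_0^\infty\!\!\bigl(\|\as'(s)\|_2^2+\|\bs(s)\|_2^2\bigr)ds}.
\end{align*}
By the definitions \eref{py8} and \eref{pe13} the two factors are $_A\|u\|$ and $\sqrt{\P(A)}$ respectively; since $u\in H_A(\A)$ is horizontal, the second term in \eref{py32} vanishes and $\|u\|_A = {_A\|u\|}$, giving \eref{pe15}. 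Then \eref{pe14} follows from $|\p_u\P(A)| = 2|(u,\as'(0))|$.

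The main technical step to worry about is the boundary contribution at $s=\infty$ from the integration by parts: one needs $(\ua(s),\as'(s))\to 0$, which should follow from the standing assumptions that $\ua(s)\to 0$ at infinity (built into the definition of Poisson extension of a variation in Notation \ref{nottangent}) together with finiteness of $\P(A)$ forcing $\as'(s)$ to be in $L^2$ in $s$. The rest is just two applications of Cauchy--Schwarz.
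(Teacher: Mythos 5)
Your proof is correct and is essentially the paper's own argument: the integral identity you derive for $(u,\as'(0))$ by integrating by parts and using \eref{pe6} is exactly (up to sign and a factor of $2$) the formula \eref{pe11} from the proof of Lemma \ref{lempe2}, and the paper likewise obtains \eref{pe14} by applying the Schwarz inequality to that integral and then passes between the two inequalities via \eref{pe10}. The only difference is the order in which the two inequalities are deduced, which is immaterial.
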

\begin{proof}
From \eref{pe11} and the Schwarz inequality we see that
\begin{align}
|\p_u &\P(A)| \le 2 \sqrt{\P(A)} 
\sqrt{\int_0^\infty \Big( \|\ua'(s)\|_2^2 + \| d_{\as(s)} \ua(s)\|_2^2\Big) ds },         \label{pe12}
\end{align}
which is \eref{pe14}. \eref{pe15} follows from \eref{pe10}.
\end{proof}

Note: Both inequalities hold for all $u \in T_A$ because $\p_v \P(A) =0$ for vertical $v$, by Lemma \ref{lempy7}.

\begin{notation}\label{notrmc} {\rm (Riemannian metric on $\cg$)  
Since $\G$ acts isometrically on $\A$ it induces a metric on $\cg : = \A/\G$, 
which makes the projection into a submersion of Riemannian manifolds.
We take this induced metric as the Riemannian metric on $\cg$.
}
\end{notation}

\begin{remark}\label{remkindyn}{\rm  (Kinematic and dynamic norms) 
A point $q$ in configuration space is a $\G$ orbit, $q:= A\G$, through a point in $\A$.
 The tangent space to such an orbit  is the set of  vertical vectors $d_A\lambda$ at $A$ and 
the tangent space to $\cg$ at the orbit  $q$  
 can be identified with the set of horizontal  vectors at $A$.  
    Since the norm defined in \eref{py32}  
    is gauge invariant in the sense of \eref{pe13b}, it 
  descends to a norm on $T_q(\cg)$. 
        But the simple $L^2$ norm $ \| u\|_{L^2(\R^3; \L^1\otimes \kf)}$ on $H_A$
  is also gauge invariant and  also descends to a norm on $T_q(\cg)$.
      Unlike the first  norm  the $L^2$ norm does not depend in any way on
       the dynamics, which was used e.g. in \eref{pe6}. The $L^2$ norm 
       is entirely kinematic and should appropriately        be referred to as the
     {\it kinematic} Riemannian metric  for configuration space, in contrast to the {\it dynamic} Riemannian
     metric \eref{py32}.  
           We already saw in the electromagnetic case that the norm on phase space \eref{n6},
            induced from the dynamic
            norm on configuration space by the duality set up by the kinematic norm, gives
              the unique Lorentz   invariant norm. Both the kinematic and dynamic norms thereby 
              appear naturally in considerations of phase space. The kinematic metric is needed to
               construct the Coulomb connection. See e.g. \cite{Si1,Si2,NR}.              
              Other   Sobolev norms have been used in some works  to ensure
               that only irreducible connections need be considered, thereby ensuring in turn 
            that Green functions exist. See e.g. \cite{MV81,AM}. 
            These strong norms enter in a technical way only.
  }
\end{remark}

\subsection{Phase space }               \label{secpsy}

The phase space for the gauge field is $T^*(\cg)$, as usual in classical mechanics. 
The natural Riemannian metric on $\cg$ to use here is the one defined in Notation \ref{notrmc},
which gives the Lorentz invariant norm in the electromagnetic case. 
An  electric field   is a member  of $T_A^*(\A)$ for some point $A \in \A$. 
In the electromagnetic case the space $T_A^*(\A)$ can be taken to be independent of $A$ because
$\A$ is a linear space. The electric field can therefore be, and customarily is,  taken to be
 independent of $A$. But not in a non-abelian gauge theory.

If the electric field $E$ annihilates all vertical vectors at $A$, that is,
 $\<d_A \lambda, E\>_{L^2(\R^3)}=0$ for 
all $\kf$ valued scalar functions $\lambda$,  then $d_A^* E =0$. Otherwise one has $d_A^*E = - 4\pi \rho$
for some $\kf$ valued charge distribution $\rho$. One can see already from this that charge is connected
to vertical vectors by some duality and therefore also to the dual space of the Lie algebra of the gauge group.
 At an informal level this statement
just reflects the well recognized 
 fact that conservation of  charge is a consequence of gauge invariance.  
 See e.g. \cite[pages 210-211]{Sch}.
  But the charge  will not show up in the structure of $T^*(\cg)$ because the projection 
  of $E$ into $T^*(\cg)$ depends on evaluation of $E$ on horizontal vectors only. 
 To incorporate 
 charge, from either external sources or additional fields, one must make further use of $\A$ and the
  gauge group $\G$,  beyond their roles in forming the configuration space $\A/ \G$. 
  The incorporation of charge into our phase space framework will 
 be addressed quantitatively in a future work. But in the present paper we are only concerned with 
 helicity and therefore we will always take the charge to be zero.

  As already noted in  Remark \ref{rmkpm}, only the decomposition of configuration space 
  must be specified in order to define helicity, whether in the classical or quantum theory,
  because the associated decomposition of $T^*(\cg)$ is automatically determined
   by this decomposition in the classical theory  
  and is not needed in the quantum theory. 
  In the remainder of this paper 
  we will be concerned 
  only with decomposition  of the non-abelian configuration
              space $\cg$ into submanifolds $\cg_\pm$ analogous to those of the electromagnetic case.
   We limit further discussion of phase space to the following remarks.
   
\begin{remark}\label{rmkLinvce} {\rm   (Lorentz invariance)
   In the case of electromagnetism we saw 
       how the phase space  norm gives the Bargmann-Wigner
       Lorentz invariant norm.  (cf. \eref{pw36c}.) 
       For non-abelian fields the configuration space $\cg$ is not linear and
       one must replace what was previously a norm on $\cg \oplus \cg^*$ (namely \eref{pw36c}) by
       a Riemannian metric on $T^*(\cg)$. This space automatically inherits a Riemannian
        metric from the Riemannian metric on $\cg$, defined in Notation \ref{notrmc}. 
        It remains to be seen whether this Riemannian
        metric is invariant under the action of the Lorentz group. Invariance under spatial rotations and
        spatial translations is clear. But invariance under time translation is not.           
         It is not at all clear  that the Riemannian metric on $T^*(\cg)$ needs to be Lorentz invariant
          in order for    the configuration space 
        itself to play its usual role in the quantized theory,
         where quantum time evolution is not so simply related to the non-linear classical evolution.
}
\end{remark}

\begin{remark}\label{rmkHSC} 
{\rm (History, Symplectic form, Complex structure) Candidates for the phase space
 associated to linear and non-linear wave equations
have been investigated from many  different viewpoints. One expects a symplectic structure
on the phase space and in some cases one can hope also for a Riemannian structure
 on this infinite dimensional manifold.
 In field theory the symplectic structure  is an infinite dimensional analog of the
 canonical 2-form   $\sum_{j=1}^n dp_j\wedge dq_j$. There is a large literature devoted to 
 the role of the symplectic structure in quantization of the field theory.
 See e.g.  
 \cite{Seg41,SegB1,Seg117, Seg123, Seg126,Seg128,Pan2,Woodhouse,Garc1,Garc2, Garc3, Garc4, Garc7}
 for samples. The phase space is usually not chosen in a quantitative way, however, as we have done above.
 But the references \cite{Seg117, Seg123, Seg126,Pan2} do  discuss the symplectic form in combination
 with a Riemannian metric on the phase space. See especially \cite{Seg126} for a discussion
  of phase space for  Yang-Mills fields and a proposed quantitative candidate for it.
 In the context of the Yang-Mills theory the symplectic form is given by the 2-form,
\begin{align}
\w\< X_1, X_2\> = \int_{\R^3}\Big( \<u_1(x), v_2(x)\> - \< u_2(x), v_1(x)\> \Big) d^3x
\end{align}
where  $X_j := u_j, v_j$, $j = 1,2$ are two elements of $T_{A,E} T^*(\A)$ and 
$\< u(x), v(x)\> =  \< u(x), v(x)\>_{\L^1\otimes \kf}$. 
If $A$ evolves by the hyperbolic Yang-Mills equation and $X_j(t)$ evolves by the induced flow then
it is known that
\begin{align}
(d/dt)\w\< X_1(t), X_2(t)\> = 0. \notag
\end{align}
Moreover  $\w$ is gauge invariant, Lorentz invariant and descends to a non-degenerate
 symplectic form on $T^*(\cg)$.  See e.g. \cite{Seg41}-\cite{SegB1} and especially \cite{Seg126}
  for early discussions of these structures.
The 2-form $\w$ is continuous in the $u_j$ and $v_k$ because they are in dual spaces.
But our particular choice of the metric on configuration space gives the Sobolev indices
  $\pm 1/2$ for these spaces, which  match fortuitously with the
 expectation that the electric field
should be one derivative less regular than the potential.

The complex structure on the linear phase space for electromagnetism, used in the proof of Theorem
\ref{thmhel1}, has a natural analog in the non-abelian theory. If $u$ varies $A$ and $v$ varies $E$
then the map $j:\{u, v\} \mapsto \{\hat v, - \hat u\}$, where 
$ T_{A,E} \hat\longleftrightarrow T_{A,E}^*$ is the natural
isometry, generalizes the electromagnetic complex structure given in the proof of Theorem \ref{thmhel1}.
 }
\end{remark}

\section{Decomposition of Yang-Mills configuration space by (anti-) self-duality}  \label{secdecomp}

\subsection{Definition of $\cg_\pm$} \label{seccpm}
The four dimensional curvature of the 1-form $\as$ on the half space $\R^3\times [0,\infty)$ is given
by \eref{py4}. Its Euclidean dual 
is  given by $*_e F = ds\wedge *\bs(s) +  *\as'(s)$, where $*$ is the three dimensional Hodge star operator.
As in the discussion in Section \ref{secsdm} the conditions for Euclidean self-duality
 or anti-self-duality,  $*_e F = \pm F$, of the curvature of $\as(\cdot)$ can be written in terms of $\as$ as follows.

\begin{definition}{\rm A function $\as: (0, \infty) \rightarrow L^2(\R^3; \L^1\otimes \kf)$ is {\it self-dual} if 
\beq
\as'(s) = *\bs(s),\ \ \ \ \ 0 < s < \infty.          \notag
\eeq
It  is {\it anti-self-dual} if 
\beq
\as'(s) = -*\bs(s),\ \ \ \ \ 0 < s < \infty.         \notag
\eeq
}
\end{definition}

The following lemma restates for temporal gauge a well known fact.  
      \begin{lemma} \label{lempe15} If $\as(\cdot)$ is self-dual or anti-self dual then it satisfies
 the Yang-Mills-Poisson equation \eref{pe6}.
\end{lemma}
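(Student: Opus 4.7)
The plan is to differentiate the first-order (anti-)self-duality equation in $s$ and recognize the result as the second-order Yang-Mills-Poisson equation \eref{pe6}. The two ingredients I would need are (i) the derivative of the three-dimensional curvature $\bs(s)=d\as(s)+\as(s)\wedge\as(s)$ in $s$ and (ii) the Hodge star identity relating $d_A^*$ to $d_A$ on 2-forms over $\R^3$.

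First I would compute $\bs'(s)$. Differentiating \eref{py3} term by term gives $\bs'(s) = d\as'(s) + [\as'(s)\wedge\as(s)]/2 + [\as(s)\wedge\as'(s)]/2$, which, using the symmetry $[u\wedge v]=[v\wedge u]$ for $\kf$-valued 1-forms, collapses to
\begin{equation}
\bs'(s) = d\as'(s) + [\as(s)\wedge\as'(s)] = d_{\as(s)}\as'(s). \notag
\end{equation}
This is a well-known consequence of the definition of curvature.

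Next, recall that on $\kf$-valued 2-forms $\omega$ over $\R^3$ one has $d_A^*\omega = *d_A *\omega$ (the sign convention works out precisely because we are in three dimensions and acting on 2-forms; this just mimics the standard identity $d^*=*d*$ for the non-gauged exterior derivative, since $d_A$ differs from $d$ only by the addition of a zeroth-order bracket term that commutes with $*$ in the appropriate sense).

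Now assume $\as$ is self-dual, i.e.\ $\as'(s) = *\bs(s)$. Differentiating in $s$ and using the identity above,
\begin{equation}
\as''(s) = *\bs'(s) = *d_{\as(s)} \as'(s) = *d_{\as(s)} (*\bs(s)) = d_{\as(s)}^* \bs(s), \notag
\end{equation}
which is exactly \eref{pe6}. In the anti-self-dual case $\as'(s) = -*\bs(s)$, and the same computation gives
\begin{equation}
\as''(s) = -*\bs'(s) = -*d_{\as(s)}\as'(s) = -*d_{\as(s)}(-*\bs(s)) = d_{\as(s)}^* \bs(s), \notag
\end{equation}
so the two minus signs cancel and we again obtain the Poisson equation. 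The only point requiring any care is the bookkeeping of the Hodge star identity for gauge-covariant forms on $\R^3$, but since the three-dimensional degree count is the same as in the abelian case this is routine rather than a genuine obstacle.
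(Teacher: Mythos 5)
Your proposal is correct and is essentially the paper's own proof: differentiate the first-order (anti-)self-duality equation in $s$, use $\bs'(s)=d_{\as(s)}\as'(s)$ and the identity $d_{\as(s)}^*=*d_{\as(s)}*$ on 2-forms, and observe that the signs square away (the paper writes the two cases uniformly with $\epsilon=\pm1$ and uses $\epsilon^2=1$). The only cosmetic difference is your intermediate bookkeeping with the $1/2$'s in $\bs'(s)$, which is harmless since the conclusion $\bs'=d_{\as}\as'$ is the standard one.
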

      \begin{proof} If $\as'(s) = *\bs(s) \epsilon$ with $\epsilon = \pm1$ then 
\begin{align*}
\as'' &= (d/ds) *\bs(s)   \epsilon\\
&= *d_{\as(s)} \as'(s)   \epsilon\\
&=  *d_{\as(s)} *\bs(s) \epsilon^2\\
&= d_{\as(s)}^* \bs(s)
\end{align*}
since $*d_{\as(s)} * = d_{\as(s)}^*$ when acting on two forms.
\end{proof}

\begin{lemma}\label{lempe16}Let $g:\R^3\rightarrow K$ be a gauge function. 
If $\as$ is $($anti-$)$self-dual then so is $\as^g$. 
\end{lemma}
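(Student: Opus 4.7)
The plan is to verify the self-duality relation $(\as^g)'(s) = \epsilon *\bs^g(s)$ directly, where $\epsilon = \pm 1$ corresponds to the self-dual/anti-self-dual case for $\as$. The three ingredients I need are (i) how the $s$-derivative transforms, (ii) how the spatial curvature transforms, and (iii) that the three-dimensional Hodge $*$ commutes with the gauge action.

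First I would observe that since $g$ depends only on $x \in \R^3$ and not on $s$, the formula $\as^g(s) = g^{-1}\as(s) g + g^{-1}dg$ gives, upon differentiation in $s$,
\begin{equation}
(\as^g)'(s) = g^{-1}\as'(s) g,
\notag
\end{equation}
because the $g^{-1}dg$ term is $s$-independent. Next, the three-dimensional curvature transforms gauge-covariantly in the standard way (already used in Section \ref{secpacs}):
\begin{equation}
\bs^g(s) = d\as^g(s) + \as^g(s)\wedge \as^g(s) = g^{-1}\bs(s) g.
\notag
\end{equation}

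Second, the key point is that the three-dimensional Hodge star operator $*$ acts only on the differential-form part of a $\kf$-valued form and is pointwise-linear in the $\kf$ coefficient. Consequently it commutes with the pointwise action $\w \mapsto g^{-1}\w g$, i.e.\ $*\bs^g(s) = g^{-1}(*\bs(s))g$. Combining the three displays, if $\as'(s) = \epsilon *\bs(s)$ with $\epsilon \in \{+1,-1\}$, then
\begin{equation}
(\as^g)'(s) = g^{-1}\as'(s)g = \epsilon\, g^{-1}(*\bs(s))g = \epsilon *\bs^g(s),
\notag
\end{equation}
which is precisely (anti-)self-duality of $\as^g$ with the same sign $\epsilon$.

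There is no real obstacle here; the statement is essentially the observation that (anti-)self-duality of a temporal-gauge connection on $\R^4_+$ is a gauge-covariant notion under time-independent gauge transformations, which is why the temporal-gauge restriction survives. The only small point worth noting is the time-independence of $g$, which is what keeps $\as^g$ in temporal gauge and eliminates any $ds$-component contribution to the derivative.
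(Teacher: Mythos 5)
Your argument is correct and is essentially identical to the paper's proof: both compute $(d/ds)\as^g(s) = g^{-1}\as'(s)g$ (using $s$-independence of $g$), invoke $\bs^g(s) = g^{-1}\bs(s)g$, and use that the three-dimensional Hodge star commutes with the pointwise adjoint action $\w\mapsto g^{-1}\w g$. Your write-up simply makes explicit a couple of points (time-independence of $g$, preservation of temporal gauge) that the paper leaves implicit.
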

\begin{proof} Since the three dimensional curvature of $\as^g(s)$ is $g^{-1} \bs(s) g$ we
 have,  for $\epsilon = \pm 1$,
\begin{align}
 (d/ds) \as^g(s) = g^{-1} \as'(s) g = g^{-1} * \bs(s)\epsilon\, g
           = * g^{-1} \bs(s) g\, \epsilon=  *Curv\ \as^g(s)\,  \epsilon.          \notag           
 \end{align}
 where Curv means the three dimensional curvature. 
 \end{proof}

\begin{notation}\label{notasd}{\rm ($\Ra_{\pm}$, $\A_{\pm}$, $\cg_{\pm}$)  Denote by $\Ra_+$ the
 set of anti-self-dual solutions
in $\Ra$  and by $\Ra_-$ the set of self-dual solutions in $\Ra$.  $\Ra_+$
and $\Ra_-$ are each gauge invariant sets by Lemma  \ref{lempe16}.  Let
\begin{align}
\A_+  &:= \{ \as(0): \as \in \Ra_+\}, \\
\A_- &:= \{ \as(0): \as \in \Ra_-\}.
\end{align}
$\A_\pm$ are gauge invariant sets because if $\as$ is the Poisson extension of $A$
then $ \as^g$ is the  Poisson extension of  $A^g$.
 We define submanifolds of configuration space by 
\begin{align}
\cg_+ = \A_+ /  \G,\ \ \  \cg_- = \A_-  /   \G
\end{align}
}
\end{notation}  
The definitions of $\cg_\pm$ given in  Notation \ref{notasd}  
in the non-abelian case are precisely analogous to those for the electromagnetic field and reduce to them when
$K = U(1)$.
  We saw that   in the electromagnetic case these manifolds control helicity in the sense that
a solution to Maxwell's equations with initial data in $T^*(\cg_+)$  has only positive helicity plane waves
in its plane wave decomposition. Similarly for $T^*(\cg_-)$. 
In the next section and in  \cite{G76}  we will provide further justification 
  for the interpretation of the manifolds $\cg_\pm$ as characterizers of helicity  in the non-abelian case .

\begin{remark}\label{remclosed}{\rm 
 In the Riemannian metric on $\A$ defined in Section \ref{secrm}
the subsets $\A_\pm$ are closed in $\A$. Since $\G$ acts as isometries on $\A$ the quotient spaces
$\cg_\pm$ are also closed submanifolds  of $\cg$. 
}
\end{remark}

\subsection{Flows of the helicity vector fields and positivity of $curl_A$} \label{secflows}
 We will give support in this section for the interpretation of the
  submanifolds $\cg_\pm$   as    the 
 non-abelian analogs of the electromagnetic helicity subspaces.  
 For non-abelian fields we have used (anti-)self duality to define the submanifolds $\cg_\pm$
  in Section \ref{seccpm}. 
  Now   we are going to show how this decomposition of configuration space relates to the signature of the 
 covariant curl. We will also show that the orthogonal direct sum \eref{mp5}, that holds in electromagnetism,
 has an analogous product decomposition in the non-abelian case.

  The transition  from the electromagnetic linear space theory to the Yang-Mills 
  non-linear theory   will be    guided by
  the following observations,   which will gradually shift the paradigm from
   functional analysis over the electromagnetic 
  Hilbert space  $\cg$  to infinite dimensional differential geometry over the 
  analogous non-abelian configuration space. Denote again by $C$ the operator curl.
  
  a. In electromagnetism the positive spectral subspace of $C$  
  is the null space of the operator $|C| - C$.
  
  b. The function 
  \beq
  h_+(A):= -(|C| - C)A       \label{py140}
  \eeq
   can be interpreted as a vector field  on the electromagnetic Hilbert space $\cg$.
  As such, it has a flow  $\exp(th_+)$. It will be shown that the flow  converges
   as $t\to \infty$ to the orthogonal
     projection $P_+$ of $\cg$ onto    $\cg_+ $

c. The operator $|C|$, whose definition as $\sqrt{C^2}$ has a functional analytic character, 
can be replaced in  electromagnetism 
by use of the Maxwell-Poisson semigroup, which has a PDE character. Indeed we see from \eref{mp30}
that  $|C|A =  -\as'(0)$, where $\as(s)$ is the solution to the Maxwell-Poisson equation with initial value $A$.

d. The second term in the vector field $h_+(A)$ in \eref{py140} is just the magnetic
 field  $curl\, A$   in its usual  representation as a 1-form (which is better written as $*B$ with $B=dA$).
  Thus the vector field $h_+$  can be specified  
 in electromagnetism   by the expression
 \beq
 h_+(A) = \as'(0) + *B,  \label{py141}
 \eeq
  thereby bypassing 
 the  functional analytic construction  
 $|C| =\sqrt{C^2}$.
 
e. 
(Decomposition of configuration space) In the non-abelian theory we want to avoid making any
 gauge choice, such as the Coulomb gauge. The expression \eref{py141} defines  a  vector field on $\A$ 
 in the non-abelian case. 
We will show that it flows exactly onto $\A_+$ as $t\to\infty$, analogous to  the abelian case, thereby 
 producing a non-linear map $P_+:\A \rightarrow \A_+$. Similarly for $\A_-$. 
 Upon quotienting by the gauge group, we arrive at a decomposition
 $\cg = \cg_+ \times \cg_-$. The one-to-one ness of this map needs to be verified, however.

 f. (Positivity of the covariant curl.) 
 In electromagnetism one has \linebreak
  $(curl A, A)_{L^2(\R^3)} \ge 0$  for all $A \in \cg_+$ by   \eref{he202} and Theorem \ref{thmhel1}.
 In the non-abelian case $curl_A$, the gauge covariant curl,  does not act on connection
  forms $A$ but on tangent    vectors to $\A$. We will  show that   
 $(curl_A u, u)_{L^2(\R^3)} \ge 0$ for all $A \in \A_+$  
 and for all tangent vectors $u$ to $\A_+$ 
 at $A$.  The form of the last statement   reflects the 
 change often  needed 
when replacing a linear space by a non-linear manifold, 
 where  one can no longer identify 
the  manifold with its tangent spaces.

       \begin{definition}{\rm For a point $A$ in $\A$ denote by $B$ its
         curvature: $B = dA + A\wedge A$ and by $\as(s)$  the  Poisson extension of $A$. Define 
\begin{align}
h_+(A) &= \as'(0) +   *B     \label{pe150}\\
h_-(A) &= \as'(0) -   *B. \label{pe151}
\end{align}
Note that $h_\pm(A)$ are 1-forms and therefore can be identified as tangent vectors to $\A$ at $A$. 
$h_\pm$ therefore can and will be identified as  vector fields on $\A$. Moreover they are both gauge covariant:
\begin{align}
h_{\pm} (A^g) = g^{-1} h_{\pm}(A) g.               \notag
\end{align}
}
\end{definition}

      \begin{theorem}\label{thmf1} \ 
      
           a.  The vector   fields $h_\pm(A)$ are   horizontal at $A$ and mutually orthogonal 
            in $L^2(\R^3)$ for each point $A \in \A$. 
          
     \bigskip
     b.  $h_\pm$ is zero exactly on $\A_\pm$. 

    \bigskip

     c.     The flows $\exp{th_\pm}$ generated by the vector fields $h_\pm$ for $t \ge 0$ satisfy
 \begin{align}
P_+(A) & := \lim_{t\to \infty} (\exp th_+)A\  \ \text{lies in}\ \  \A_+.          \label{py160}\\
P_-(A) &:= \lim_{t\to \infty} (\exp th_-)A\  \ \text{lies in}\ \  \A_-.            \label{py161}\\
P_\pm(A) &= A \ \text{if}\ \ A\in \A_\pm.                                                                \label{py162}
\end{align}

\bigskip

 d. Suppose that the Poisson action is convex at $A$. That is, 
 \beq
 \p_u^2\P(A) \ge 0\     \text{for all}\ u.
 \eeq 
 If $A\in \A_+$ and $u$ is tangential to $\A_+$ at $A$ then
 \begin{align}
 (curl_A u, u)_{L^2} \ge 0.           \label{py163}
 \end{align}
 
 \noindent
 If $A\in \A_-$ and $u$ is tangential to $\A_-$ at $A$ then
 \begin{align}
 (curl_A u, u)_{L^2} \le 0.        \label{py164}
  \end{align}
\end{theorem}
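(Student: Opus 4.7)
The plan is to dispatch (a), (b), and (d) using the Poisson-extension calculus already in place, while (c) is the real obstacle and requires constructing an auxiliary Lyapunov functional. For part (a), horizontality of each summand of $h_\pm(A) = \as'(0) \pm *B$ follows separately: $d_A^*\as'(0) = 0$ is Corollary \ref{corpy6}, and $d_A^*(*B) = -*d_A B = 0$ via the identity $d_A^* = -*d_A*$ on 1-forms combined with the Bianchi identity $d_A B = 0$. For orthogonality, the cross terms in the real $L^2$ inner product cancel, giving $(h_+,h_-)_{L^2} = \|\as'(0)\|^2 - \|\bs(0)\|^2$. I would show this vanishes by noting that $\bs'(s) = d_{\as(s)}\as'(s)$ (differentiate \eqref{py3} in $s$) and applying \eqref{pe6}:
\begin{align}
(d/ds)\|\as'(s)\|^2 = 2(\as',d_{\as}^*\bs) = 2(d_{\as}\as',\bs) = 2(\bs',\bs) = (d/ds)\|\bs(s)\|^2,\notag
\end{align}
so $\|\as'(s)\|^2 - \|\bs(s)\|^2$ is constant; finite Poisson action forces a sequence $s_n\to\infty$ along which both integrands tend to zero, so the constant must be $0$.

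For part (b), (anti-)self-duality at $s=0$ immediately gives $h_\pm(A) = 0$. Conversely, I set $\phi(s) := \as'(s) + *\bs(s)$ and compute, using $*d_{\as}* = d_{\as}^*$ on 2-forms,
\begin{align}
\phi'(s) = \as''(s) + *d_{\as}\as'(s) = *d_{\as(s)}\bigl(*\bs(s) + \as'(s)\bigr) = *d_{\as(s)}\phi(s).\notag
\end{align}
Thus $\phi$ solves a first-order linear transport equation; the initial condition $\phi(0) = h_+(A) = 0$ forces $\phi\equiv 0$, so $\as$ is anti-self-dual and $A\in\A_+$. The argument for $h_-$ and $\A_-$ is identical.

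For part (d), let $t\mapsto A(t)\in\A_+$ be a curve with $A(0) = A$, $A'(0) = u$. Each $\as_t$ is anti-self-dual, so differentiating $\as_t'(s) = -*\bs_t(s)$ in $t$ at $t=0$ and using the linearization $\delta\bs = d_{\as}\ua$ produces the linearized ASD equation $\ua'(s) = -*d_{\as(s)}\ua(s)$. At $s=0$, $*d_A u = \text{curl}_A u$, so $\ua'(0) = -\text{curl}_A u$. Substituting into \eqref{py209} gives
\begin{align}
\p_u^2\P(A) = -2(u,\ua'(0))_{L^2} = 2(u,\text{curl}_A u)_{L^2},\notag
\end{align}
and the convexity hypothesis yields \eqref{py163}. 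The case $A\in\A_-$ is the same with the sign of the (anti-)self-duality equation reversed, yielding \eqref{py164}.

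Part (c) is the main obstacle. The strategy is to exhibit $h_+$ (up to a positive factor) as the negative $L^2$-gradient of the anti-self-duality defect
\begin{align}
H_+(A) := \int_{\R^4_+}|\as'(s) + *\bs(s)|^2\,dx\,ds = \P(A) + 2\int_0^\infty(\as'(s),*\bs(s))_{L^2}\,ds,\notag
\end{align}
which is nonnegative and, by part (b), vanishes precisely on $\A_+$. The cross term is recognizable as the four-dimensional Pontryagin integral, $2\int_0^\infty(\as',*\bs)_{L^2}ds = \int_{\R^4_+}\text{Tr}(F\wedge F)$, which by $\text{Tr}(F\wedge F) = d(\text{CS}(\as))$ and Stokes' theorem reduces to a three-dimensional Chern-Simons integral of $A$ alone (using decay of $\as$ at $s=\infty$, or the asymptotic Coulomb gauge). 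Combining \eqref{pe10} with the standard variation $\delta\!\int\text{CS}(A) = 2(u,*B)_{L^2}$ verifies that the $L^2$-gradient of $H_+$ equals $-2h_+$ on horizontal vectors. Monotone decrease of $H_+(A(t))\ge 0$ along integral curves of $h_+$ then extracts a limit $P_+(A)$, which is a zero of $h_+$ and hence lies in $\A_+$ by part (b); the fixed-point statement $P_\pm|_{\A_\pm} = \text{id}$ is immediate from $h_\pm \equiv 0$ there. The hardest analytical input is proving global existence and long-time compactness of the flow--needed both to define $\exp(th_\pm)$ and to pass to the limit $t\to\infty$--which is of the same character as the existence/uniqueness question already assumed for the Yang-Mills-Poisson equation and parallels the delicate issues for the Yang-Mills heat flow cited in Remark \ref{remsoft}.
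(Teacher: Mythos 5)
Your parts (a), (b) and (d) reproduce the paper's own argument (Lemmas \ref{lemf2} and the proof of item d in Theorem \ref{thmf1}) essentially verbatim: the same two-term horizontality check, the same conservation law $\|\as'(s)\|_2^2-\|\bs(s)\|_2^2=\text{const}=0$, the same first-order transport equation for $\as'+*\bs$, and in (d) your linearized (anti-)self-duality equation $\ua'(s)=-*d_{\as(s)}\ua(s)$ is just the paper's statement that the tangential derivative of $h_+\equiv 0$ along $\A_+$ vanishes, evaluated at every $s$ rather than only at $s=0$. Part (c) is where you genuinely diverge. The paper uses the Poisson action $\P$ itself as the Lyapunov functional: Lemma \ref{lemf5} shows $\P$ is non-increasing along the flow via the Schwarz inequality together with \eref{f10}, and Lemma \ref{lemf6} identifies stationary points of $\P$ along the flow direction with zeros of $h_\pm$ by \emph{saturation} of the Schwarz inequality. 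You instead introduce the anti-self-duality defect $H_+(A)=\int_{\R^4_+}|\as'+*\bs|^2$, split off the cross term as a Pontryagin/Chern--Simons boundary integral, and identify the flow of $h_+$ as the negative $L^2$-gradient flow of $\tfrac12 H_+$; the descent identity $(d/dt)H_+=-2\|h_+\|_{L^2}^2$ then pins the limit as a zero of $h_+$ directly, with no saturation argument needed. Your functional has the advantage that it is manifestly nonnegative and vanishes exactly on the target set $\A_+$ (one can check in the abelian limit that $H_+(A)=((|C|-C)A,A)$, confirming the sign $\nabla H_+=-2h_+$), so the variational picture is cleaner; the cost is the extra input that the Chern--Simons boundary term at $s=\infty$ is controlled, which needs $\ua(s)\to0$ and some hypothesis on $\as_\infty$ (the asymptotic Coulomb gauge suffices, as you note). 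The paper's route avoids any appeal to topological identities and stays entirely within the first-derivative formula \eref{pe10}. Both arguments are heuristic at exactly the same point --- existence, global-in-time behavior, and compactness of the flow --- which the paper concedes explicitly and you flag as well, so neither is more rigorous than the other on the analytical side.
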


The proof depends on the following lemmas.
\begin{lemma}    \label{lemf2}
\begin{align}
&a.)\ \ \ d_A^*\ h_\pm(A) =0 \ \ \              \forall\ \ A \in \A.                                        \label{f9} \\
&b.)\ \ \ (h_+(A), h_-(A))_{L^2(\R^3; \L^1 \otimes \kf)} = 0\ \ \ \forall\ \ A \in \A.             \label{f11} \\
&c.)\ \ \ h_+(A) =0\ \ \text{if and only if}\  A \in \A_+\ \ \text{and} \label{f13}\\
&\ \ \ \ \ \ h_-(A) = 0\ \ \text{if and  only if}\  A \in \A_-. \notag
\end{align}
\end{lemma}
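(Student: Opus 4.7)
The plan is to handle the three parts in sequence, since each has a different flavor.

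\textbf{Part a):} Both summands in $h_\pm(A) = \as'(0) \pm *B$ are annihilated by $d_A^*$. For $\as'(0)$, this is exactly Corollary \ref{corpy6}. For $*B$, which is a 1-form on $\R^3$, I apply the Hodge identity $d_A^* = -*d_A*$ on 1-forms together with the Bianchi identity $d_A B = 0$, obtaining $d_A^*(*B) = -*d_A(**B) = -*d_A B = 0$. Summing gives \eref{f9}.

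\textbf{Part b):} The cross-term equals $\|\as'(0)\|^2 - \|\bs(0)\|^2$ since $*$ is an isometry, so it suffices to show that $\|\as'(s)\|^2 - \|\bs(s)\|^2$ is constant in $s$ and vanishes. Differentiating in $s$, using the Yang-Mills-Poisson equation $\as''(s) = d_{\as(s)}^*\bs(s)$ together with the standard identity $\bs'(s) = d_{\as(s)}\as'(s)$, the two cross terms cancel by the adjoint relation between $d_{\as(s)}$ and $d_{\as(s)}^*$. So $C := \|\as'(s)\|^2 - \|\bs(s)\|^2$ is constant in $s$. Since finite action gives $\int_0^\infty (\|\as'(s)\|^2 + \|\bs(s)\|^2)\, ds < \infty$, while $\|\as'(s)\|^2 + \|\bs(s)\|^2 \ge |C|$ uniformly in $s$, one concludes $C = 0$. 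This proves \eref{f11}.

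\textbf{Part c):} The easy direction follows directly from the definitions: if $A \in \A_\pm$ then $\as'(s) = \mp *\bs(s)$ for all $s$, so evaluation at $s = 0$ gives $h_\pm(A) = 0$. For the converse, suppose $h_+(A) = 0$ and introduce $\phi(s) := \as'(s) + *\bs(s)$; then $\phi(0) = 0$. Using $\as''(s) = d_{\as(s)}^* \bs(s) = *d_{\as(s)}(*\bs(s))$ (valid because $d_A^* = *d_A*$ on 2-forms) together with $*\bs'(s) = *d_{\as(s)}\as'(s)$, a brief calculation produces the clean evolution
\begin{align*}
\phi'(s) = *d_{\as(s)}\,\phi(s).
\end{align*}
Since $\phi(0) = 0$, uniqueness for this first-order linear evolution gives $\phi \equiv 0$; equivalently, $\as$ is anti-self-dual, so $A \in \A_+$. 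The case of $h_-$ and $\A_-$ is symmetric, with $\phi(s) := \as'(s) - *\bs(s)$.

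The main obstacle is the uniqueness step in the converse of c): $*d_{\as(s)}$ is an unbounded first-order differential operator on 1-forms over $\R^3$, so a direct Gr\"onwall bound does not apply. However $*d_{\as(s)}$ is symmetric on 1-forms (an easy consequence of $d_A^* = *d_A*$ on 2-forms and $**=1$), so the evolution $\phi' = *d_\as \phi$ is formally unitary and uniqueness with $\phi(0)=0$ is standard. Alternatively, one can argue indirectly via Lemma \ref{lempe15}: the anti-self-dual flow starting at $A$ with the consistent initial velocity $\as'(0) = -*B$ yields an alternative finite-action Poisson extension of $A$, and the paper's standing uniqueness assumption for Poisson extensions then forces it to coincide with $\as$.
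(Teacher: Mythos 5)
Your proof follows the paper's argument essentially verbatim in all three parts: horizontality of both summands of $h_\pm(A)$ for a), constancy and vanishing of $\|\as'(s)\|_2^2 - \|\bs(s)\|_2^2$ for b), and the first-order linear evolution for $\alpha(s) = \as'(s) \pm *\bs(s)$ for c). One small caveat on your closing aside: a symmetric generator $S$ makes $\phi' = S\phi$ a self-adjoint, not unitary, evolution (unitarity would require $iS$), so uniqueness from $\phi(0)=0$ is better justified by a log-convexity argument for $\|\phi(s)\|^2$ — though the paper itself simply asserts this uniqueness, so you have if anything been more careful than the source.
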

 \begin{proof}  From \eref{py20} we see that $d_A^* \as'(0) = 0$. Moreover 
 $d_A^* (*B) = -(*d_A*)* B = -*d_A B =0$
 by the Bianchi identity.  We have used here the identity $d_A^* = -* d_A *$ when acting on 1-forms.
Both terms in $h_\pm(A)$ are therefore horizontal at $A$, which proves \eref{f9}.

To prove \eref{f11}  we will first establish the identity
 \begin{align}
 \|\as'(s)\|_2^2 = \| \bs(s)\|_2^2\ \ \ \forall \ \ s \ge 0.               \label{f10}
 \end{align}
 The computation
 \begin{align*}
(1/2) (d/&ds)\Big(\|\as'(s)\|_2^2 - \| \bs(s)\|_2^2) \\ &=(\as''(s), \as'(s)) - (\bs'(s), \bs(s)) \\
&=(d_{\as(s)}^* \bs(s), \as'(s)) - (d_{\as(s)} \as'(s), \bs(s)) \\
&=(\bs(s), d_{\as(s)} \as'(s)) -    (d_{\as(s)} \as'(s), \bs(s)) \\
&=0
\end{align*}
shows that   $\|\a'(s)\|_2^2 - \| \b(s)\|_2^2$ is constant in $s$. Since $\P(A) < \infty$ it follows that
the constant is zero. This proves \eref{f10}.
           Using \eref{f10} we find  
\begin{align*}
(h_+(A), h_-(A))_2 &= (\as'(0), \as'(0))_2 - (*\bs(0), *\bs(0))_2 \\
&= \|\as'(0)\|_2^2 - \|\bs(0)\|_2^2 \\
&=0.
\end{align*}
This proves \eref{f11}.

 For the proof of \eref{f13} observe that if $A \in \A_+$ and $\as$ is its Poisson extension then
 $\as'(s) +  *\bs(s) =0$ for all $s \ge 0$ and in particular $h_+(A) = \as'(0)+   *\bs(0) =0$. 
 Similarly, if $A \in \A_-$ then $h_-(A) =0$.

 To prove the converses let   $\alpha(s) = \as'(s)+*\bs(s)$. 
 Then
            \begin{align*}
 (d/ds) \alpha(s) &= \as''(s) +       * d_\as \as'(s) \\
 &= d_\as^*\bs(s) +  *d_\as \as'(s) \\
 &= *d_\as(*\bs(s) +  \as'(s) )\\
 &= *d_{\as(s)} \alpha(s).
 \end{align*}
 This is  a linear homogeneous first order differential equation in $\alpha(s)$. Therefore if $\alpha(0) = 0$ then
  $\alpha(s) \equiv 0$. 
   So if $h_+(A) =0$ then   $\as'(s)+   *\bs(s) =0$
   for all $s \ge 0$. Therefore $\as$ is anti-self-dual and $A \in \A_+$. 
  Similarly if $h_-(A) =0$ then  the function $\beta(s)\equiv \as'(s) -   *\bs(s)$ is zero at $s=0$ and,
  since it satisfies the differential equation $ (d/ds) \beta(s) = -   *d_{\as(s)} \beta(s)$, 
  it is identically zero.
  So $\as$ is self-dual and $A \in \A_-$.
\end{proof}

        \begin{lemma}\label{lemf5} For $t \ge 0$, $\P(\exp(th_\pm)A)$ is a non-increasing function of $t$ for
          each  $A \in \A$  and for each of the two flows.
\end{lemma}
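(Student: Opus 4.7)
The plan is to differentiate $\P$ along the flow and observe that the derivative turns out to be exactly $-\|h_\pm(A(t))\|_{L^2}^2$. Setting $A(t) = \exp(t h_\pm)A$, the definition of the flow gives $(d/dt)A(t) = h_\pm(A(t))$, so the chain rule together with Lemma \ref{lempe2} yields
\begin{align}
\frac{d}{dt}\P(A(t)) = \p_{h_\pm(A(t))} \P(A(t)) = -2\bigl(h_\pm(A(t)), \as_t'(0)\bigr)_{L^2},
\end{align}
where $\as_t$ denotes the Poisson extension of $A(t)$. This identifies $\as_t'(0)$ (up to a factor of $-2$) as the $L^2$-gradient of $\P$ at $A(t)$, and the task is to compare it to $h_\pm(A(t)) = \as_t'(0) \pm *B(A(t))$.

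Next I would use the identity $\|\as_t'(0)\|_{L^2}^2 = \|B(A(t))\|_{L^2}^2$ from equation \eqref{f10} in the proof of Lemma \ref{lemf2}, which applies at each $t$ because $\P(A(t)) < \infty$ along the flow. Expanding directly,
\begin{align}
\|h_\pm(A(t))\|_{L^2}^2 &= \|\as_t'(0)\|_{L^2}^2 \pm 2\bigl(*B(A(t)), \as_t'(0)\bigr)_{L^2} + \|*B(A(t))\|_{L^2}^2 \\
&= 2\|\as_t'(0)\|_{L^2}^2 \pm 2\bigl(*B(A(t)), \as_t'(0)\bigr)_{L^2}.
\end{align}
Solving for the cross term $\pm 2(*B(A(t)), \as_t'(0))_{L^2} = \|h_\pm(A(t))\|_{L^2}^2 - 2\|\as_t'(0)\|_{L^2}^2$ and substituting back into the derivative yields
\begin{align}
\frac{d}{dt}\P(A(t)) = -2\|\as_t'(0)\|_{L^2}^2 - \bigl(\|h_\pm(A(t))\|_{L^2}^2 - 2\|\as_t'(0)\|_{L^2}^2\bigr) = -\|h_\pm(A(t))\|_{L^2}^2 \le 0.
\end{align}
This establishes monotonicity and as a bonus shows that the rate of decrease vanishes exactly on $\A_\pm$, dovetailing with Lemma \ref{lemf2}(c) and foreshadowing the convergence claimed in \eqref{py160}--\eqref{py162}.

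The main obstacle is analytic rather than algebraic. To make the differentiation rigorous one must know that the flow of $h_\pm$ exists on $\A$ for $t \ge 0$, that the Poisson extension $\as_t$ exists and varies sufficiently smoothly in $t$ for Lemma \ref{lempe2} to apply along the trajectory, and that the identity \eqref{f10} persists under the flow. These are of a piece with the standing existence-uniqueness assumptions already adopted in the paper for the Yang-Mills-Poisson boundary value problem, and once they are granted the computation collapses to the short identity displayed above.
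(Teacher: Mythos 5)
Your proof is correct and follows essentially the same route as the paper: both differentiate $\P$ along the flow via Lemma \ref{lempe2} and invoke the identity $\|\as'(s)\|_2 = \|\bs(s)\|_2$ from the proof of Lemma \ref{lemf2}. The only difference is cosmetic --- the paper closes with the Cauchy--Schwarz inequality to conclude $(d/dt)\P \le 0$, whereas you carry the algebra one step further to the exact (and slightly more informative) identity $(d/dt)\P(A(t)) = -\|h_\pm(A(t))\|_{L^2}^2$.
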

            \begin{proof} It suffices to compute the derivative of $\P(  \exp(th_\pm) A)$ at $t =0$ because
$\exp(th_{\pm}) $ are semigroups. At $t=0$ we have, by \eref{pe10},
       \begin{align*}
 (1/2)(d/dt)\P(\exp(th_+) A)\Big|_{t=0} &= (1/2)\p_u \P(A)\Big|_{u = h_+(A) } \\
 &=-(u, \as'(0))\Big|_{u = h_+(A)} \\
 &= -(\as'(0) +  *B, \as'(0)) \\
 &= - \|\as'(0)\|_2^2 -  (*\bs(0), \as'(0)).
 \end{align*}
 But from  the Schwarz inequality and \eref{f10} we find
 \begin{align*}
 |(*\bs(0), \as'(0))| &\le \|\bs(0)\|_2 \|\as'(0)\|_2 \\
&\le \|\as'(0)\|_2^2.
 \end{align*}
 Hence
 \begin{align}
 (d/dt)\P(\exp(th_+) A)\Big|_{t=0} \le 0. \notag
 \end{align}
 A similar proof applies to $\exp(th_-)$. 
 \end{proof}

\begin{lemma} \label{lemf6}  
Let $A \in \A$.

 If $h_+(A)\P(A) =0$ then $h_+(A) =0$.
 
If $h_-(A)\P(A) =0$ then $h_-(A) =0$.
\end{lemma}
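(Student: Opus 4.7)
The plan is to read $h_+(A)\P(A)$ as the directional derivative $(\p_{h_+(A)} \P)(A)$ of the Poisson action along the value of the vector field $h_+$ at $A$, which matches the usage in the proof of Lemma \ref{lemf5}. First I would plug $u = h_+(A) = \as'(0) + *\bs(0)$ into the first-derivative formula \eref{pe10} from Lemma \ref{lempe2} to obtain
\begin{align*}
-\tfrac{1}{2}\, \p_{h_+(A)} \P(A) \;=\; \bigl(\as'(0) + *\bs(0),\, \as'(0)\bigr)_{L^2} \;=\; \|\as'(0)\|_2^2 + \bigl(*\bs(0),\, \as'(0)\bigr)_{L^2}.
\end{align*}
The hypothesis $h_+(A)\P(A)=0$ therefore collapses to $(*\bs(0),\as'(0))_{L^2} = -\|\as'(0)\|_2^2$.

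Next, I would invoke the energy-balance identity \eref{f10}, namely $\|\as'(s)\|_2 = \|\bs(s)\|_2$ for all $s \ge 0$, combined with the fact that the three-dimensional Hodge star is an $L^2$ isometry on 2-forms. This yields $\|*\bs(0)\|_2 = \|\as'(0)\|_2$, so the previous display becomes
\begin{align*}
\bigl(*\bs(0),\, \as'(0)\bigr)_{L^2} \;=\; -\,\|*\bs(0)\|_2\, \|\as'(0)\|_2 ,
\end{align*}
which is precisely the equality case of Cauchy--Schwarz with a minus sign. Consequently, either both vectors vanish, or $*\bs(0)$ is a strictly negative scalar multiple of $\as'(0)$; and the norm equality $\|*\bs(0)\|_2 = \|\as'(0)\|_2$ then pins the constant to $-1$. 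In either scenario $*\bs(0) = -\as'(0)$, which is exactly $h_+(A)=0$.

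The proof of the second assertion is entirely parallel: substituting $u = h_-(A) = \as'(0) - *\bs(0)$ into \eref{pe10} gives $-\tfrac{1}{2}\, \p_{h_-(A)} \P(A) = \|\as'(0)\|_2^2 - (*\bs(0),\as'(0))_{L^2}$, and the same saturated Cauchy--Schwarz step, again fed by \eref{f10}, forces $*\bs(0) = \as'(0)$, i.e.\ $h_-(A)=0$. No real obstacle appears; the argument is essentially a short rigidity statement for a saturated Cauchy--Schwarz inequality. The only point to watch is the degenerate possibility $\as'(0)=0$, but this is handled at no extra cost: \eref{f10} immediately forces $*\bs(0)=0$ as well, so $h_\pm(A)=0$ tautologically.
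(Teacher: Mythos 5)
Your proposal is correct and follows essentially the same route as the paper: substitute $u=h_\pm(A)$ into the first-derivative formula \eref{pe10}, invoke the energy balance \eref{f10} to get $\|*\bs(0)\|_2=\|\as'(0)\|_2$, and conclude $*\bs(0)=\mp\as'(0)$ by saturation of the Cauchy--Schwarz inequality. Your explicit treatment of the degenerate case $\as'(0)=0$ is a small additional care the paper leaves implicit, but the argument is the same.
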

  \begin{proof} The hypotheses are short for  $\p_u\P(A)\Big|_{u=h_\pm (A)} =0$. 
  Using \eref{pe10} we see that 
  \begin{align}
  0 =(1/2) h_+(A)\P(A)  &=- (h_+(A), \as'(0)) \\
  &= -(\as'(0)+  *B, \as'(0))\\
  &=-\|\as'(0)\|_2^2   -(*\bs(0), \as'(0)).
  \end{align}
  Therefore
  \beq
 -(*\bs(0), \as'(0)) = \| \as'(0)\|_2^2
  \eeq
  But $\| -*\bs(0)\|_2 = \|\as'(0)\|_2$ by \eref{f10}. 
  Therefore $ -*\bs(0) = \as'(0)$ by saturation
   of the Schwarz   inequality. That is, $h_+(A) =0$. In the case of $h_-$ one finds
    $( *\bs(0), \as'(0)) = \| \as'(0)\|_2^2$ and therefore $ *\bs(0) = \as'(0)$. 
    Hence   $h_-(A) =0$.
  \end{proof}

           \bigskip
           \noindent
    \begin{proof}[Proof of Theorem \ref{thmf1}]   
    Items a) and b) of the theorem are proved in Lemma \ref{lemf2}.

   Heuristic argument for \eref{py160}:  The existence of 
\beq
A_+ := \lim_{t\to \infty} (\exp th_+)A
\eeq
 in the $L^2(\R^3)$ sense of convergence is reasonable because $\P$
is decreasing on the orbit of the flow by Lemma \ref{lemf5}, 
while $\{A: \P(A) \le const.\}$ (modulo gauge transformations) is compact in $L^2$ norm
 over each bounded set in $\R^3$, since $\P$ has nominally a Sobolev $H_{1/2}$ strength
 (modulo gauge transformations).
Assuming then the existence of the limit,  $\P(A_+)$ cannot decrease anymore under the flow.
 Hence $h_+(A_+)\P(A_+) =0$.
Therefore $h_+(A_+) =0$ by Lemma \ref{lemf6}. So $A_+ \in \A_+$ by \eref{f13}.  
Conversely, if $A \in \A_+$ then $h_+(A) =0$ by \eref{f13}. 
Therefore $\exp(th_+) A = A$. So $A_+ = A$ if $A \in \A_+$. A similar argument applies to $A_-$.
This proves item c) of the theorem.
 Of course this ensures that the maps $P_\pm: \A \rightarrow \A_{\pm}$  are surjective.

For the proof of \eref{py163} suppose that $A\in \A_+$ and that $u$ is tangential to $\A_+$ at $A$.
Since $h_+$ is identically zero on $\A_+$ its tangential derivative $\p_u h_+(A)$ is zero.
 That is, by \eref{pe150},
\begin{align}
 \ua'(0) + *d_Au =0,
\end{align}
where $\ua(s)$ is the Poisson extension of $u$ along $\as$.
 So $curl_A u = *d_Au=-\ua'(0)$. From the second derivative formula \eref{py209} we find
 $\p_u^2\P(A) = -2(u, \ua'(0))$ for all $A$ and $u$. In particular 
 \beq
 \p_u^2\P(A)  =2(u,curl_A u)\ \ \ \text{if}\ \ A \in \A_+\ \ \text{and $u$ is tangential to}\ \A_+. \label{py170}
 \eeq
 From our convexity assumption on the Poisson action at $A$ it now follows that \eref{py163} holds.
In case $A \in\A_-$ we find  $curl_A u = \ua'(0)$ and therefore \eref{py164} holds.
\end{proof}

\begin{remark}{\rm 
The intersection $\A_+\cap \A_-$ consists of pure gauges. Indeed if $A$ lies
 in this intersection then $h_+(A) = h_-(A) = 0$. Therefore $*B =0$ and $\as'(0) = 0$. Since $B =0$,
 $A$ is a pure gauge.
 }
 \end{remark}

\subsection{The flows in the electromagnetic case}       \label{secfem} 
In the electromagnetic case the vector fields $h_\pm$ are given  by
\begin{align}
    h_\pm (A) = (-|C| \pm C) A,      \label{f104}
    \end{align}
as we see from the argument showing equality between \eref{py140} and \eref{py141}.
We can operate in Coulomb gauge in this example. The limits of the flows of $h_\pm$ are given in the following theorem.

\begin{theorem} \label{thmfem} $($Electromagnetic case.$)$ Denote by $P_\pm$ the orthogonal projections of $\cg$ onto $\cg_\pm$ respectively. Then 
\begin{align}
\lim_{t\to \infty} \exp(th_\pm) A = P_\pm A\ \ \ \text{for}\ \ A \in \cg.    \label{f103}
\end{align}
\end{theorem}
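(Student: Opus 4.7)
The plan is to reduce everything to the functional calculus of the self-adjoint operator $C = \text{curl}$ on $\cg$. First I would rewrite the vector fields in operator form: since the right-hand sides of \eref{pe150}--\eref{pe151} were just shown to agree with $(-|C|\pm C)A$ in the electromagnetic case (cf.\ the transition from \eref{py140} to \eref{py141}), the flows are the solutions of the linear, constant-coefficient ODEs
\begin{align*}
\dot A(t) = (C - |C|) A(t) \quad \text{and} \quad \dot A(t) = -(C+|C|) A(t),
\end{align*}
which by the spectral theorem have closed-form solutions
\begin{align*}
\exp(th_+) A = e^{t(C - |C|)} A, \qquad \exp(th_-) A = e^{-t(C + |C|)} A.
\end{align*}
Both $C - |C|$ and $-(C + |C|)$ are non-positive self-adjoint operators on the divergence-free subspace, so they generate contraction semigroups that are well defined on all of $\cg$.

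Second, I would split $A = P_+ A + P_- A$ according to the spectral decomposition of $C$. On the positive spectral subspace, $|C| = C$, so $C - |C|$ vanishes and $e^{t(C-|C|)}$ acts as the identity; on the negative spectral subspace, $|C| = -C$, so $C - |C| = 2C \le 0$. Thus
\begin{align*}
e^{t(C - |C|)} A = P_+ A + e^{2tC} P_- A .
\end{align*}
Symmetrically, $e^{-t(C+|C|)} A = P_- A + e^{-2tC} P_+ A$.

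The only remaining point is to verify that the second term in each expression vanishes as $t \to \infty$ in the configuration-space norm $\|\cdot\|_\cg = \||C|^{1/2}\,\cdot\,\|_{L^2}$. Since $|C|^{1/2}$ commutes with $C$ and with the spectral projections, this reduces to showing $\|e^{2tC}|C|^{1/2} P_- A\|_{L^2} \to 0$ (and similarly for the $P_+$ piece of $h_-$). Because $C$ has no zero eigenvalue on divergence-free $L^2$ vector fields (as recorded in Definition \ref{defconfig}) and $e^{2t\lambda} \le 1$ for $\lambda \le 0$ with $e^{2t\lambda} \to 0$ for $\lambda < 0$, dominated convergence against the spectral measure of $C$ applied to the vector $|C|^{1/2} P_- A \in L^2$ gives the required convergence. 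This establishes \eref{f103} for $h_+$, and the same argument with the roles of $P_+$ and $P_-$ swapped handles $h_-$.

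The only real obstacle is the norm in which convergence is asserted: one must ensure the limit holds in the $\cg$-topology \eref{he201}, not merely in $L^2$. This is handled by the commutativity of $|C|^{1/2}$ with the semigroup, which lets the dominated-convergence argument run inside the spectral integral defining the $\cg$-norm; no further estimates are needed because $|C|^{1/2} A \in L^2$ is precisely the hypothesis $A \in \cg$.
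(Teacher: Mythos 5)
Your proposal is correct and follows essentially the same route as the paper: write $h_\pm(A)=(-|C|\pm C)A$, solve the linear flow equation explicitly via the spectral theorem, split $A=P_+A+P_-A$, and observe that the term on the ``wrong'' spectral subspace decays (your $e^{2tC}P_-A$ is the paper's $e^{-2t|C|}P_-A$) because $|C|>0$ has trivial null space on divergence-free fields. Your added detail on dominated convergence in the $\cg$-norm is a slightly more explicit justification of the final limit than the paper gives, but it is the same argument.
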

    \begin{proof}    The flow equations  for these vector fields are  the linear differential equations
 \begin{align}
(d/dt) A(t) = (- |C| \pm C) A(t),           \notag
\end{align}  
whose solutions with initial value $A$ are  correctly given by 
\begin{align}
A(t) = e^{ t (-|C| \pm C)} A, \ \ \ t \ge 0            \notag
\end{align}
because $-|C| \pm C \le 0$ in both cases. But $-|C| +C$ is zero on $\cg_+$ and is $-2|C|$ on $\cg_-$.
Therefore, writing $A = P_+A + P_-A$, we find
\begin{align}
A(t) =   e^{ t (-|C| + C)}(P_+A + P_-A) = P_+A  + e^{-2t|C|}\, P_- A,     \notag
\end{align}
which converges in $\cg$ norm to $P_+A$ as $t \to \infty$ because  $|C| \ge 0$ and has trivial null space
in our Coulomb gauge. This proves \eref{f103} in the case of $h_+$.    The case of $h_-$ is similar
because $-|C| - C$ is zero on $\cg_-$ and is $-2|C|$ on $\cg_+$.
\end{proof}

\bigskip

\section{Open questions} \label{secoq}
\

1. Existence and uniqueness of solutions to the Euler equation \eref{pe6} for the minimization of
     the Poisson action have been assumed throughout the sections on the Yang-Mills helicity theory.
     A slight variant of this non-linear boundary value problem has been investigated  in deep work by 
     Marini and Isobe cited in Section  \ref{secpacs}. The form of the existence and uniqueness theorem 
     we need is still open.

     2. The soft Coulomb gauge and asymptotic Coulomb gauge were defined in Section \ref{secpacs}
    under the assumption that $\lim_{s\to\infty} \as(s)$ exists in some useful sense. It needs to be proven
    that this limit actually exists for $H_{1/2}$ initial data  for the Yang-Mills-Poisson equation 
    and is a pure gauge. 
    Such a limit theorem has been established for the Yang-Mills heat equation in various cases.
     See Remark \ref{remsoft}.

3. The flows of the vector fields $h_\pm$ have been assumed to exist in Theorem \ref{thmf1}
 for all positive time. 
This needs to  be proven along with the existence of the limits as time goes to infinity.  The resulting
limit maps $P_\pm$ are gauge invariant and therefore produce a map $P_+ \times P_-: \cg\rightarrow
\cg_+ \times \cg_-$. For electromagnetism this gives the linear decomposition $\cg = \cg_+\oplus \cg_-$.
In the non-linear case it remains to be seen whether this map is one-to-one.

 4. The region of convexity of the Poisson action function $\P$ should be understood better.
 Is there a neighborhood of $A=0$ on which $\P$ is convex? Is there a  boundary to this region
 which somehow reflects the Gribov ambiguity?

 5. The Lorentz invariance of the natural metric on phase space should be proven or disproven
  in the non-abelian theory, 
  as already pointed out in Remark \ref{rmkLinvce}. The first step would be to prove invariance
  under time translation, just  at an informal level for, say, smooth initial data. If such invariance should hold
  then the metric itself would be a  candidate for a useful  invariant for proving short and long time
   existence of solutions to the hyperbolic Yang-Mills equation with 
   $H_{1/2} \oplus H_{-1/2}$ initial data. 
   Existence of  solutions to this equation has not yet been proven for these critical initial data over 
   three dimensional space. See e.g. \cite{Seg111,EM1,EM2,KM,Tao,Oh1,Oh2,ST}.            
   Invariance under Lorentz boosts requires gauge transforming a boosted solution to temporal gauge
   as part of the boost transformation.

\section{Bibliography}

\bibliographystyle{amsplain}
\bibliography{ymh}

\def\polhk#1{\setbox0=\hbox{#1}{\ooalign{\hidewidth
  \lower1.5ex\hbox{`}\hidewidth\crcr\unhbox0}}}
\providecommand{\bysame}{\leavevmode\hbox to3em{\hrulefill}\thinspace}
\providecommand{\MR}{\relax\ifhmode\unskip\space\fi MR }
\providecommand{\MRhref}[2]{%
  \href{http://www.ams.org/mathscinet-getitem?mr=#1}{#2}
}
\providecommand{\href}[2]{#2}
\begin{thebibliography}{10}

\bibitem{Act1}
Alfred Actor, \emph{Classical solutions of {${\rm SU}(2)$}\ {Y}ang-{M}ills
  theories}, Rev. Modern Phys. \textbf{51} (1979), no.~3, 461--525. \MR{541884}

\bibitem{AS}
W.~Ambrose and I.~M. Singer, \emph{A theorem on holonomy}, Trans. Amer. Math.
  Soc. \textbf{75} (1953), 428--443. \MR{0063739}

\bibitem{AM}
M.~Asorey and P.~K. Mitter, \emph{Regularized, continuum {Y}ang-{M}ills process
  and {F}eynman-{K}ac functional integral}, Comm. Math. Phys. \textbf{80}
  (1981), no.~1, 43--58. \MR{623151}

\bibitem{At}
M.~F. Atiyah, \emph{Geometry of {Y}ang-{M}ills fields}, Scuola Normale
  Superiore Pisa, Pisa, 1979. \MR{554924}

\bibitem{BW}
V.~Bargmann and E.~P. Wigner, \emph{Group theoretical discussion of
  relativistic wave equations}, Proc. Nat. Acad. Sci. U. S. A. \textbf{34}
  (1948), 211--223. \MR{0024827 (9,553d)}

\bibitem{BC1}
Khalil~M. Bitar and Shau~Jin Chang, \emph{Vacuum tunneling of gauge theory in
  {M}inkowski space}, Phys. Rev. D (3) \textbf{17} (1978), no.~2, 486--497.
  \MR{0462356}

\bibitem{BD}
James~D. Bjorken and Sidney~D. Drell, \emph{Relativistic quantum fields},
  McGraw-Hill Book Co., New York-Toronto-London-Sydney, 1965. \MR{0187642 (32
  \#5092)}

\bibitem{BoW}
Max Born and Emil Wolf, \emph{Principles of optics: {E}lectromagnetic theory of
  propagation, interference and diffraction of light}, With contributions by A.
  B. Bhatia, P. C. Clemmow, D. Gabor, A. R. Stokes, A. M. Taylor, P. A. Wayman
  and W. L. Wilcock. Third revised edition, Pergamon Press, Oxford, 1965.
  \MR{0198807 (33 \#6961)}

\bibitem{CG2}
Nelia Charalambous and Leonard Gross, \emph{Neumann domination for the
  {Y}ang-{M}ills heat equation}, J. Math. Phys. \textbf{56} (2015), no.~7,
  073505, 21. \MR{3405967}

\bibitem{EM1}
Douglas~M. Eardley and Vincent Moncrief, \emph{The global existence of
  {Y}ang-{M}ills-{H}iggs fields in {$4$}-dimensional {M}inkowski space. {I}.
  {L}ocal existence and smoothness properties}, Comm. Math. Phys. \textbf{83}
  (1982), no.~2, 171--191. \MR{649158}

\bibitem{EM2}
\bysame, \emph{The global existence of {Y}ang-{M}ills-{H}iggs fields in
  {$4$}-dimensional {M}inkowski space. {II}. {C}ompletion of proof}, Comm.
  Math. Phys. \textbf{83} (1982), no.~2, 193--212. \MR{649159}

\bibitem{Garc1}
P.~L. Garc\'ia and A.~P\'erez-Rend\'on, \emph{Symplectic approach to the theory
  of quantized fields. {I}}, Comm. Math. Phys. \textbf{13} (1969), 24--44.
  \MR{0256681}

\bibitem{Garc3}
Pedro~L. Garc\'ia, \emph{Gauge algebras, curvature and symplectic structure},
  (1974), 63--71. \MR{0448418}

\bibitem{Garc2}
\bysame, \emph{The {P}oincar\'e-{C}artan invariant in the calculus of
  variations},  (1974), 219--246. \MR{0406246}

\bibitem{Garc4}
\bysame, \emph{Reducibility of the symplectic structure of classical fields
  with gauge-symmetry},  (1977), 365--376. Lecture Notes in Math., Vol. 570.
  \MR{0445542}

\bibitem{Garc7}
\bysame, \emph{Tangent structure of {Y}ang-{M}ills equations and {H}odge
  theory}, Differential geometrical methods in mathematical physics ({P}roc.
  {C}onf., {A}ix-en-{P}rovence/{S}alamanca, 1979), Lecture Notes in Math., vol.
  836, Springer, Berlin-New York, 1980, pp.~292--312. \MR{607703}

\bibitem{G76}
Leonard Gross, \emph{Equivalence of helicity and {E}uclidean self-duality for
  quantized gauge fields}.

\bibitem{G5}
\bysame, \emph{Norm invariance of mass-zero equations under the conformal
  group}, J. Mathematical Phys. \textbf{5} (1964), 687--695. \MR{0164632}

\bibitem{G70}
\bysame, \emph{The {Y}ang-{M}ills heat equation with finite action},  (2016),
  137 pages, http://arxiv.org/abs/1606.04151.

\bibitem{HP1}
Werner Heisenberg and Wolfgang Pauli, \emph{Zur {Q}uantendynamic der
  {W}ellenfelder}, Zeits. f. Phys. \textbf{56} (1929), no.~1, 1 -- 61.

\bibitem{HT2}
Min-Chun Hong and Gang Tian, \emph{Asymptotical behaviour of the {Y}ang-{M}ills
  flow and singular {Y}ang-{M}ills connections}, Math. Ann. \textbf{330}
  (2004), no.~3, 441--472. \MR{MR2099188 (2006h:53063)}

\bibitem{HT1}
\bysame, \emph{Global existence of the {$m$}-equivariant {Y}ang-{M}ills flow in
  four dimensional spaces}, Comm. Anal. Geom. \textbf{12} (2004), no.~1-2,
  183--211. \MR{MR2074876 (2005e:53103)}

\bibitem{HTY}
Min-Chun Hong, Gang Tian, and Hao Yin, \emph{The {Y}ang-{M}ills {$\alpha$}-flow
  in vector bundles over four manifolds and its applications}, Comment. Math.
  Helv. \textbf{90} (2015), no.~1, 75--120. \MR{3317334}

\bibitem{Ma5}
Takeshi Isobe and Antonella Marini, \emph{On topologically distinct solutions
  of the {D}irichlet problem for {Y}ang-{M}ills connections}, Calc. Var.
  Partial Differential Equations \textbf{5} (1997), no.~4, 345--358.
  \MR{MR1450715 (99f:58045)}

\bibitem{Ma12}
\bysame, \emph{Small coupling limit and multiple solutions to the {D}irichlet
  problem for {Y}ang-{M}ills connections in four dimensions. {I}}, J. Math.
  Phys. \textbf{53} (2012), no.~6, 063706, 39. \MR{2977701}

\bibitem{Ma13}
\bysame, \emph{Small coupling limit and multiple solutions to the {D}irichlet
  problem for {Y}ang-{M}ills connections in four dimensions. {II}}, J. Math.
  Phys. \textbf{53} (2012), no.~6, 063707, 39. \MR{2977702}

\bibitem{Jac}
John~David Jackson, \emph{Classical electrodynamics}, third ed., John Wiley \&
  Sons Inc., New York, 1999. \MR{MR0436782 (55 \#9721)}

\bibitem{Ja5}
Adam Jacob, \emph{The limit of the {Y}ang-{M}ills flow on semi-stable bundles},
  J. Reine Angew. Math. \textbf{709} (2015), 1--13. \MR{3430873}

\bibitem{JV77}
Hans~Plesner Jakobsen and Michele Vergne, \emph{Wave and {D}irac operators, and
  representations of the conformal group}, J. Functional Analysis \textbf{24}
  (1977), no.~1, 52--106. \MR{0439995}

\bibitem{KM}
S.~Klainerman and M.~Machedon, \emph{Finite energy solutions of the
  {Y}ang-{M}ills equations in {$\bold R\sp {3+1}$}}, Ann. of Math. (2)
  \textbf{142} (1995), no.~1, 39--119. \MR{MR1338675 (96i:58167)}

\bibitem{LP1}
L.~D. Landau and R.~Peierls, \emph{Quantum electrodynamics in configuration
  space}, Zeits. f. Phys. \textbf{62} (1930), 188--199 (Trans. in ter Haar
  (1965) pp. 19--30.

\bibitem{LL2014}
Elliot Leader and C{\'e}dric Lorc{\'e}, \emph{The angular momentum controversy:
  WhatÕs it all about and does it matter?}, Physics Reports \textbf{541}
  (2014), no.~3, 163--248.

\bibitem{Ma2}
Antonella Marini, \emph{Boundary value problems for {Y}ang-{M}ills
  connections}, C. R. Acad. Sci. Paris S\'er. I Math. \textbf{312} (1991),
  no.~7, 503--508. \MR{MR1099681 (92c:58020)}

\bibitem{Ma3}
\bysame, \emph{Dirichlet and {N}eumann boundary value problems for
  {Y}ang-{M}ills connections}, Comm. Pure Appl. Math. \textbf{45} (1992),
  no.~8, 1015--1050. \MR{MR1168118 (93k:58059)}

\bibitem{Ma4}
\bysame, \emph{Elliptic boundary value problems for connections: a non-linear
  {H}odge theory}, Mat. Contemp. \textbf{2} (1992), 195--205, Workshop on the
  Geometry and Topology of Gauge Fields (Campinas, 1991). \MR{MR1303162
  (95k:58162)}

\bibitem{Ma6}
\bysame, \emph{A topological method for finding non-absolute minima for the
  {Y}ang-{M}ills functional with {D}irichlet data}, Geometry, topology and
  physics (Campinas, 1996), de Gruyter, Berlin, 1997, pp.~191--199.
  \MR{MR1605228 (99k:58042)}

\bibitem{Ma7}
\bysame, \emph{The generalized {N}eumann problem for {Y}ang-{M}ills
  connections}, Comm. Partial Differential Equations \textbf{24} (1999),
  no.~3-4, 665--681. \MR{MR1683053 (2000c:58025)}

\bibitem{Ma11}
\bysame, \emph{Regularity theory for the generalized {N}eumann problem for
  {Y}ang-{M}ills connections---non-trivial examples in dimensions 3 and 4},
  Math. Ann. \textbf{317} (2000), no.~1, 173--193. \MR{MR1760673 (2001i:58020)}

\bibitem{MV81}
P.~K. Mitter and C.-M. Viallet, \emph{On the bundle of connections and the
  gauge orbit manifold in {Y}ang-{M}ills theory}, Comm. Math. Phys. \textbf{79}
  (1981), no.~4, 457--472. \MR{623962 (83f:81056)}

\bibitem{NR}
M.~S. Narasimhan and T.~R. Ramadas, \emph{Geometry of {${\rm SU}(2)$} gauge
  fields}, Comm. Math. Phys. \textbf{67} (1979), no.~2, 121--136. \MR{MR539547
  (84k:58050)}

\bibitem{Oh1}
Sung-Jin Oh, \emph{Gauge choice for the {Y}ang-{M}ills equations using the
  {Y}ang-{M}ills heat flow and local well-posedness in {$H^1$}}, J. Hyperbolic
  Differ. Equ. \textbf{11} (2014), no.~1, 1--108. \MR{3190112}

\bibitem{Oh2}
\bysame, \emph{Finite energy global well-posedness of the {Y}ang-{M}ills
  equations on {$\Bbb{R}^{1+3}$}: an approach using the {Y}ang-{M}ills heat
  flow}, Duke Math. J. \textbf{164} (2015), no.~9, 1669--1732. \MR{3357182}

\bibitem{OT5}
Sung-Jin Oh and Daniel Tataru, \emph{The {Y}ang-{M}ills heat flow and the
  caloric gauge},  (2017), http://arxiv.org/abs/1709.08599.

\bibitem{Seg117}
S.~M. Paneitz and I.~E. Segal, \emph{Quantization of wave equations and
  {H}ermitian structures in partial differential varieties}, Proc. Nat. Acad.
  Sci. U.S.A. \textbf{77} (1980), no.~12, part 1, 6943--6947. \MR{603063}

\bibitem{Pan2}
Stephen~M. Paneitz, \emph{Essential unitarization of symplectics and
  applications to field quantization}, J. Funct. Anal. \textbf{48} (1982),
  no.~3, 310--359. \MR{678176}

\bibitem{Sch}
Silvan~S. Schweber, \emph{An introduction to relativistic quantum field
  theory}, Foreword by Hans A. Bethe, Row, Peterson and Company, Evanston,
  Ill.-Elmsford, N.Y., 1961. \MR{0127796}

\bibitem{Seg41}
I.~E. Segal, \emph{Quantization of nonlinear systems}, J. Mathematical Phys.
  \textbf{1} (1960), 468--488. \MR{0135093}

\bibitem{Seg123}
\bysame, \emph{Stability theory and quantization}, Differential geometrical
  methods in mathematical physics ({P}roc. {C}onf.,
  {A}ix-en-{P}rovence/{S}alamanca, 1979), Lecture Notes in Math., vol. 836,
  Springer, Berlin-New York, 1980, pp.~375--382. \MR{607709}

\bibitem{Seg126}
\bysame, \emph{The phase space for the {Y}ang-{M}ills equations}, Differential
  geometric methods in mathematical physics ({P}roc. {I}nternat. {C}onf.,
  {T}ech. {U}niv. {C}lausthal, {C}lausthal-{Z}ellerfeld, 1978), Lecture Notes
  in Phys., vol. 139, Springer, Berlin-New York, 1981, pp.~101--109.
  \MR{612994}

\bibitem{Seg111}
Irving Segal, \emph{The {C}auchy problem for the {Y}ang-{M}ills equations}, J.
  Funct. Anal. \textbf{33} (1979), no.~2, 175--194. \MR{546505}

\bibitem{Seg128}
\bysame, \emph{Quantization of symplectic transformations}, Mathematical
  analysis and applications, {P}art {B}, Adv. in Math. Suppl. Stud., vol.~7,
  Academic Press, New York-London, 1981, pp.~749--758. \MR{634267}

\bibitem{SegB1}
Irving~E. Segal, \emph{Mathematical problems of relativistic physics}, With an
  appendix by George W. Mackey. Lectures in Applied Mathematics (proceedings of
  the Summer Seminar, Boulder, Colorado, vol. 1960, American Mathematical
  Society, Providence, R.I., 1963. \MR{0144227}

\bibitem{ST}
Sigmund Selberg and Achenef Tesfahun, \emph{Null structure and local
  well-posedness in the energy class for the {Y}ang-{M}ills equations in
  {L}orenz gauge}, J. Eur. Math. Soc. (JEMS) \textbf{18} (2016), no.~8,
  1729--1752. \MR{3519539}

\bibitem{Shif}
M.~Shifman, \emph{Advanced topics in quantum field theory}, Cambridge
  University Press, Cambridge, 2012, A lecture course. \MR{2896444}

\bibitem{Si1}
I.~M. Singer, \emph{Some remarks on the {G}ribov ambiguity}, Comm. Math. Phys.
  \textbf{60} (1978), no.~1, 7--12. \MR{MR500248 (80d:53025)}

\bibitem{Si2}
\bysame, \emph{The geometry of the orbit space for nonabelian gauge theories},
  Phys. Scripta \textbf{24} (1981), no.~5, 817--820. \MR{MR639408 (83a:81055)}

\bibitem{Tao}
Terence Tao, \emph{Local well-posedness of the {Y}ang-{M}ills equation in the
  temporal gauge below the energy norm}, J. Differential Equations \textbf{189}
  (2003), no.~2, 366--382. \MR{MR1964470 (2003m:58016)}

\bibitem{VV}
Stefan Vandoren and Peter van Nieuwenhuisen, \emph{Lectures on instantons},
  hep-th (2008), 118 pages, http://arxiv.org/abs/0802.1862v1.

\bibitem{Woodhouse}
N.~M.~J. Woodhouse, \emph{Geometric quantization}, second ed., Oxford
  Mathematical Monographs, The Clarendon Press, Oxford University Press, New
  York, 1992, Oxford Science Publications. \MR{1183739}

\end{thebibliography}

\end{document}